\newcommand{\spara}[1]{\smallskip\noindent{\bf #1}}
\newtheorem{theorem}{Theorem}
\newtheorem{lemma}{Lemma}
\DeclareMathOperator*{\argmin}{arg\,min}
\DeclareMathOperator*{\argmax}{arg\,max}
\newcommand{\bigO}{O\xspace}
\newcommand{\NP}{\ensuremath{\mathrm{NP}}\xspace}
\newcommand{\refalg}[1]{Algorithm~\ref{alg:#1}}
\newcommand{\refsec}[1]{Section~\ref{sec:#1}}
\newcommand{\hide}[1]{}
\newcommand{\squishlist}{
 \begin{list}{$\bullet$}
  {  \setlength{\itemsep}{0pt}
     \setlength{\parsep}{3pt}
     \setlength{\topsep}{3pt}
     \setlength{\partopsep}{0pt}
     \setlength{\leftmargin}{2em}
     \setlength{\labelwidth}{1.5em}
     \setlength{\labelsep}{0.5em}
} }
\newcommand{\squishlisttight}{
 \begin{list}{$\bullet$}
  { \setlength{\itemsep}{0pt}
    \setlength{\parsep}{0pt}
    \setlength{\topsep}{0pt}
    \setlength{\partopsep}{0pt}
    \setlength{\leftmargin}{2em}
    \setlength{\labelwidth}{1.5em}
    \setlength{\labelsep}{0.5em}
} }
\newcommand{\squishdesc}{
 \begin{list}{}
  {  \setlength{\itemsep}{0pt}
     \setlength{\parsep}{3pt}
     \setlength{\topsep}{3pt}
     \setlength{\partopsep}{0pt}
     \setlength{\leftmargin}{1em}
     \setlength{\labelwidth}{1.5em}
     \setlength{\labelsep}{0.5em}
} }
\newcommand{\squishend}{
  \end{list}
}
\newcommand{\poly}{\operatorname{poly}}
\newcommand{\rev}[1]{{\color{black}#1}}
\begin{document}

\title{A Survey on the Densest Subgraph Problem and Its Variants}

\author[1,3]{Tommaso Lanciano}
\author[2]{Atsushi Miyauchi}
\author[2]{Adriano Fazzone}
\author[2]{Francesco Bonchi\thanks{Contact: \href{mailto:bonchi@centai.eu}{bonchi@centai.eu}}}
\affil[1]{KTH Royal Institute of Technology, Sweden}
\affil[2]{CENTAI Institute, Italy}
\affil[3]{Sapienza University of Rome, Italy}
\date{}
\maketitle \sloppy

\begin{abstract}
The Densest Subgraph Problem requires to find, in a given graph, a subset of vertices whose induced subgraph maximizes a measure of density. The problem has received a great deal of attention in the algorithmic literature since the early 1970s, with many variants proposed and many applications built on top of this basic definition. Recent years have witnessed a revival of research interest \rev{in} this problem with several \rev{important} contributions, including some groundbreaking results, published in 2022 and 2023.
This survey provides a deep overview of the fundamental results and an exhaustive coverage of the many variants proposed in the literature, with a special attention \rev{to} the most recent results. The survey also presents a comprehensive overview of applications and discusses some interesting open problems for this evergreen research topic.
\end{abstract}

\pagebreak

\tableofcontents

\pagebreak

\section{Introduction}
\label{sec:intro}
Extracting a dense subgraph from a given input network is a key primitive~\cite{aggarwal,gionis2015dense} in graph mining which, while being investigated since the seventies, keeps attracting a lot of algorithmic research attention nowadays.
Depending on the semantics of the given network, a dense subgraph can represent different interesting patterns: for instance, it can represent a community of closely connected individuals in social networks, a regulatory motif in genomic DNA, or a set of actors coordinating a financial fraudulent activity in a transaction network (see Section \ref{sec:apps} for applications).

\rev{While the literature about extracting dense substructures from graphs is much wider, the focus of the present survey is exclusively on the classic \emph{Densest Subgraph Problem} (DSP) and its many variants. Here, we acknowledge the existence of the literature on related notions, which are not the focus of this survey.}
For instance, the substantial algorithmic effort has been devoted to enumerating all existing dense structures, such as \emph{maximal cliques}, \emph{quasi-cliques}, \rev{\emph{paracliques}, $s$-\emph{cliques}}, $k$-\emph{plex}, $k$-\emph{club}, etc.,~\cite{aggarwal,WU2015693,Chesler+06,chang2018cohesive,farago2019survey,fang2022cohesive,Shahinpour+13}. As we will see more in details later, also the notion of \emph{core decomposition} (see a survey in \cite{malliaros2020core}) is strongly related.
Finding ``clusters'' in graphs is also typically operationalized as finding groups of vertices which are densely connected inside and sparsely connected with the outside. In this regards, \emph{graph clustering}, \emph{graph partitioning}, \rev{\emph{cluster editing},} \emph{spectral clustering} \cite{schaeffer2007graph,malliaros2013clustering,nascimento2011spectral,bulucc2016recent,Bocker+13} are all related notions, as it is the popular topic of \emph{community detection} \cite{fortunato2010community}, or the notion of \emph{correlation clustering} \cite{bonchi_cc}.

Departing from the literature, in this survey we focus on the fundamental problem of extracting \emph{one and only one subgraph}, that maximizes a measure of density.
In the more general setting, we are given a simple graph $G=(V,E)$ \rev{with $n=|V|$ vertices and $m=|E|$ edges. Given} a subset of vertices $S\subseteq V$,  let $G[S]=(S,E[S])$ be the subgraph induced by $S$, and let $e[S]$ be the size of $E[S]$.
The most straightforward notion of density is the so-called \emph{edge density}, defined for a set of vertices $S$ as $\delta(S) = e[S] / {|S| \choose 2}$. However, it is easy to see that finding a set of vertices $S \subseteq V$ that maximizes $\delta(S)$ is trivial and not interesting, as any pair of vertices connected by an edge forms an optimal solution for this objective function.
For this reason, the formulation known as the \emph{Densest Subgraph Problem} (DSP) aims to find a subset of vertices $S \subseteq V$ that maximizes the \emph{degree density} of $S$, defined as $d(S)= e[S]/|S|$.
Note that this objective function is equivalent to half of the average degree of the vertices within the subgraph (since each edge contributes 2 to the sum of degrees, $e[S]$ is half of the sum of the degrees of the vertices in $S$).
This specific problem has received \rev{a} great deal of attention in the algorithmic literature over the last five decades, with many variants and many applications built on top of this basic definition. To have an idea of the span of research interest
just consider that, while the last couple of years have seen the publication of some groundbreaking results and several interesting contributions, e.g., \cite{harb2022faster,MaCLH22,Luo2023,fang2022densest,boob2020flowless,Chekuri2022supermod,bera2022dynamicsub,liu2022stochastic,Fazzone2022,ma2022convex,veldt2021meandensest,gudapati2021greedy,bonchi2021finding,ma2021directed,ma2021efficient,chekuri2023generalized},
the work by Picard and Queyranne \cite{Picard82}, introducing an exact algorithm for DSP, dates back to 1979 (technical report, although the final paper was officially published in 1982). This survey aims to (1) discuss the fundamental results for DSP in depth, and (2) provide an exhaustive coverage of the many variants proposed in the literature, with a special emphasis on the most recent results.

The survey is organized as follows.
In Section \ref{sec:fundamental} we review the fundamental algorithmic results for DSP, distinguishing between exact algorithms and approximate ones and highlighting connections between DSP and other related algorithmic problems, such as \emph{submodular function minimization} or the computation of the \emph{core decomposition}. Section \ref{sec:cons} covers constrained variants of DSP, including, e.g., constraints on the admissible size of the solution, constraints on set of vertices to be included in the solution, or constraints on the level of connectivity of the solution. Section \ref{sec:variants} instead covers variants that tackle \rev{different but related} objective functions. In Section \ref{sec:graphs} we summarize the massive literature on extracting dense subgraphs from  information-richer graphs such as directed, signed, weighted, probabilistic (or uncertain), multilayer, temporal graphs, etc.
Section \ref{sec:computational} covers \rev{variants of DSP in} different computational settings, such as streaming, distributed, parallel, and MapReduce.
In Section~\ref{sec:apps} we survey real-world applications. Finally, in Section \ref{sec:conclusions} we discuss open problems for future investigation.

\section{Fundamental Results}
\label{sec:fundamental}
In this section, we review the fundamental results for DSP,
starting from classical exact and approximation algorithms
and arriving to very recent breakthrough results.

\subsection{Exact algorithms for DSP}
There is a long history of polynomial-time exact algorithms for DSP.
In 1979, Picard and Queyranne~\cite{Picard82} presented the first exact algorithm for DSP in their technical report (officially published in 1982).
The algorithm is designed based on a series of maximum-flow computations.
In 1984, Goldberg~\cite{goldberg1984finding} gave an improved version of the above algorithm in terms of time complexity.
More than 15 years later, Charikar~\cite{Charikar2000} designed an LP-based exact algorithm for DSP.
As maximum-flow computation can be done by solving LP, the existence of an LP-based exact algorithm is trivial.
However, Charikar's \rev{LP-based algorithm} \rev{exploits a different idea}  \rev{and it is of independent interest, because it allows
to see the renowned greedy-peeling approximation algorithm (described later in Section \ref{subsec:approx})
as a primal-dual algorithm for DSP.}

\subsubsection{Goldberg's maximum-flow-based algorithm}
We next review the maximum-flow-based exact algorithm for DSP, designed by Goldberg~\cite{goldberg1984finding}.
Strictly speaking, the algorithm we describe is slightly different from Goldberg's one, but the difference is not essential and it is just for the sake of simplicity of presentation. The algorithm maintains upper and lower bounds on the (unknown) optimal value of the problem,
and tightens the bounds step-by-step using binary search, until the current lower bound is guaranteed to be the optimal value of DSP.
To update upper and lower bounds, the algorithm utilizes maximum-flow computation.
As initial upper and lower bounds, $m/2$ and $0$ can be employed, respectively.
Let $\beta \geq 0$ be the midpoint of the upper and lower bounds kept in the current iteration.
For $G=(V,E)$ and $\beta \geq 0$, the algorithm constructs the following edge-weighted directed graph $(U,A,w_\beta)$: $U=V\cup \{s,t\}$, $A=A_s\cup A_E\cup A_t$, where
$A_s=\{(s,v)\mid v\in V\}$, $A_E=\{(u,v), (v,u)\mid \{u,v\}\in E\}$, and $A_t=\{(v,t)\mid v\in V\}$, \rev{and} $w_\beta \colon A\rightarrow \mathbb{R}_+$ \rev{such that}
\begin{align*}
w_\beta(e) =
\begin{cases}
\frac{\deg(v)}{2} &\text{if } e=(s,v)\in A_s,\\
\frac{1}{2} &\text{if } e\in A_E,\\
\beta &\text{if } e\in A_t. 
\end{cases}
\end{align*}
For $v\in V$, $\deg(v)$ is the degree of $v$ in $G$. 
This graph is illustrated in Figure~\ref{fig:digraph}. 
Here we introduce some terminology and notation.
An $s$--$t$ cut of $(U,A,w_\beta)$ is a partition $(X,Y)$ of $U$ (i.e., $X\cup Y=U$ and $X\cap Y=\emptyset$) such that $s\in X$ and $t\in Y$.
The cost of an $s$--$t$ cut $(X,Y)$, denoted by $\text{cost}(X,Y)$, is defined as the sum of weights of edges going from $X$ to $Y$,
i.e., $\mathrm{cost}(X,Y)=\sum_{(u,v)\in A: u\in X,v\in Y}w_\beta(u,v)$.
An $s$--$t$ cut having the minimum cost is called a minimum $s$--$t$ cut.
The following lemma is useful for updating the upper and lower bounds using a minimum $s$--$t$ cut of $(U,A,w_\beta)$:
\begin{figure}[t]
\centering
\includegraphics[scale=1.15]{./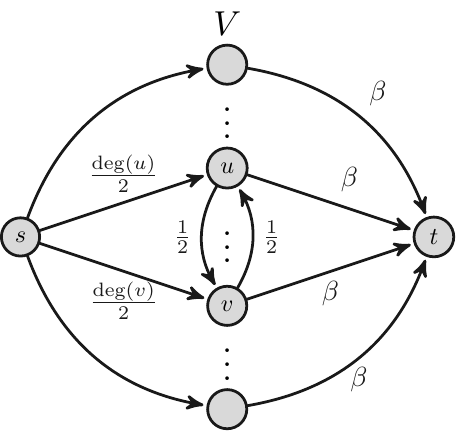}
\caption{An edge-weighted directed graph $(U,A,w_\beta)$ constructed from $G$ and $\beta$.}
\label{fig:digraph}
\end{figure}

\begin{lemma}\label{lemma1}
Let $(X,Y)$ be an $s$--$t$ cut of the edge-weighted directed graph $(U,A,w_\beta)$,
and $S=X\setminus \{s\}$.
Then it holds that
$\mathrm{cost}(X,Y)=m+\beta |S|-e[S].$
In particular, when $X=\{s\}$, $\mathrm{cost}(X,Y)=m$ holds.
\end{lemma}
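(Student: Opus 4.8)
The plan is to evaluate $\mathrm{cost}(X,Y)=\sum_{(u,v)\in A:\,u\in X,\,v\in Y}w_\beta(u,v)$ by splitting the arc set into its three parts $A_s$, $A_E$, $A_t$, computing the contribution of each, and then collapsing everything with the handshake identity $\sum_{v\in V}\deg(v)=2m$.

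First I would handle $A_s$: since $s\in X$, an arc $(s,v)$ crosses the cut exactly when $v\in Y$, i.e.\ $v\in V\setminus S$, so these arcs contribute $\sum_{v\in V\setminus S}\frac{\deg(v)}{2}$. For $A_t$: since $t\in Y$, an arc $(v,t)$ crosses exactly when $v\in X$, i.e.\ $v\in S$, contributing $\beta|S|$ in total. For $A_E$, note that $s$ has no incoming and $t$ no outgoing $A_E$-arc, so a crossing arc $(u,v)$ must satisfy $u\in S$ and $v\in V\setminus S$; hence, for each edge $\{u,v\}\in E$ with exactly one endpoint in $S$, precisely one of its two antiparallel arcs crosses the cut, while edges with both or neither endpoint in $S$ contribute nothing. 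Writing $c(S)$ for the number of edges of $G$ with exactly one endpoint in $S$, this term equals $\frac{1}{2}c(S)$. Adding the three contributions,
\[
\mathrm{cost}(X,Y)=\sum_{v\in V\setminus S}\frac{\deg(v)}{2}+\frac{1}{2}c(S)+\beta|S|.
\]

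It then remains to simplify the first two terms. Counting edge-endpoints lying in $S$ gives $\sum_{v\in S}\deg(v)=2e[S]+c(S)$, and subtracting this from $\sum_{v\in V}\deg(v)=2m$ yields $\sum_{v\in V\setminus S}\deg(v)=2m-2e[S]-c(S)$. Substituting into the displayed expression, the two $c(S)$ terms cancel and we are left with $\mathrm{cost}(X,Y)=m-e[S]+\beta|S|$, which is the claimed identity. The special case follows immediately: if $X=\{s\}$ then $S=\emptyset$, so $e[S]=0$ and $|S|=0$, giving $\mathrm{cost}(X,Y)=m$.

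I do not expect any genuine obstacle here; the proof is essentially bookkeeping. The only points that require a little care are ensuring that each cut edge of $G$ is counted exactly once among the antiparallel arcs of $A_E$, and tracking the coefficient of the boundary term $c(S)$ — indeed, the cancellation of $c(S)$ is precisely what forces the source-arc weights $\deg(v)/2$ and the edge weights $1/2$ to be chosen as they are.
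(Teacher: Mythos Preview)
Your proof is correct and follows essentially the same approach as the paper: split the cut arcs according to $A_s$, $A_E$, $A_t$, compute each contribution, and then simplify using a degree-counting identity. The only cosmetic difference is in the final simplification---the paper rewrites $\sum_{v\in V\setminus S}\frac{\deg(v)}{2}+\frac{1}{2}c(S)$ as $e[V\setminus S]+c(S)=m-e[S]$ directly, whereas you route the same computation through the global handshake identity $\sum_{v\in V}\deg(v)=2m$ and cancel the $c(S)$ terms explicitly---but the argument is the same.
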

\begin{proof}
Any edge from $X$ to $Y$ is contained in exactly one of the following sets:
$\{(s,v)\in A_s\mid v\in V\setminus S\},\
\{(u,v)\in A_E\mid u\in S,\, v\in V\setminus S\},\
\{(v,t)\in A_t\mid v\in S\}$.
Therefore, the cost of $(X,Y)$ can be evaluated as follows:
\begin{align*}
\text{cost}(X,Y)
&=\sum_{(s,v)\in A_s:\, v\in V\setminus S}w_\beta(s,v)+\sum_{(u,v)\in A_E:\, u\in S,v\in V\setminus S}w_\beta(u,v)
+\sum_{(v,t)\in A_t:\, v\in S}w_\beta(v,t)\\
&= \sum_{v\in V\setminus S} \frac{\text{deg}(v)}{2}+\frac{|\{{\{u,v\}\in E\mid u\in S,\, v\in V\setminus S}\}|}{2}+\beta|S|\\
&=e[V\setminus S]+|\{{\{u,v\}\in E\mid u\in S,\, v\in V\setminus S}\}|+\beta|S|\\
&=m+\beta|S|-e[S].
\end{align*}
\end{proof}
Let $(X,Y)$ be the minimum $s$--$t$ cut of $(U,A,w_\beta)$ computed by the algorithm.
If $X= \{s\}$ does not hold,
then $S=X\setminus \{s\}$ is a vertex subset that satisfies $\beta|S|-e[S]\leq 0$, i.e., $e[S]/|S|\geq \beta$;
hence, the lower bound on the optimal value can be replaced by $\beta$.
On the other hand, if $X=\{s\}$, then we see that there is no $S\subseteq V$ that satisfies $e[S]/|S|> \beta$;
hence, the upper bound can be replaced by $\beta$.
The following lemma is useful for specifying the termination condition of the algorithm.
\begin{lemma}
Let $G=(V,E)$ be an undirected graph.
For any $S_1,S_2\subseteq V$, if $d(S_1)\neq d(S_2)$, then
$|d(S_1)-d(S_2)|\geq \frac{1}{n(n-1)}$.
\end{lemma}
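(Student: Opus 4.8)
The plan is a short, direct argument exploiting that $d(S) = e[S]/|S|$ is a rational number whose denominator $|S|$ lies in $\{1, \ldots, n\}$ and whose numerator $e[S]$ is a nonnegative integer (I implicitly assume $S_1$ and $S_2$ are nonempty, so that $d$ is defined; the statement is also vacuous when $n \le 1$, so assume $n \ge 2$). Writing $k_i = |S_i|$ and $m_i = e[S_i]$, we have
\begin{align*}
|d(S_1) - d(S_2)| \;=\; \left| \frac{m_1}{k_1} - \frac{m_2}{k_2} \right| \;=\; \frac{|m_1 k_2 - m_2 k_1|}{k_1 k_2}.
\end{align*}
Since $d(S_1) \ne d(S_2)$, the numerator $m_1 k_2 - m_2 k_1$ is a nonzero integer and hence has absolute value at least $1$. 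So it suffices to prove the denominator bound $k_1 k_2 \le n(n-1)$.

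First I would dispose of the case $k_1 \ne k_2$: then $k_1$ and $k_2$ are distinct elements of $\{1, \ldots, n\}$, so one of them is at most $n-1$ and the other at most $n$, giving $k_1 k_2 \le n(n-1)$. In the remaining case $k_1 = k_2 =: k$, note that $k = n$ would force $S_1 = S_2 = V$ and hence $d(S_1) = d(S_2)$, contradicting the hypothesis; therefore $k \le n-1$, and $k_1 k_2 = k^2 \le (n-1)^2 \le n(n-1)$. (In fact, when the sizes coincide one even gets $|d(S_1) - d(S_2)| = |m_1 - m_2|/k \ge 1/k \ge 1/n$.) Either way $k_1 k_2 \le n(n-1)$, which combined with the displayed identity yields $|d(S_1) - d(S_2)| \ge 1/(n(n-1))$.

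There is essentially no obstacle here: the whole content is the observation that equal set sizes \emph{cancel} rather than \emph{multiply} in the subtraction, so the largest the product of denominators $k_1 k_2$ can be --- given that the two density values are distinct --- is $n(n-1)$, attained only when the two subgraphs have $n$ and $n-1$ vertices. The only points requiring a little care are the degenerate regime $n \le 1$ and the standing assumption that $S_1$ and $S_2$ are nonempty.
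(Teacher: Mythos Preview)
Your proof is correct and follows essentially the same route as the paper: write $|d(S_1)-d(S_2)|=\dfrac{|e[S_1]|S_2|-e[S_2]|S_1||}{|S_1||S_2|}$, note the numerator is a nonzero integer, and split into the cases $|S_1|=|S_2|$ (where the paper, like your parenthetical remark, obtains the sharper bound $1/n$) and $|S_1|\neq |S_2|$ (where the denominator is at most $n(n-1)$). The only cosmetic difference is that in the equal-size case you first bound $k^2\le (n-1)^2\le n(n-1)$ via the observation that $k=n$ would force $S_1=S_2=V$, whereas the paper simply cancels the common factor to reach $1/n$ directly.
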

\begin{proof}
Defining $\Delta=\left| d(S_1)-d(S_2)\right|$, we have
$\Delta=\left| \frac{e[S_1]|S_2|-e[S_2]|S_1|}{|S_1||S_2|}\right|$.
If $|S_1|=|S_2|$ holds, then
$\Delta=\left| \frac{e[S_1]-e[S_2]}{|S_1|}\right|$, implying that $\Delta \geq \frac{1}{n}$.
On the other hand, if $|S_1|\neq |S_2|$, then $|S_1||S_2|\leq n(n-1)$; therefore, $\Delta \geq \frac{1}{n(n-1)}$.
\end{proof}
From this lemma, we see that once the difference between the upper and lower bounds becomes less than $\frac{1}{n(n-1)}$,
the current lower bound attains the optimal value of DSP.
The entire procedure is summarized in Algorithm~\ref{alg:flow}.
The following theorem guarantees the solution quality and time complexity.
\begin{algorithm}[t]
\caption{Maximum-flow-based algorithm}\label{alg:flow}
\SetKwInOut{Input}{Input}
\SetKwInOut{Output}{Output}
\Input{\ $G=(V,E)$}
\Output{\ $S\subseteq V$}
$\beta^{(0)}_\text{ub}\leftarrow m/2$, $\beta^{(0)}_\text{lb}\leftarrow 0$, $i\leftarrow 0$\;
\While{$\beta_\mathrm{ub}^{(i)} - \beta_\mathrm{lb}^{(i)}\geq \frac{1}{n(n-1)}$}{
  $\beta^{(i)}\leftarrow \frac{\beta_\text{lb}^{(i)}+\beta_\text{ub}^{(i)}}{2}$\;
  Compute the minimum $s$--$t$ cut $(X^{(i)},Y^{(i)})$ of $(U,A,w_{\beta^{(i)}})$ using maximum-flow computation\;
  \If{$X^{(i)}\neq \{s\}$}{
    $\beta_\text{lb}^{(i+1)}\leftarrow \beta^{(i)}$, $\beta_\text{ub}^{(i+1)}\leftarrow \beta_\text{ub}^{(i)}$\;
  }
  \Else{
    $\beta_\text{lb}^{(i+1)}\leftarrow \beta_\text{lb}^{(i)}$, $\beta_\text{ub}^{(i+1)}\leftarrow \beta^{(i)}$\;
  }
  $i\leftarrow i+1$\;
}
Compute the minimum $s$--$t$ cut $(X,Y)$ of $(U,A,w_{\beta_\text{lb}^{(i)}})$ using maximum-flow computation\;
\Return{$X\setminus \{s\}$}
\end{algorithm}
\begin{theorem}
Algorithm~\ref{alg:flow} returns an optimal solution to DSP in $O(T_\mathrm{Flow}\log n)$ time, where $T_\mathrm{Flow}$ is the time complexity required to compute a minimum $s$--$t$ cut $(X^{(i)},Y^{(i)})$ using maximum-flow computation ($i=0,1,\dots$).
\end{theorem}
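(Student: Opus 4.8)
The plan is to separate the two assertions: first the \emph{correctness} of the returned set, through an invariant on the two bounds maintained by the loop, and then the \emph{running time}, through a geometric-series count of the iterations. Throughout, write $\mathrm{OPT}=\max_{\emptyset\neq S\subseteq V}d(S)$ for the optimal value of DSP.

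I would first establish the loop invariant
\[
\beta_{\mathrm{lb}}^{(i)}\le \mathrm{OPT}\le \beta_{\mathrm{ub}}^{(i)}\qquad\text{for every }i\ge 0 .
\]
The base case is immediate: $\beta_{\mathrm{lb}}^{(0)}=0\le\mathrm{OPT}$, and $\beta_{\mathrm{ub}}^{(0)}=m/2\ge\mathrm{OPT}$ since any $S$ with $|S|\ge 2$ has $d(S)=e[S]/|S|\le m/2$ while singletons have density $0$. For the inductive step I use Lemma~\ref{lemma1}. If $X^{(i)}\neq\{s\}$, put $S=X^{(i)}\setminus\{s\}\neq\emptyset$; minimality of the computed cut together with Lemma~\ref{lemma1} gives $m+\beta^{(i)}|S|-e[S]=\mathrm{cost}(X^{(i)},Y^{(i)})\le m$, hence $d(S)\ge\beta^{(i)}$ and $\mathrm{OPT}\ge\beta^{(i)}=\beta_{\mathrm{lb}}^{(i+1)}$, while the upper bound is unchanged. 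If $X^{(i)}=\{s\}$, then for \emph{every} nonempty $S$ the cut $(\{s\}\cup S,\,(V\setminus S)\cup\{t\})$ has cost $m+\beta^{(i)}|S|-e[S]\ge m$ by minimality, i.e.\ $d(S)\le\beta^{(i)}$, so $\mathrm{OPT}\le\beta^{(i)}=\beta_{\mathrm{ub}}^{(i+1)}$, while the lower bound is unchanged. This proves the invariant.

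When the loop terminates, say at index $i$, the invariant and the stopping condition give $0\le\mathrm{OPT}-\beta\le\beta_{\mathrm{ub}}^{(i)}-\beta_{\mathrm{lb}}^{(i)}<\tfrac{1}{n(n-1)}$, where $\beta:=\beta_{\mathrm{lb}}^{(i)}$. Fix an optimal set $S^\star$; by Lemma~\ref{lemma1} its cut has cost $m-|S^\star|(\mathrm{OPT}-\beta)\le m$, so the minimum cut $(X,Y)$ computed in the last line has cost at most $m$. If this cost is strictly below $m$ then $X\neq\{s\}$, and $S:=X\setminus\{s\}$ satisfies $d(S)\ge\beta$ by Lemma~\ref{lemma1} and $d(S)\le\mathrm{OPT}$ by optimality; hence $|d(S)-\mathrm{OPT}|<\tfrac{1}{n(n-1)}$, and the second lemma above (the granularity of degree densities) forces $d(S)=\mathrm{OPT}$, i.e.\ the returned set is optimal. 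The remaining, degenerate, case $\mathrm{cost}(X,Y)=m$ forces $\mathrm{OPT}=\beta$; then $\{s\}\cup S^\star$ is itself a minimum cut, and returning the minimum cut with inclusion-maximal source side (the usual convention for a cut read off from a maximum flow) — or, equivalently, having the algorithm remember the last set $X^{(i)}\setminus\{s\}$ it saw with $X^{(i)}\neq\{s\}$ — again yields an optimal set. This tie-handling is the one point of the argument that needs genuine care, and I expect it to be the main obstacle; the rest is routine bookkeeping.

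Finally, for the running time, observe that $\beta_{\mathrm{ub}}-\beta_{\mathrm{lb}}$ is \emph{exactly halved} at each iteration and starts at $m/2$, so after $k$ iterations it equals $m/2^{k+1}$; the loop therefore stops as soon as $m/2^{k+1}<\tfrac{1}{n(n-1)}$, i.e.\ after $k=O(\log(m\,n^2))=O(\log n)$ iterations, using $m\le\binom{n}{2}$. Each iteration builds $(U,A,w_{\beta^{(i)}})$ — a graph on $n+2$ vertices with $O(m)$ arcs — in $O(n+m)$ time and runs one minimum-$s$--$t$-cut computation costing $T_{\mathrm{Flow}}$; the final line performs one additional such computation. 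Since $T_{\mathrm{Flow}}=\Omega(n+m)$, the $O(\log n)$ cut computations dominate, giving a total running time of $O(T_{\mathrm{Flow}}\log n)$, as claimed.
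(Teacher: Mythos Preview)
Your proof is correct and takes essentially the same route as the paper's: the loop invariant $\beta_{\mathrm{lb}}^{(i)}\le\mathrm{OPT}\le\beta_{\mathrm{ub}}^{(i)}$ is exactly what the paper establishes in the discussion preceding the theorem, and both arguments use the halving bound $m/2^{k+1}<\tfrac{1}{n(n-1)}$ to obtain $O(\log(mn^2))=O(\log n)$ iterations. Your handling of the tie case $\mathrm{cost}(X,Y)=m$ is more careful than the paper's (which simply asserts that the final lower bound ``attains the optimal value''); just note that the standard minimum cut read off a maximum flow---vertices reachable from $s$ in the residual graph---gives the inclusion-\emph{minimal} source side, not the maximal one, so of your two fixes it is the second (remembering the last nontrivial $X^{(j)}\setminus\{s\}$) that works without further adjustment.
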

\begin{proof}
From the above discussion, we know that the algorithm returns an optimal solution to DSP.
In what follows, we show that the time complexity of the algorithm is given by $O(T_\text{Flow}\log n)$.
The number of iterations $\hat{i}$ of the while-loop is the minimum (nonnegative) integer among $i$'s that satisfy
\begin{align*}
\frac{1}{2^i}(\beta^{(0)}_\text{ub}-\beta^{(0)}_\text{lb})< \frac{1}{n(n-1)}.
\end{align*}
Therefore, we see that
\begin{align*}
\hat{i}=O(\log(mn(n-1)))=O(\log n).
\end{align*}
Obviously, the time complexity of each iteration of the while-loop is dominated by $T_\text{Flow}$.
Thus, we have the theorem.
\end{proof}

For example, employing the maximum-flow computation algorithm by \rev{Cheriyan et} al.~\cite{Cheriyan96},
we can compute a minimum $s$--$t$ cut of $(U,A,w_{\beta^{(i)}})$ in $O(|U|^3/\log|U|)=O(n^3/\log n)$ time.
Therefore, in that case, the time complexity of Algorithm~\ref{alg:flow} becomes $O(n^3)$.

In the context of maximum-flow-based exact algorithms for DSP, we remark that \rev{Gallo et} al.~\cite{gallo1989fast} showed how a single parametric maximum flow computation provides an optimal solution to DSP.
To the best of our knowledge, the fastest algorithm for parametric maximum flow computation is the one given by Hochbaum~\cite{hochbaum2008pseudoflow} and solves DSP in $\bigO\left(mn \log n\right)$ time, which is better than $O(n^3)$ above, when the input graph is sparse, i.e., $m=O(n)$.

\subsubsection{Charikar's LP-based algorithm}
The algorithm first solves an LP and then constructs a vertex subset, using the information of the optimal solution to the LP,
which is guaranteed to be an optimal solution to DSP.
Let us introduce a variable $x_e$ for each $e\in E$ and a variable $y_v$ for each $v\in V$.
The LP used in the algorithm is as follows:
\begin{alignat*}{4}
&\text{maximize} &\quad &\sum_{e\in E}x_e \\
&\text{subject to} &    &x_e\leq y_u,\ x_e\leq y_v  &\quad &(\forall e=\{u,v\}\in E),\\
&                  &    &\sum_{v\in V}y_v=1, \\
&                  &    &x_e\geq 0, \ y_v\geq 0     &      &(\forall e\in E,\, \forall v\in V).
\end{alignat*}
Roughly speaking, the first constraints stipulate that if we take edge $e=\{u,v\}$, we have to take both of the endpoints $u$ and $v$.
The second constraint just standardizes the objective function of DSP.
The following lemma implies that the LP is a continuous relaxation of DSP.
\begin{lemma}\label{lem:OPT_LP}
Let $\mathrm{OPT}_\mathrm{LP}$ be the optimal value of the LP.
For any $S\subseteq V$, it holds that
$\mathrm{OPT}_\mathrm{LP}\geq d(S).$
In particular, letting $S^*\subseteq V$ be an optimal solution to DSP, we have $\mathrm{OPT}_\mathrm{LP}\geq d(S^*)$.
\end{lemma}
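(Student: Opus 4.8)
The plan is to prove the inequality constructively: given an arbitrary $S\subseteq V$, I will exhibit a feasible solution $(x,y)$ to the LP whose objective value equals $d(S)=e[S]/|S|$, which immediately forces $\mathrm{OPT}_\mathrm{LP}\geq d(S)$.

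The construction I have in mind is the natural ``uniform'' encoding of $S$. Set $y_v=\frac{1}{|S|}$ for every $v\in S$ and $y_v=0$ for every $v\in V\setminus S$. For edges, set $x_e=\frac{1}{|S|}$ whenever both endpoints of $e$ lie in $S$ (i.e.\ $e\in E[S]$), and $x_e=0$ otherwise. First I would check feasibility. Nonnegativity is clear, and $\sum_{v\in V}y_v=|S|\cdot\frac{1}{|S|}=1$, so the normalization constraint holds. For the capacity-type constraints $x_e\leq y_u$ and $x_e\leq y_v$: if $e=\{u,v\}\in E[S]$ then $x_e=\frac{1}{|S|}=y_u=y_v$, so both hold with equality; if instead $e$ has an endpoint, say $v$, outside $S$, then $x_e=0\leq 0=y_v$ and also $x_e=0\leq y_u$, so the constraints hold. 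Thus $(x,y)$ is feasible.

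Next I would evaluate the objective: $\sum_{e\in E}x_e=\sum_{e\in E[S]}\frac{1}{|S|}=\frac{e[S]}{|S|}=d(S)$. Since the LP is a maximization problem and $(x,y)$ is feasible, $\mathrm{OPT}_\mathrm{LP}\geq d(S)$. The ``in particular'' clause is then just the special case $S=S^*$, giving $\mathrm{OPT}_\mathrm{LP}\geq d(S^*)$.

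There is essentially no serious obstacle here; the only point that needs a moment of care is the verification of the constraints $x_e\leq y_u,\ x_e\leq y_v$ for edges straddling the boundary of $S$ (where one $y$-value is $0$), and the observation that edges with at most one endpoint in $S$ contribute nothing to the objective, so the sum collapses to $|E[S]|=e[S]$ terms. One could remark that this also makes transparent the sense in which the LP relaxes DSP: the integral $\{0,\tfrac{1}{|S|}\}$-valued solutions are exactly the (scaled) indicator vectors of vertex subsets, and their objective values are precisely the degree densities.
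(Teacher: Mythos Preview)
Your proof is correct and follows essentially the same approach as the paper: both construct the uniform feasible solution $x_e=y_v=\tfrac{1}{|S|}$ on $E[S]$ and $S$ (zero elsewhere), verify feasibility, and observe that its objective value equals $d(S)$. If anything, you spell out the feasibility check for boundary edges more carefully than the paper does, which simply asserts that ``it is easy to see that this solution is feasible.''
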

\begin{proof}
Take an arbitrary $S\subseteq V$. Construct a solution $(\bm{x},\bm{y})$ of the LP as follows:
\begin{align*}
x_e=
\begin{cases}
\frac{1}{|S|}  &\text{if } e\in E[S],\\
0              &\text{otherwise},
\end{cases}
\quad
y_v=
\begin{cases}
\frac{1}{|S|}  &\text{if } v\in S,\\
0              &\text{otherwise}. 
\end{cases}
\end{align*}
Then it is easy to see that this solution is feasible for the LP.
Moreover, the objective value of $(\bm{x},\bm{y})$ can be evaluated as
\begin{align*}
\sum_{e\in E}x_e=\sum_{e\in E[S]}\frac{1}{|S|}=\frac{e[S]}{|S|}=d(S).
\end{align*}
Therefore, we have $\text{OPT}_\text{LP}\geq d(S)$.
As we took $S\subseteq V$ arbitrarily, this inequality holds even for $S=S^*$.
\end{proof}

The algorithm first solves the LP to obtain an optimal solution $(\bm{x}^*,\bm{y}^*)$.
For $(\bm{x}^*,\bm{y}^*)$ and $r\geq 0$, let $S(r)=\{v\in V\mid y^*_v\geq r\}$.
Also, let $y^*_\text{max} =\max_{v\in V}y^*_v$.
Then the algorithm computes $r^*\in \argmax\{d(S(r))\mid r\in [0,\,y^*_\text{max}]\}$ and just outputs $S(r^*)$.
To find such $r^*$, it suffices to check the value of $d(S(r))$ at $r=y^*_v$ for every $v\in V$,
because $S(r)$ may change only at those points.
The entire procedure is summarized in Algorithm~\ref{alg:LP}.
Clearly, the algorithm runs in polynomial time.
\begin{algorithm}[t]
\caption{LP-based algorithm}\label{alg:LP}
\SetKwInOut{Input}{Input}
\SetKwInOut{Output}{Output}
\Input{\ $G=(V,E)$}
\Output{\ $S\subseteq V$}
Compute an optimal solution $(\bm{x}^*,\bm{y}^*)$ to the LP\;
$r^*\in \argmax\{d(S(r))\mid r\in [0,\,y^*_\text{max}]\}$\;
\Return{$S(r^*)$}
\end{algorithm}
\begin{theorem}\label{thm:LP}
Algorithm~\ref{alg:LP} is a polynomial-time exact algorithm for DSP.
\end{theorem}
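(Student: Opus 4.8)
The plan is to show that the set $S(r^*)$ output by Algorithm~\ref{alg:LP} satisfies $d(S(r^*)) = \mathrm{OPT}_\mathrm{LP}$. Granting this, Lemma~\ref{lem:OPT_LP} gives $d(S(r^*)) = \mathrm{OPT}_\mathrm{LP} \geq d(S^*)$, while $S^*$ being optimal for DSP gives the reverse inequality $d(S(r^*)) \leq d(S^*)$ (note $S(r^*) \neq \emptyset$, since it contains any vertex attaining $y^*_\text{max}$ and $r^* \leq y^*_\text{max}$); hence $S(r^*)$ is an optimal solution to DSP. Polynomiality is immediate: the LP has $O(n+m)$ variables and constraints and is solved in polynomial time by any standard LP algorithm, and the argmax step only inspects the at most $n$ candidate thresholds $\{y^*_v : v\in V\}$, each requiring a linear-time computation of $d(S(r))$.

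The heart of the matter is therefore the inequality $\max_{r\in[0,\,y^*_\text{max}]} d(S(r)) \geq \mathrm{OPT}_\mathrm{LP}$. First I would note that at any optimal LP solution $(\bm{x}^*,\bm{y}^*)$ one must have $x^*_e = \min\{y^*_u,y^*_v\}$ for every $e=\{u,v\}\in E$, since otherwise $x^*_e$ could be raised while staying feasible, strictly increasing the objective. The key observation is then the ``layer-cake'' identity: for $r\geq 0$ we have $e=\{u,v\}\in E[S(r)]$ precisely when $\min\{y^*_u,y^*_v\}\geq r$, so $\min\{y^*_u,y^*_v\} = \int_0^\infty \mathbbm{1}[e\in E[S(r)]]\,dr$, and likewise $y^*_v = \int_0^\infty \mathbbm{1}[v\in S(r)]\,dr$. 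Summing these over $E$ and over $V$, respectively, and using $\sum_{v\in V} y^*_v = 1$, yields
\begin{align*}
\mathrm{OPT}_\mathrm{LP} = \sum_{e\in E}x^*_e = \int_0^{y^*_\text{max}} e[S(r)]\,dr, \qquad \int_0^{y^*_\text{max}} |S(r)|\,dr = 1.
\end{align*}

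It then remains to apply the standard averaging argument: on the set of $r$ with $S(r)\neq\emptyset$ we have $e[S(r)] = d(S(r))\,|S(r)|$, while $e[S(r)] = 0$ when $S(r)=\emptyset$, so
\begin{align*}
\mathrm{OPT}_\mathrm{LP} = \int_0^{y^*_\text{max}} d(S(r))\,|S(r)|\,dr \leq \Paren{\max_{r\in[0,\,y^*_\text{max}]} d(S(r))}\int_0^{y^*_\text{max}} |S(r)|\,dr = \max_{r\in[0,\,y^*_\text{max}]} d(S(r)),
\end{align*}
which is exactly what we wanted. I expect the only delicate point to be the bookkeeping in the layer-cake step: one should check that $S(y^*_\text{max})\neq\emptyset$ (so the maximum is over a nonempty family of nonempty sets and $S(r^*)$ is a legitimate output), that the thresholds $r$ where $S(r)$ is empty contribute $0$ to both integrals so the ratio $e[S(r)]/|S(r)|$ is never a $0/0$, and that searching over $r\in\{y^*_v : v\in V\}$ indeed suffices because $S(r)$ is a step function of $r$ changing only at those values. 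All remaining steps are routine.
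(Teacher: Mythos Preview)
Your proposal is correct and follows essentially the same approach as the paper: both establish $x^*_e=\min\{y^*_u,y^*_v\}$ at optimality, derive the layer-cake identities $\int_0^{y^*_{\max}} e[S(r)]\,dr=\mathrm{OPT}_\mathrm{LP}$ and $\int_0^{y^*_{\max}} |S(r)|\,dr=1$, and conclude that some level set $S(r)$ has density at least $\mathrm{OPT}_\mathrm{LP}$. The only cosmetic difference is that you phrase the final step as a direct averaging argument, whereas the paper wraps the same inequality in a proof by contradiction.
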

\begin{proof}
Let $S^*\subseteq V$ be an optimal solution to DSP.
By Lemma~\ref{lem:OPT_LP}, we know that $\text{OPT}_\text{LP}\geq d(S^*)$.
Noting the choice of parameter $r^*$ in Algorithm~\ref{alg:LP},
it suffices to show that there exits $r\in [0,\,y^*_\text{max}]$ such that
\begin{align*}
d(S(r))=\frac{e[S(r)]}{|S(r)|}\geq \text{OPT}_\text{LP}.
\end{align*}

Suppose for contradiction that for any $r\in [0,\,y^*_\text{max}]$, it holds that
\begin{align*}
d(S(r))=\frac{e[S(r)]}{|S(r)|}< \text{OPT}_\text{LP}.
\end{align*}
Then we have
\begin{align}\label{ineq:dense_suppose_bar}
\int_0^{y^*_\text{max}}e[S(r)]\,\text{d}r<\text{OPT}_\text{LP}\int_0^{y^*_\text{max}}|S(r)|\,\text{d}r.
\end{align}
Define indicator functions $X_e\colon [0,\,y^*_\text{max}]\rightarrow \{0,1\}$ ($e=\{u,v\}\in E$)
and $Y_v\colon [0,\,y^*_\text{max}]\rightarrow \{0,1\}$ ($v\in V$) as follows:
\begin{align*}
X_e(r)=
\begin{cases}
1  &\text{if } r\leq y^*_u \text{ and } r\leq y^*_v,\\
0  &\text{otherwise},
\end{cases}\quad
Y_v(r)=
\begin{cases}
1  &\text{if } r\leq y^*_v,\\
0  &\text{otherwise}.
\end{cases}
\end{align*}
From the optimality of $(\bm{x}^*,\bm{y}^*)$, we have $x^*_e=\min\{y^*_u,\, y^*_v\}$ for every $e=\{u,v\}\in E$,
and therefore
\begin{align}\label{eq:dense_indicator_x}
\int_0^{y^*_\text{max}}e[S(r)]\,\text{d}r&=\int_0^{y^*_\text{max}}\left(\sum_{e\in E}X_e(r)\right)\text{d}r\nonumber \\
&=\sum_{e\in E}\int_0^{y^*_\text{max}} X_e(r)\,\text{d}r
=\sum_{e\in E}\min\{y^*_u,\,y^*_v\}
=\sum_{e\in E}x^*_e
=\text{OPT}_\text{LP}.
\end{align}
Similarly we have
\begin{align}\label{eq:dense_indicator_y}
\int_0^{y^*_\text{max}}|S(r)|\,\text{d}r&=\int_0^{y^*_\text{max}}\left(\sum_{v\in V}Y_v(r)\right)\text{d}r
=\sum_{v\in V}\int_0^{y^*_\text{max}} Y_v(r)\,\text{d}r
=\sum_{v\in V}y^*_v=1.
\end{align}
By Inequality~\eqref{ineq:dense_suppose_bar} and Equalities~\eqref{eq:dense_indicator_x} and \eqref{eq:dense_indicator_y}, we obtain
\begin{align*}
\text{OPT}_\text{LP}=
\int_0^{y^*_\text{max}}e[S(r)]\,\text{d}r
<\text{OPT}_\text{LP}\int_0^{y^*_\text{max}}|S(r)|\,\text{d}r
=\text{OPT}_\text{LP},
\end{align*}
a contradiction. This concludes the proof.
\end{proof}

It is worth noting that Balalau et al.~\cite{balalau2015topkoverlapping} proved that $\{v\in V\mid y^*_v >0\}$ is a densest subgraph, based on a more sophisticated analysis. This simplifies the above rounding procedure and reduces its time complexity from $O(m+n\log n)$ to $O(n)$, while the time complexity of Algorithm~\ref{alg:LP} is dominated by that for solving the LP.

\subsubsection{An exact algorithm based on submodular function minimization}
The objective function of DSP  has a strong connection with submodularity property of set functions.
Based on this observation, an exact algorithm for DSP can be designed using submodular function minimization. Let $V$ be a finite set.
A function $f\colon 2^V\rightarrow \mathbb{R}$ is said to be submodular
if $f(X)+f(Y)\geq f(X\cup Y)+f(X\cap Y)$ holds for any $X,Y\subseteq V$.
There is a well-known equivalent definition of the submodularity:
a function $f$ is submodular
if and only if $f$ satisfies $f(X\cup \{v\})-f(X)\geq f(Y\cup \{v\})-f(Y)$ for any $X\subseteq Y$ and $v\in V\setminus Y$,
which is called the diminishing marginal return property.
A function $f\colon 2^V\rightarrow \mathbb{R}$ is said to be supermodular if $-f$ is submodular.
A function $f\colon 2^V\rightarrow \mathbb{R}$ is said to be modular if $f$ is submodular and supermodular.
In the submodular function minimization problem (without any constraint),
given a finite set $V$ and a submodular function $f\colon 2^V\rightarrow \mathbb{R}$,
we are asked to find $S\subseteq V$ that minimizes $f(S)$.
Note that the function $f$ is given as a value oracle, which returns $f(S)$ for a given $S\subseteq V$.
The submodular function minimization problem can be solved exactly in polynomial time~\cite{fujishige2005submodular}.
Strictly speaking, there exist algorithms that solve the problem, using the polynomial number of calls of the value oracle in terms of the size of $V$. 
Indeed, Gr\"otschel et al.~\cite{groetschel1981ellipsoid} gave the first polynomial-time algorithm based on the ellipsoid method.
Later, Iwata et al.~\cite{iwata2001combinatorial} \rev{and Schrijver~\cite{Schrijver00}} designed a combinatorial, strongly-polynomial-time algorithm.
Today some faster algorithms are known, e.g., \cite{orlin2009faster,lee2015faster}.

Let us consider the following problem:
$\text{minimize}\ \ \beta |S| - e[S] \, \text{subject to}\ \ S\subseteq V,$
where $\beta \geq 0$ is a constant.
As the function $\beta |S|$ is modular and the function $-e[S]$ is submodular~\cite{fujishige2005submodular}, the function $\beta |S| - e[S]$ is submodular.
Therefore, this problem is a special case of the submodular function minimization problem.
Let us recall the $i$th iteration of the while-loop of Algorithm~\ref{alg:flow},
where either the upper bound $\beta_\text{ub}^{(i)}$ or the lower bound $\beta_\text{lb}^{(i)}$ is updated
using a minimum $s$--$t$ cut of the edge-weighted directed graph $(U,A,w_\beta^{(i)})$,
where $\beta^{(i)}=(\beta_\text{ub}^{(i)}+\beta_\text{ub}^{(i)})/2$.
We can see that the update can be done through solving the above problem.
Let $S_\text{out}$ be an exact solution to the problem with $\beta=\beta^{(i)}$.
If $S_\text{out}\neq \emptyset$ holds,
then $S_\text{out}$ is a vertex subset that satisfies $\beta^{(i)}|S_\text{out}|-e[S_\text{out}]\leq 0$, i.e., $e[S_\text{out}]/|S_\text{out}|\geq \beta^{(i)}$;
hence, the lower bound $\beta_\text{lb}^{(i)}$ can be replaced by $\beta^{(i)}$.
On the other hand, if $S_\text{out}=\emptyset$, then for any $S\subseteq V$,
$e[S]/|S|\leq \beta^{(i)}$ holds; hence, $\beta_\text{ub}^{(i)}$ can be replaced by $\beta^{(i)}$.

\subsection{Approximation algorithms for DSP}\label{subsec:approx}
Despite the polynomial-time solvability of DSP,
there exists a wide literature studying faster approximation algorithms for DSP.
The first approximation algorithm was identified by Kortsarz and Peleg~\cite{Kortsarz-Peleg94}, \rev{based on the concept of $k$-core}.
\rev{For $G=(V,E)$ and $k\in \mathbb{Z}_{+}$, the $k$-core is the (unique) maximal subgraph in which every vertex has degree at least $k$ \cite{malliaros2020core}.
The authors proved that the $k$-core with the maximum $k$ is a $2$-approximate solution for DSP (straightforward after proving Theorem~\ref{thm:peeling} about the greedy peeling algorithm, which is given below).}
Charikar~\cite{Charikar2000} then studied a greedy algorithm for DSP, later to be known as the greedy peeling algorithm,
and proved that the algorithm is also a $2$-approximation algorithm but runs in $O(m+n\log n)$ time.
It is now widely known that the algorithm can be implemented to run in linear time, \rev{i.e., $O(m+n)$ time,} for unweighted graphs.
About 20 years later, Boob et al.~\cite{boob2020flowless} designed an iterative version of the greedy peeling algorithm,
inspired by the multiplicative weights update method~\cite{arora2012multiplicative}.
Later, Chekuri et al.~\cite{Chekuri2022supermod} proved that the iterative greedy peeling algorithm converges to an optimal solution to DSP.
\rev{In the rest of this section, we review these fundamental results.}

\subsubsection{Greedy peeling algorithm}

Here we review the $2$-approximation algorithm for DSP, presented by Charikar~\cite{Charikar2000}.
\rev{The algorithm works by repeatedly removing from the graph the vertex with the smallest degree, and saving the remaining vertices as a potential solution. The process continues until only one node remains in the graph. Finally, it outputs the subset of vertices with the highest density among the recorded candidate solutions.}
The pseudocode is given in Algorithm~\ref{alg:peeling}, where for $S\subseteq V$ and $v\in S$, $\deg_S(v)$ denotes the degree of $v$ in $G[S]$.
\begin{algorithm}[t]
\caption{Greedy peeling algorithm}\label{alg:peeling}
\SetKwInOut{Input}{Input}
\SetKwInOut{Output}{Output}
\Input{\ $G=(V,E)$}
\Output{\ $S\subseteq V$}
$S_{n}\leftarrow V$, $i\leftarrow n$\;
\While{$i>1$}{
  $v_\text{min}\in \argmin\{\text{deg}_{S_i}(v)\mid v\in S_i\}$\;
  $S_{i-1}\leftarrow S_i\setminus \{v_\text{min}\}$\;
  $i\leftarrow i-1$\;
}
$S_\mathrm{max} \in \argmax\{d(S)\mid S\in \{S_1,\dots, S_{n}\}\}$\;
\Return{$S_\mathrm{max}$}
\end{algorithm}

\begin{theorem}\label{thm:peeling}
Algorithm~\ref{alg:peeling} is a $2$-approximation algorithm for DSP. Moreover, the algorithm can be implemented to run in linear time.
\end{theorem}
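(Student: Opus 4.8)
The plan is to bound the density of the output $S_\mathrm{max}$ from below by comparing the peeling run against a fixed optimal solution $S^* \subseteq V$ to DSP; write $d^* = d(S^*) = e[S^*]/|S^*|$ for its density. The first step is a structural fact about any optimal $S^*$: every $v \in S^*$ satisfies $\deg_{S^*}(v) \ge d^*$. Indeed, if some $v \in S^*$ had $\deg_{S^*}(v) < d^*$ (so in particular $|S^*| \ge 2$), then
\[
d(S^* \setminus \{v\}) = \frac{e[S^*] - \deg_{S^*}(v)}{|S^*| - 1} > \frac{e[S^*] - d^*}{|S^*| - 1} = \frac{d^*\,|S^*| - d^*}{|S^*| - 1} = d^*,
\]
contradicting the optimality of $S^*$.

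Second, I would trace the execution of the while-loop and identify the first iteration in which the peeled vertex $v_\text{min}$ belongs to $S^*$; let $S_i$ be the current set at that moment. Since no vertex of $S^*$ has been removed before this iteration, $S^* \subseteq S_i$, and $S_i$ is one of the recorded candidate sets $S_1, \dots, S_n$. By the greedy choice, $\deg_{S_i}(v_\text{min}) \le \deg_{S_i}(v)$ for all $v \in S_i$; and since $v_\text{min} \in S^* \subseteq S_i$, monotonicity of the induced degree together with the first step gives $\deg_{S_i}(v_\text{min}) \ge \deg_{S^*}(v_\text{min}) \ge d^*$. Hence $\deg_{S_i}(v) \ge d^*$ for every $v \in S_i$, so $2\,e[S_i] = \sum_{v \in S_i} \deg_{S_i}(v) \ge d^* \, |S_i|$, i.e.\ $d(S_i) \ge d^*/2$. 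As $S_\mathrm{max}$ maximizes density over $\{S_1, \dots, S_n\}$, we conclude $d(S_\mathrm{max}) \ge d(S_i) \ge d^*/2$, which is the claimed ratio.

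For the running-time statement I would describe the bucket-based implementation familiar from core decomposition: keep the vertices of the current subgraph in doubly linked lists indexed by their induced degree, with a pointer from each vertex to its position; at each step extract a vertex of minimum induced degree from the lowest nonempty bucket, and for each surviving neighbor decrement its degree and move it one bucket down, all in $O(1)$. Each edge triggers at most one such decrement, so the peeling costs $O(m+n)$ overall; maintaining $e[S_{i-1}] = e[S_i] - \deg_{S_i}(v_\text{min})$ and the running argmax is $O(1)$ per step, giving $O(m+n)$ time in total.

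The point needing the most care is the handling of the degenerate cases in the second step, namely making sure the set $S_i$ exists and is one of the recorded candidates. I would argue explicitly that $S^*$ may be taken nonempty; that if $|S^*| = 1$ then $d^* = 0$ and the bound holds trivially; and that otherwise $|S^*| \ge 2$ forces at least one vertex of $S^*$ to be peeled during the loop (which terminates with a single surviving vertex), so the \emph{first} such peeling pinpoints a valid $S_i$ with index $i \ge 2$. Everything else is elementary and, notably, requires neither max-flow nor LP machinery.
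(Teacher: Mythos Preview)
Your proof is correct and follows essentially the same approach as the paper: both establish the key inequality $\deg_{S^*}(v)\ge d(S^*)$ for all $v\in S^*$ by optimality, then analyze the candidate set at the moment the first vertex of $S^*$ is peeled, and both implement the peeling via degree-indexed doubly linked lists to achieve $O(m+n)$ time. Your handling of the degenerate cases is slightly more explicit than the paper's, but otherwise the arguments are the same.
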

\begin{proof}
Let $S^*\subseteq V$ be an optimal solution to DSP.
From the optimality of $S^*$, for any $v\in S^*$, we have
\begin{align*}
d(S^*)=\frac{e[S^*]}{|S^*|}\geq \frac{e[S^*\setminus \{v\}]}{|S^*|-1}=d(S^*\setminus \{v\}).
\end{align*}
Transforming the above inequality using $e[S^*\setminus \{v\}]=e[S^*]-\deg_{S^*}(v)$, we have that for any $v\in S^*$,
\begin{align}\label{ineq:dense_optimality}
\deg_{S^*}(v)\geq d(S^*).
\end{align}

Let $v^*$ be the vertex that is contained in $S^*$ and removed first by Algorithm~\ref{alg:peeling}.
Let $S'\subseteq V$ be the vertex subset kept just before removing $v^*$ in Algorithm~\ref{alg:peeling}.
Then the density of $S'$ can be evaluated as
\begin{align}\label{ineq:peeling_LB}
d(S')=\frac{\frac{1}{2}\sum_{v\in S'}\deg_{S'}(v)}{|S'|}
\geq \frac{\frac{1}{2}|S'|\deg_{S'}(v^*)}{|S'|}
\geq \frac{1}{2}\deg_{S^*}(v^*)
\geq \frac{1}{2}d(S^*),
\end{align}
where the first inequality follows from the greedy choice of $v^*\in S'$,
the second inequality follows from $S'\supseteq S^*$,
and the last inequality follows from Inequality~\eqref{ineq:dense_optimality}.
Noticing that $S'$ is one of the candidate subsets of the output, we have the approximation ratio of $2$, as desired.

Next we show that the algorithm can be implemented to run in linear time.
For each (possible) degree $d=0,\dots, n-1$, the vertices having degree $d$ are kept in a doubly-linked list.
In the very first iteration, the algorithm scans the lists with increasing order of $d$ until it finds a vertex.
Once a vertex is found, the algorithm removes the vertex from the list, and moves each neighbor of the vertex to the one lower list.
Owing to the structure of doubly-linked lists, this entire operation can be done in $O(\deg(v))$ time.
Moving to the next iteration, it suffices to go back to the one lower list because no vertex can exist in lower levels than that,
implying that in the entire algorithm, the number of moves from one list to another is bounded by $O(n)$.
Therefore, the above operations can be conducted in $O(m+n)$ time.
Note that we do not need to compute the density of a vertex subset from scratch in each iteration.
It suffices to compute $d(V)$ in the very first iteration and then compute the density of a current vertex subset based on the difference:
in each iteration, the numerator decreases by the degree of the removed vertex in the graph at hand and the denominator decreases by 1.
This concludes the proof.
\end{proof}

Here we demonstrate that the $k$-core with the maximum $k$ is also a $2$-approximate solution for DSP. 
This can be shown easily using Inequality~\eqref{ineq:dense_optimality}.
Let $k^*$ be the maximum value among $k$'s such that there exists a nonempty $k$-core. 
Let $S^*_\text{core}$ be (the vertex subset of) the $k^*$-core.
By the definition of $k^*$-core, we have $\min_{v\in S^*_\text{core}}\deg_{S^*_\text{core}}(v)=k^* \geq \deg_{S^*}(v^*)$,
where $v^*$ is that defined in the proof of Theorem~\ref{thm:peeling}.
Then, using Inequality~\eqref{ineq:dense_optimality}, we have
\begin{align*}
d(S^*_\text{core})
\geq \frac{1}{2}k^*
\geq \frac{1}{2}\deg_{S^*}(v^*)
\geq \frac{1}{2}d(S^*),
\end{align*}
as desired.

In practice, Algorithm~\ref{alg:peeling} rarely outputs a vertex subset that has an objective value of almost half the optimal value,
which means that the algorithm has a better empirical approximation ratio.
However, Gudapati et al.~\cite{gudapati2021greedy} proved that the approximation ratio of $2$ is tight,
that is, there exists no constant $\alpha < 2$ such that the algorithm is an $\alpha$-approximation algorithm.

It should be remarked that Algorithm~\ref{alg:peeling} can be seen as a primal-dual algorithm for DSP, which also gives a proof of $2$-approximation.
Let us consider the dual of the LP (used in Algorithm~\ref{alg:LP}):
\begin{alignat*}{4}
&\text{minimize}   &\quad &t \\
&\text{subject to} &    &t\geq \sum_{e=\{u,v\}\in E} z_{e,v}      &\quad      &(\forall v\in V),\\
&                  &    &z_{e,u} + z_{e,v} \geq 1 &      &(\forall e=\{u,v\}\in E),\\
&                  &    &z_{e,u}, z_{e,v}\geq 0    &      &(\forall e=\{u,v\}\in E).
\end{alignat*}
The above dual LP can be understood as follows:
each edge $e=\{u,v\}$ has cost $1$ and we need to distribute it to the endpoints $u,v$ so as to minimize the maximum cost over vertices.
Recall that in each iteration of Algorithm~\ref{alg:peeling}, we specify the minimum degree vertex $v_\text{min}$ in the graph at hand and then remove it.
If we assign all the costs of the incident edges of $v_\text{min}$ to $v_\text{min}$ when removing it, we can get a feasible solution for the dual LP.
Let $t^*$ be the objective value of the solution (in terms of the dual LP).
Let us focus on the iteration where the assignment of the maximum cost $t^*$ is carried out.
Then the density of the vertex subset kept just before the assignment is lower bounded by $t^*/2$, because all vertices have degree at least $t^*$.
By the LP's (weak) duality theorem, $t^*$ is lower bounded by the optimal value of the primal LP, thus by the optimal value of DSP.
Therefore, we again see that Algorithm~\ref{alg:peeling} is a $2$-approximation algorithm for DSP.

\subsubsection{Iterative greedy peeling algorithm and beyond}\label{subsubsec:iterative_peeling}

Boob et al.~\cite{boob2020flowless} proposed {\sc Greedy++} for DSP,
an algorithm inspired by the multiplicative weights update~\cite{arora2012multiplicative}.
Let $T\in \mathbb{Z}_{+}$.
The algorithm runs Charikar's greedy peeling algorithm for $T$ times
while updating the priority of each vertex using the information of the past iterations.
The pseudocode is given in \refalg{peeling++}.
Note that each iteration of {\sc Greedy++} requires $\bigO(m+n\log n)$ time. 
The authors empirically showed that the algorithm tends to converge to the optimum
and conjectured its convergence to optimality.
\begin{algorithm}[t]
\caption{\textsf{Greedy++}}\label{alg:peeling++}
\SetKwInOut{Input}{Input}
\SetKwInOut{Output}{Output}
\Input{\ $G=(V,E)$, $T\in \mathbb{Z}_{>0}$}
\Output{\ $S\subseteq V$}
\lForEach{$v \in V$} {$\ell_v \leftarrow 0$}
\For{$t =1,2,\dots,T$}{
  $S^t_{n}\leftarrow V$, $i\leftarrow n$\;
  \While{$i> 1$}{
    $v_\text{min}\in \argmin\{\text{deg}_{S^t_i}(v) + \ell_v \mid v\in S^t_i\}$\;
    $\ell_{v_\text{min}} \leftarrow \ell_{v_\text{min}} + \text{deg}_{S^t_i}(v_\text{min})$\;
    $S^t_{i-1}\leftarrow S^t_i\setminus \{v_\text{min}\}$\;
    $i\leftarrow i-1$\;
  }
  $t\leftarrow t+1$\;
}
$S_\mathrm{max} \in \argmax\{d(S)\mid S\in \{S^1_n,\dots, S^1_1,S^2_n,\dots, S^2_1,\dots, S^T_n,\dots, S^T_1\}\}$\;
\Return{$S_\mathrm{max}$}
\end{algorithm}

Chekuri et al.~\cite{Chekuri2022supermod} proved the conjecture of Boob et al.~\cite{boob2020flowless},
showing that {\sc Greedy++} converges to a solution with an approximation ratio arbitrarily close to $1$
and that it naturally extends to a broad class of supermodular functions (i.e., normalized, non-negative, and monotone supermodular functions) in the numerator of the objective function.
More in detail, the authors proved that for any $\epsilon >0$,
{\sc Greedy++} provides a
$(1+\epsilon)$-approximate solution for DSP
after
$\bigO\left(\frac{\Delta \log n}{\text{OPT} \epsilon^2} \right)$ iterations,
where $\Delta$ is the maximum degree of a vertex in the graph and $\text{OPT}$ is the optimal value of DSP. 
\rev{Harb et al.~\cite{harb2023convergence} recently established the convergence of {\sc Greedy++} to the so-called optimal dense decomposition vector.}
Fazzone et al.~\cite{Fazzone2022} modified {\sc Greedy++} to have a quantitative certificate of the solution quality provided by the algorithm at each iteration.
Thanks to this, the authors equipped {\sc Greedy++} with a practical device
that allows termination whenever a solution with a user-specified approximation ratio is found, making the algorithm suited for practical purposes.
Prior to Chekuri et al.~\cite{Chekuri2022supermod},
Boob et al.~\cite{boob2019faster} provided a $(1+\epsilon)$-approximation algorithm for DSP
by reducing DSP to solving $\bigO\left(\log n\right)$ instances of the mixed packing and covering problem.
This algorithm runs in $\tilde\bigO\left(\frac{m \Delta}{\epsilon} \right)$ time\footnote{$\poly\log n$ factors are hidden in the $\tilde\bigO$ notation.}.
Chekuri et al.~\cite{Chekuri2022supermod} designed another $(1+\epsilon)$-approximation algorithm running in
$\tilde\bigO\left(\frac{m}{\epsilon} \right)$ time by approximating maximum flow.
Very recently, Harb et al.~\cite{harb2022faster} proposed a new iterative algorithm,
which provides an $\epsilon$-additive approximate solution for DSP in
$\bigO\left(\frac{\sqrt{m \Delta}}{\epsilon} \right)$ iterations,
where each iteration requires $\bigO\left(m\right)$ time and admits some level of parallelization.
The authors also provided a different peeling technique called fractional peeling, with theoretical guarantees and good empirical performance.

\section{Adding Constraints}
\label{sec:cons}
This section delves into variants of DSP, defined by means of constraints.
We will begin covering size-constrained problems, which set limits on the desired size of the output \rev{subset}.
Next, we will investigate seed-set problems, where an initial set of nodes --- referred to as seed set --- guides the search for the densest subgraph.
Finally, we will examine problems with connectivity constraints, where the output subgraph must meet specific requirements in terms of connectivity, to prevent potential vertex/edge failures.

\subsection{Size constraints}
\label{sec:size}
Size-constrained versions of DSP are \rev{well studied} in literature, \rev{as they 
find applications in several real-world contexts}. 
While DSP is polynomial-time \rev{solvable}, adding size constraints makes it \NP-hard.

\subsubsection{Densest $k$-subgraph problem}\label{subsubsec:DkS}
Given a simple undirected graph $G=(V,E)$ and a positive integer $k$, the \emph{Densest $k$-Subgraph problem} (D$k$S) requires to find a vertex subset $S\subseteq V$ that maximizes $d(S)=e[S]/|S|$ subject to $|S|=k$.
As the size of solutions is fixed, the objective function can be reduced to $e[S]$.
It is easy to see that the maximum clique problem can be reduced to D$k$S; therefore, the problem is \NP-hard.
D$k$S is known not only as a variant of DSP but also as one of the most fundamental combinatorial optimization problems, and would deserve a survey of its own. 

Feige et al.~\cite{FPK01} proposed a combinatorial polynomial-time $O(n^{1/3-\delta})$-approximation algorithm for some tiny $\delta >0$.
Later, Goldstein and Langberg~\cite{Goldstein2009dense} estimated the above approximation ratio of $O(n^{1/3-\delta})$ and concluded that it is approximately equal to $O(n^{0.3226})$.
In addition, they presented an algorithm with a slightly better approximation ratio of $O(n^{0.3159})$.
Bhaskara et al.~\cite{Bhaskara+10} proposed an $O(n^{1/4+\epsilon})$-approximation algorithm running in $n^{O(1/\epsilon)}$ time, for any $\epsilon >0$.
This approximation ratio is the current \rev{state-of-the-art} for D$k$S.
The algorithm is based on a clever procedure that distinguishes random graphs from \rev{some} random graphs with planted dense subgraphs.

In addition, there are some algorithms with an approximation ratio depending on the parameter $k$.
Asahiro et al.~\cite{Asahiro2000} demonstrated that the straightforward application of the greedy peeling algorithm attains the approximation ratio of $O(n/k)$.
Later, Feige and Langberg~\cite{FL01} employed semidefinite programming (SDP) and achieved an approximation ratio somewhat better than $O(n/k)$.

There exist also approximation algorithms for specific instances of D$k$S.
Arora et al.~\cite{AKK95} proved that there exists a polynomial-time approximation scheme (PTAS) for D$k$S on $G$ with $m=\Omega(n^{2})$ and $k=\Omega(n)$.
Ye and Zhang~\cite{Ye2003approximation} developed an SDP-based polynomial-time $1.7048$-approximation algorithm for D$k$S with $k=n/2$,
which improves some previous results, e.g., a trivial randomized $4$-approximation or $2$-approximation algorithm using LP~\cite{Goemans1996mathematical} or SDP~\cite{Feige1997densest}.
Liazi et al.~\cite{Liazi2008constant} presented a polynomial-time $3$-approximation algorithm for D$k$S with chordal graphs.
Later, Chen et al.~\cite{Chen2010densest} developed a polynomial-time constant-factor approximation algorithm for D$k$S with a variety of classes of intersection graphs, including chordal graphs and claw-free graphs.
They also proposed a PTAS for D$k$S on unit disk graphs, which improves the previous $1.5$-approximation for D$k$S on proper interval graphs, a special case of unit disk graphs~\cite{Backer2010constant}.
Papailiopoulos et al.~\cite{Papailiopoulos2014finding} designed an algorithm for D$k$S that looks into a low-dimensional space of dense subgraphs. The approximation guarantee depends on the graph spectrum, which is effective for many graphs in applications. The algorithm runs in nearly-linear time under some mild assumptions of the graph spectrum, and is highly parallelizable.
Khanna and Louis~\cite{khanna2020planted} introduced semi-random models of instances with a planted dense subgraphs and studied the approximability of D$k$S.
They showed that approximation ratios better than $O(n^{1/4+\epsilon})$ can be achieved for a wide range of parameters of the models.

The literature is also rich in inapproximability results for D$k$S.
It is known that D$k$S has no PTAS under some reasonable computational complexity assumptions.
For example, Feige~\cite{Feige02} assumed that random 3-SAT formulas are hard to refute,
Khot~\cite{Khot06} assumed that \NP does not have any randomized algorithm running in subexponential time,
and Raghavendra and Steurer~\cite{raghavendra2010expansion} assumed a strengthened version of the unique games conjecture (UGC).
\rev{Bhaskara et al.~\cite{Bhaskara+12} studied the inapproximability of D$k$S from the perspective of SDP relaxations and devised lower bounds on the integrality gaps of strong SDP relaxations, showing that beating a factor of $n^{\Omega(1)}$ is a barrier even for the most powerful SDPs.}
Manurangsi~\cite{Manurangsi2017Almost} proved that D$k$S cannot be approximated up to a factor of $n^{\frac{1}{\left(\log \log n\right)^c}}$, 
for some $c>0$ assuming the Exponential Time Hypothesis (ETH)~\cite{Impagliazzo2001Complexity}.
Braverman et al.~\cite{braverman2017eth} ruled out a PTAS in terms of the additive approximation, assuming ETH.
\rev{Very recently, Chuzhoy et al.~\cite{Chuzhoy+23} showed that D$k$S cannot be approximated up to a factor of $2^{\log^\epsilon n}$ for some $\epsilon >0$, assuming a novel conjecture on the hardness of some constraint satisfaction problems.}

Besides the approximability and inapproximability, the parameterized complexity of D$k$S has also been studied
(see e.g., \cite{Cygan2015parameterized} for the foundations of the parameterized complexity).
Cai~\cite{Cai2008parameterized} proved that D$k$S is $\text{W}[1]$-hard with respect to the parameter $k$, meaning that there exists no fixed-parameter tractable algorithm for D$k$S parameterized by $k$, unless $\ensuremath{\mathrm{P}} = \NP$, \rev{which was later strengthened by Komusiewicz and Sorge~\cite{komusiewicz2015algorithmic}}. 
Bourgeois et al.~\cite{Bourgeois2013exact} showed that D$k$S can be solved exactly in $2^\texttt{tw}\cdot n^{O(1)}$ time, where $\texttt{tw}$ is the tree-width of the input graph.
Broersma et al.~\cite{Broersma2013tight} demonstrated that D$k$S can be solved in $k^{O(\texttt{cw})}\cdot n$ time, where $\texttt{cw}$ is the clique-width of the input graph, but it cannot be solved in $2^{o(\texttt{cw}\log k)}\cdot n^{O(1)}$ time, unless ETH fails.
\rev{Komusiewicz and Sorge~\cite{komusiewicz2015algorithmic} developed an algorithm for D$k$S with a time complexity of
$O((4.2(\Delta-1))^{k-1} (\Delta+k)\, k^{2} n)$, where $\Delta$ is the maximum degree of a vertex in the graph.
The algorithm is randomized and can only produce false negatives with probability at most $1/e$.}
Recently, Mizutani and Sullivan~\cite{Mizutani2022parameterized} proved that D$k$S can be solved in any of $f(\texttt{nd})\cdot n^{O(1)}$ time and $O(2^\texttt{cd}\cdot k^2n)$ time, where \texttt{nd} and \texttt{cd} denote, \rev{respectively, two characteristics of the input graph, i.e., the neighborhood diversity~\cite{Lampis12} and the cluster deletion number~\cite{Bocker+13},  while} $f$ is some computable function.
Hanaka~\cite{Hanaka2023computing} independently showed that D$k$S is fixed parameter tractable \rev{using} the neighborhood diversity, \rev{as well as other parameters}.

There are some other exact algorithms for D$k$S,
based on mathematical programming, heuristic search, or graph-theoretic methods.
Billionnet et al.~\cite{Billionnet2009improving} devised a reformulation technique that is applicable to a wide range of quadratic programming problems, including D$k$S.
Malick and Roupin~\cite{Malick2012solving} presented a branch-and-bound method based on SDP relaxations of D$k$S.
Later, Krislock et al.~\cite{Krislock2016computational} improved the above branch-and-bound method using a better bounding procedure.
Komusiewicz and Sommer~\cite{komusiewicz2020fixcon} devised an enumeration-based exact algorithm for a special case of D$k$S, where the subgraph obtained should be connected. 
Gonzales and Migler~\cite{gonzales2019densest} designed an $O(nk^2)$-time exact algorithm for D$k$S on outerplanar graphs.

Kawase and Miyauchi~\cite{Kawase-Miyauchi18} introduced the concept of dense-frontier points of a graph.
Given $G=(V,E)$, plot all the points contained in $\{(|S|,e[S])\mid S\subseteq V\}$.
They referred to the extreme points of the upper convex hull of the above set as the dense frontier points of $G$.
Note that the densest subgraph and the entire graph are typical vertex subsets corresponding to dense frontier points.
\rev{The authors} designed an LP-based algorithm that computes a corresponding vertex subset for every dense frontier point, in polynomial time.
An algorithm designed by Nagano et al.~\cite{Nagano+11} can also be used \rev{for the same purpose}.

Finally, several effective heuristics for D$k$S have been proposed.
Sotirov~\cite{sotirov2020solving} developed coordinate descent heuristics for D$k$S. 
Bombina and Ames~\cite{bombina2020convex} introduced a novel convex programming relaxation for D$k$S using the nuclear norm relaxation of a low-rank and sparse decomposition of the adjacency matrices of subgraphs with $k$ vertices. Using the relaxation, they proved that an optimal solution can be obtained if the input is sampled randomly from a distribution of random graphs constructed to contain a highly dense subgraphs with $k$ vertices with high probability.
Konar and Sidiropoulos~\cite{konar2021exploring} reformulated D$k$S as a submodular function minimization subject to a cardinality constraint, and introduced a relaxation as a Lov\'asz extension minimization over the convex hull of the cardinality constraint. They proposed an effective heuristic for D$k$S by developing a highly scalable algorithm for the relaxation, based on the \rev{Alternating Direction Method of Multipliers} (ADMM). 
\rev{Arrazola et al.~\cite{Arrazola+18} demonstrated that Gaussian boson sampling, one of the limited models of quantum computation, can be used for enhancing randomized algorithms for D$k$S.}

There are some papers dealing with the edge-weighted version of D$k$S, where the edge weights are nonnegative.
Ravi et al.~\cite{Ravi1994heuristic} presented a polynomial-time $4$-approximation algorithm for the problem with edge weights satisfying the triangle inequality.
Later, Hassin et al.~\cite{Hassin1997approximation} proposed an algorithm with a better approximation ratio of $2$.
Recently, Chang et al.~\cite{chang2020hardness} considered a generalized setting, where edge weights only satisfy a relaxed variant of the triangle inequality.
They demonstrated that the problem is \NP-hard for any degree of relaxation of the triangle inequality and extended the above $2$-approximation algorithm to the generalized setting.
Barman~\cite{barman2018approximating} studied a slight variant of D$k$S called the Normalized Densest $k$-Subgraph problem (ND$k$S), where the objective function is normalized to be at most 1, i.e., $e[S]/|S|^2$ is considered. As $|S|$ is fixed to $k$, the inapproximability in terms of the multiplicative approximation is inherited from D$k$S.
Instead, Barman~\cite{barman2018approximating} focused on an additive approximation for ND$k$S, and developed an $\epsilon$-additive approximation algorithm running in $n^{O(\log \Delta / \epsilon^2)}$ time, where $\Delta$ is the maximum degree of $G$.
Barman~\cite{barman2018approximating} also gave an $\epsilon$-additive approximation algorithm for a bipartite graph variant of ND$k$S called the \rev{Densest $k$-Bipartite Subgraph} problem (D$k$BS).
Braverman et al.~\cite{braverman2017eth} studied the inapproximability in terms of the additive approximation of ND$k$S:
\rev{they} ruled out an additive PTAS for ND$k$S under ETH.
Hazan and Krauthgamer~\cite{Hazan2011how} showed that there exits no PTAS for D$k$BS under some computational complexity assumptions.

\subsubsection{Densest at-least-$k$-subgraph problem and densest at-most-$k$-subgraph problem}
There are two relaxations of D$k$S, introduced by Andersen and Chellapilla~\cite{AndersenChellapilla}.
The two problems are called the \rev{Densest at-least-$k$ Subgraph} problem (Dal$k$S) and the \rev{Densest at-most-$k$ Subgraph} problem (Dam$k$S).
As suggested by the names, Dal$k$S and Dam$k$S ask $S\subseteq V$ that maximizes $d(S)$ subject to $|S|\geq k$ and $|S|\leq k$, respectively.
Obviously, similar to D$k$S, Dam$k$S is \NP-hard.
The \NP-hardness of Dal$k$S is not trivial, but it was proved by Khuller and Saha~\cite{Khuller2009Dense} by reducing D$k$S to Dal$k$S.

Andersen and Chellapilla~\cite{AndersenChellapilla} presented a linear-time $3$-approximation algorithm for Dal$k$S
using the greedy peeling algorithm for DSP.
The analysis of the approximation ratio of $3$ is based on the relationship between subgraphs with large density and subgraphs with large minimum degree,
which can be viewed as a generalization of the analysis of $2$-approximation for DSP by Kortsarz and Peleg~\cite{Kortsarz-Peleg94}.
Andersen and Chellapilla~\cite{AndersenChellapilla} also mentioned the hardness of approximation of Dam$k$S:
in particular, they proved that if there exists a polynomial-time $\gamma$-approximation algorithm for Dam$k$S,
then there exists a polynomial-time $8\gamma^2$-approximation algorithm for D$k$S.

Later, Khuller and Saha~\cite{Khuller2009Dense} improved the approximability of Dal$k$S and the hardness of approximation of Dam$k$S.
For Dal$k$S, they designed an LP-based $1/2$-approximation algorithm, 
\rev{which solves a series of $n-k+1$ LPs, constructs $n-k+1$ candidate solutions from the optimal solutions to the LPs, and outputs the best among them.} 
Their analysis of the approximation ratio of $2$ is based on the analysis by Charikar~\cite{Charikar2000}
that DSP can be solved exactly by the LP-based algorithm.
To \rev{decrease the number of LPs to solve from $n-k+1$ to $1$, the authors} suggested incorporating the algorithm by Andersen and Chellapilla into (a lighter version of) the above algorithm.
As for the inapproximability of Dam$k$S, Khuller and Saha~\cite{Khuller2009Dense} proved that
if there exists a polynomial-time $\gamma$-approximation algorithm for Dam$k$S,
then there exists a polynomial-time $4\gamma$-approximation algorithm for D$k$S,
which improves the above result by Andersen and Chellapilla~\cite{AndersenChellapilla} and implies that Dam$k$S is as hard to approximate as D$k$S within a constant factor.
Recently, Zhang and Liu~\cite{zhang2021approximating} developed a randomized bicriteria approximation algorithm for Dam$k$S.

\subsection{Seed set}\label{subsec:seed}
Dai et al.~\cite{dai2022anchored} studied the problem of Anchored Densest Subgraph search (ADS). Given a graph $G=(V,E)$, a reference node set $R$ and an anchored node set $A$, with $A \subseteq R \subseteq V$, the ADS problem consists in finding a set of nodes $S$ that contains all nodes in $A$ \rev{but} maximizes the following quantity:
$(2 e[S] - \sum_{v \in S \setminus R} \deg(v)) / |S|$.
The authors designed an exact algorithm for ADS based on a modified version of the Goldberg~\cite{goldberg1984finding} reduction to
$O\left(\log \sum_{v \in R} \deg(v) \right)$
instances of maximum-flow.
The complexity of the proposed method is bounded by a polynomial of $\sum_{v \in R} \deg(v)$, and is independent of the size of the input graph.

Sozio and Gionis~\cite{Sozio} defined and studied the \emph{cocktail party} problem:
given a graph $G=(V,E)$,
and a set of query nodes $Q \subseteq V$,
the problem consists in finding
a set of nodes $S$ containing all query nodes ($Q\subseteq S$),
 whose induced subgraph
maximizes
a node-monotone non-increasing function
satisfying
a set of monotone non-increasing properties.
The authors considered the minimum degree as the node-monotone non-increasing function to maximize
and a condition on the maximum allowed distance between the solution set $S$ and the query set $Q$ as the monotone non-increasing property to satisfy.
The author proved that a direct adaptation to their problem of the greedy peeling algorithm (Algorithm~\ref{alg:peeling}) always returns an optimal solution.

Fazzone et al.~\cite{Fazzone2022} defined the Dense Subgraphs with Attractors and Repulsers problem (DSAR), whose goal is to find
a dense cluster of nodes $S$, where each node in $S$ is simultaneously close to
a given set $A$ of nodes (Attractors) and far from a given set $R$ of nodes (Repulsers). They proved that DSAR is a special instance of a generalization of DSP on weighted graphs (see Section \ref{subsec:labeled}).

\subsection{Connectivity constraints}
Connectivity constraints on DSP require the output subgraph to satisfy specific requirements in terms of connectivity between the vertices.
These problems are motivated by the fact that densest subgraphs have a structural drawback, that is, they may not be robust to vertex/edge failure, thus not reliable in real-world applications such as network design, transportation, and telecommunications, to name a few.
Indeed, densest subgraphs may not be well-connected, which implies that they may be disconnected by removing only a few vertices/edges within it.
As a toy example, consider a barbell graph consisting of two equally-sized large cliques bridged by only one edge.
In the classical DSP setup, the entire graph would be the densest subgraph, but the failure of the edge connecting the two cliques would disconnect the entire graph.

In this spirit, Bonchi et al.~\cite{bonchi2021finding} introduced two related optimization problems: the densest $k$-vertex-connected subgraph problem and the densest $k$-edge-connected subgraph problem.
In the densest $k$-vertex/edge-connected subgraph problem,
given an undirected graph $G=(V,E)$ and a positive integer $k$, we seek a vertex subset $S\subseteq V$ that maximizes $d(S)$ subject to the constraint that $G[S]$ is $k$-vertex/edge-connected, i.e., the subgraph would be still connected with the removal of any subset of $k$ different vertices/edges.
Bonchi et al.~\cite{bonchi2021finding} first designed polynomial-time $\left(4/\gamma,1/\gamma\right)$-bicriteria approximation algorithms
with parameter $\gamma\in [1,2]$ for these problems.
Note that setting $\gamma=1$, we can obtain $4$-approximation algorithms for the problems.
The algorithms are designed based on a well-known theorem in extremal graph theory, proved by Mader~\cite{Mader72}.
They then designed polynomial-time $19/6$-approximation algorithms for the problems,
which improves the above approximation ratio of $4$ derived directly from the bicriteria approximation ratio.

\section{Changing the Objective Function}
\label{sec:variants}
The objective function of \rev{DSP}, i.e., the degree density,
has been generalized to various forms for extracting a more sophisticated structure in a graph.
Section~\ref{subsec:numerator} covers variants that generalize the numerator $e[S]$ of the density,
while Section~\ref{subsec:denominator} discusses variants that generalize the denominator $|S|$.
Section~\ref{subsec:generalization_others} reviews generalizations that do not fall in the above categorization.

\subsection{Generalizing the numerator}\label{subsec:numerator}
Tsourakakis~\cite{Tsourakakis15} generalized the notion of density to the $k$-clique density.
For $G=(V,E)$ and $S\subseteq V$, the $k$-clique density for some fixed positive integer $k$ is defined as
$h_k(S)=c_k(S)/|S|$,
where $c_k(S)$ is the number of $k$-cliques contained in $G[S]$.
Obviously, when $k=2$, it reduces to the original density.
In the $k$-clique densest subgraph problem ($k$-clique DSP),
given an undirected graph $G=(V,E)$, we are asked to find $S\subseteq V$ that maximizes $h_k(S)$.
In particular, when $k=3$, the problem is referred to as the triangle densest subgraph problem (triangle DSP).
Tsourakakis~\cite{Tsourakakis15} proved that unlike many optimization problems for detecting a large near-clique,
the $k$-clique DSP is polynomial-time solvable when $k$ is constant.
Indeed, the author designed a maximum-flow-based exact algorithm and a supermodular-function-maximization-based exact algorithm
for the problem with constant $k$.
The author also demonstrated that a generalization of the greedy peeling algorithm,
which in each iteration removes a vertex participating in the minimum number of $k$-cliques, attains $k$-approximation.
Computational experiments show that \rev{even the triangle densest subgraphs are much closer to large near-cliques, compared with the densest subgraphs}. 

Later Mitzenmacher et al.~\cite{mitzenmacher2015scalable} conducted a follow-up work.
Their work is motivated by the fact that it is prohibitive to compute an exact or even well-approximate solution
to the $k$-clique DSP for reasonably large $k$ (e.g., $k>3$) on large graphs, due to the expensive cost of counting $k$-cliques.
To overcome this issue, they presented a sampling scheme called the densest subgraph sparsifier,
yielding a randomized algorithm that produces a well-approximate solution to the $k$-clique DSP
while providing significantly reduced time and space complexities.
Specifically, the sampling scheme samples each $k$-clique independently with an appropriate probability,
which can be incorporated as a preprocessing in any algorithm for the problem.
In addition to the sampling scheme, they also devised two simpler exact algorithms for the $k$-clique DSP.
Finally, the authors extended the $k$-clique DSP to the bipartite graph setting.
For an undirected bipartite graph $G=(L\cup R, E)$, positive integers $p,q$, and $S\subseteq L\cup R$, they defined the $(p,q)$-biclique density as $b_{p,q}(S)=c_{p,q}(S)/|S|$,
where $c_{p,q}(S)$ is the number of $(p,q)$-cliques contained in $G[S]$.
In the $(p,q)$-biclique densest subgraph problem ($(p,q)$-biclique DSP), given an undirected bipartite graph $G=(L\cup R, E)$,
we seek $S\subseteq L\cup R$ that maximizes $b_{p,q}(S)$.
They showed that all the above results for the $k$-clique \rev{DSP} can be extended to the $(p,q)$-biclique DSP.
\rev{Numerical results} demonstrate that the proposed sampling-based algorithms output near-optimal solutions to the problems. 
\rev{Related to the $(p,q)$-biclique DSP, Sar\i{}y\"{u}ce and Pinar~\cite{sariyuce2018peeling} devised algorithms that unfold dense bipartite subgraphs with hierarchy of relations.}

Fang et al.~\cite{Fang2019Efficient} devised more efficient exact and approximation algorithms for the $k$-clique DSP.
To this end, they introduced a generalization of the $k$-core called the $(k,\Psi)$-core.
For a positive integer $k$ and an $h$-clique $\Psi$, a $(k,\Psi)$-core is a maximal subgraph
in which every vertex participates in at least $k$ $h$-cliques.
Therefore, if we take a 2-clique (i.e., an edge) as $\Psi$, the concept reduces to the ordinary $k$-core.
Note that the concept of $(k,\Psi)$-core is a special case of $k$-$(r,s)$ nucleus
introduced by Sar\i{}y\"{u}ce et al.~\cite{Sariyuce2015nucleus,Sariyuce2017nucleus}.
Using the concept, their exact algorithm for the $k$-clique DSP runs as follows:
It computes lower and upper bounds on the $k$-clique density value for each $(\ell,\Psi)$-core computed,
and based on those bounds, it derives lower and upper bounds on the optimal value of the problem.
Then the algorithm specifies some $(\ell,\Psi)$-cores that may contain an optimal solution to the problem,
and solve the problem on them.
A useful fact here is that such $(\ell,\Psi)$-cores tend to be much smaller than the original graph,
enabling us to compute an optimal solution in much shorter time in practice.
On the other hand, \rev{the design of} their approximation algorithm is based on the fact
that the $(\ell,\Psi)$-core with the maximum value of $\ell$ is a good approximation to an optimal solution.
\rev{
Sun et al.~\cite{sun2020kclist++} also developed efficient exact and approximation algorithms for the $k$-clique DSP.
One of their approximation algorithms, called \textsc{kClist++}, is an iterative algorithm, which processes just one $k$-clique in a graph in each iteration.
The algorithm does not deal with all $k$-cliques at a time, and just requires the space complexity linear in $n$ and $m$.
They analyzed the convergence rate of the algorithm.
Another approximation algorithm, called \textsc{Seq-Sampling++}, is inspired by the above algorithm by Mitzenmacher et al.~\cite{mitzenmacher2015scalable} and based on sampling of $k$-cliques. 
}

Recently, Gao et al.~\cite{gao2022colorful} designed a graph reduction technique to accelerate approximation algorithms for the $k$-clique DSP.
To this end, they introduced the novel concept called the colorful $h$-star.
Assume that the vertices of a graph are colored so that any pair of vertices having an edge receives different colors.
Then, for a positive integer $h$, a colorful $h$-star is a star contained in the graph as a (not necessarily induced) subgraph
in which all vertices have different colors.
Note that the colorful $h$-star is a relaxed concept of $h$-clique; indeed, every $h$-clique is a colorful $h$-star.
They showed that unlike $k$-cliques, the colorful $h$-stars can be counted efficiently using a newly devised dynamic programming method, and designed an efficient colorful $h$-star core decomposition algorithm.
Based on this, they designed a graph reduction technique to accelerate any approximation algorithm for the $k$-clique DSP.
Moreover, they showed that the colorful $h$-star core itself can be a good heuristic solution for the $k$-clique DSP.

Konar and Sidiropoulos~\cite{konar2022triangle} studied the Triangle Densest $k$-Subgraph problem (TD$k$S).
The problem is a variant of D$k$S, where given an undirected graph $G=(V,E)$ and a positive integer $k$,
we are asked to find $S\subseteq V$ that maximizes the $3$-clique density $h_3(S)=c_3(S)/|S|$ (or simply $c_3(S)$) subject to $|S|=k$.
They showed that TD$k$S is \NP-hard \rev{and presented} a heuristic algorithm based on a mirror descent algorithm for a convex relaxation derived by the Lov\'asz extension. 
The proposed algorithm is shown to be empirically \rev{effective, 
and moreover}, it sometimes obtains a better solution even in terms of D$k$S than state-of-the-art algorithms for D$k$S.

Bonchi et al. \cite{BonchiKS19} generalized the notion of density by considering the $h$-degree of a node, i.e., the number of other nodes at distance no more than $h$ from the node. Based on this they defined the distance-$h$ densest subgraph problem and showed that, analogously to the $h = 1$ case, the inner-most core of the core decomposition, i.e., the $(k,h)$-core such that there is no non-empty $(j,h)$-core with $j>k$, provides an approximation to the distance-$h$ densest subgraph.

Miyauchi and Kakimura~\cite{Miyauchi-Kakimura18} aims to find a community,
i.e., a dense subgraph that is only sparsely connected to the rest of the graph, based on DSP.
They generalized the density as
$d_\alpha(S)=\frac{e[S]-\alpha\cdot e[S,\overline{S}]}{|S|}$ ($\alpha \in [0,\infty)$), 
where $\alpha$ is a nonnegative parameter and $e[S,\overline{S}]$ is the cut size of $S$, i.e., the number of edges between $S$ and $V\setminus S$.
\rev{This} quality function penalizes the connection between $S$ and $V\setminus S$, resulting in a preferential treatment for community structure.
The authors studied the problem of maximizing this quality function,
and designed \rev{LP-based and maximum-flow-based exact algorithms}.
Moreover, they presented a \rev{greedy peeling algorithm} with some quality guarantee.
Computational experiments demonstrate that the proposed algorithms are highly effective in finding community structure in a graph.

Recently, Chekuri et al.~\cite{Chekuri2022supermod} introduced the \rev{Densest Supermodular Subset} problem (DSS),
where given a finite set $V$ and a nonnegative \rev{normalized} supermodular function $f:2^V\rightarrow \mathbb{R}_+$,
we are asked to find $S\subseteq V$ that maximizes $f(S)/|S|$.
As $e[S]$ is a supermodular function over $V$ \rev{for a} given $G=(V,E)$, this problem is a generalization of DSP.
For \rev{DSS}, they presented a natural generalization of the iterative greedy peeling algorithm for DSP (reviewed in Section~\ref{subsubsec:iterative_peeling}).
Their significant contribution is the proof of the fact that the generalized algorithm outputs a $(1+\epsilon)$-approximate solution for DSS 
after $O\left(\frac{\Delta_f \log n}{\lambda^* \epsilon^2}\right)$ iterations,
where $\Delta_f=\max_{v\in V}(f(V)-f(V\setminus \{v\}))$ and $\lambda^*$ is the optimal value of the problem.

\rev{
Very recently, Huang et al.~\cite{Huang+23} introduced a generalization of DSS,
called the densest supermodular subset with possible negative values,
where the supermodular function is not guaranteed to be nonnegative.
Consequently, the model contains some existing problems that are not caught by DSS, 
e.g., the above problem by Miyauchi and Kakimura~\cite{Miyauchi-Kakimura18}.
For the model, the authors gave a simple reduction to DSS in terms of exact computation, 
meaning that it can be solved exactly through solving DSS.
Then they designed the first strongly-polynomial exact algorithm for DSS (thus for the generalization), based on Dinkelbach's algorithm~\cite{Dinkelbach67}.
Specifically, unlike the existing weakly-polynomial exact algorithms based on binary search over the objective values, the algorithm actively specifies the sequence of supermodular maximization problems to be solved.
}

\subsection{Generalizing the denominator}\label{subsec:denominator}
Kawase and Miyauchi~\cite{Kawase-Miyauchi18} addressed the size issue of DSP, 
\rev{which means that when solving} DSP,
it may happen that the obtained subset is too large or too small in comparison with the size desired in the application at hand.
As mentioned in Section~\ref{sec:cons}, \rev{DSP has} size-constrained variants, e.g., D$k$S, Dal$k$S, and Dam$k$S,
which explicitly specify the size range.
Unlike these variants, \rev{the authors} generalized the density without putting any constraint.
Specifically, they introduced the $f$-density of $S\subseteq V$, which defined as $e[S]/f(|S|)$,
where $f\colon \mathbb{Z}_+\rightarrow \mathbb{R}_+$ is a monotonically non-decreasing function.
Note that earlier than this, Yanagisawa and Hara~\cite{yanagisawa2018discounted} introduced an intermediate generalization
called the discounted average degree, i.e., $e[S]/|S|^\alpha$ for $\alpha \in [1,2]$.
In the $f$-\rev{Densest Subgraph} problem ($f$-DS), we are asked to find $S\subseteq V$ that maximizes the $f$-density $e[S]/f(|S|)$.
Although the $f$-DS does not explicitly specify the size of vertex subsets, the above size issue can be handled using convex or concave function $f$ appropriately.
Indeed, the authors showed that any optimal solution to $f$-DS with convex/concave function $f$ has a size smaller/larger than or equal to that of a densest subgraph.
For the $f$-DS with convex $f$, they proved the \NP-hardness with some concrete $f$,
and designed a polynomial-time
$\min\left\{\frac{f(2)/2}{f(S^*)/|S^*|^2},\, \frac{2f(n)/n}{f(|S|^*)-f(|S^*|-1)}\right\}$-approximation algorithm,
where $S^*\subseteq V$ is an optimal solution to the $f$-DS.
The approximation ratio \rev{is} complicated but it reduces to a simpler form by considering some concrete $f$, e.g., $f(x)=x^\alpha$ ($\alpha \in [1,2]$) and $f(x)=\lambda x + (1-\lambda)x^2$ ($\lambda \in [0,1]$).
For the $f$-DS with concave $f$, they designed \rev{LP-based and maximum-flow-based exact algorithms}.
In particular, the LP-based exact algorithm computes not only an optimal solution but also vertex subsets corresponding to dense frontier points, as explained in Section~\ref{sec:cons}.
Finally, \rev{they showed that the greedy peeling algorithm achieves $3$-approximation}.

\subsection{Other variants}\label{subsec:generalization_others}
Tsourakakis et al.~\cite{tsourakakis2013denser} defined the Optimal Quasi-Clique (OQC) problem of finding \rev{$S\subseteq V$} that maximizes $e[S] - \alpha \binom{|S|}{2}$, 
thus trying to find a subgraph \rev{that} is denser in terms of the edge density $e[S]/{|S|\choose 2}$, instead of the degree density adopted by DSP.

Recently, Veldt et al.~\cite{veldt2021meandensest} generalized the density to the single-parameter family of quality functions.
Specifically, they introduced the $p$-density for $S\subseteq V$, based on the concept of generalized mean (also called power mean or $p$-mean) of real values, as 
$M_p(S)=\left(\frac{1}{|S|}\sum_{v\in S}\deg_S(v)^p\right)^{1/p}$ ($p\in [-\infty,\infty]$),
where for $p\in \{-\infty, 0, \infty\}$, $M_p(S)$ is defined as its limit,
i.e., $M_{-\infty}(S)=\lim_{p\rightarrow -\infty}M_p(S)=\min_{v\in S}\deg_S(v)$, $M_{0}(S)=\lim_{p\rightarrow 0}M_p(S)=\prod_{v\in S}\deg_S(v)$,
and $M_{\infty}(S)=\lim_{p\rightarrow \infty}M_p(S)=\max_{v\in S}\deg_S(v)$.
When $p=1$, the $p$-density reduces to the original density.
The generalized mean densest subgraph problem asks for finding $S\subseteq V$ that maximizes $M_p(S)$.
It is worth mentioning that the generalized mean densest subgraph problem deals with DSP and the problem of finding $k$-core with maximum $k$ in a unified manner ($p=1$ and $p=-\infty$, respectively).
\rev{The authors} first proved that when $p\geq 1$, the \rev{problem} can be solved exactly in polynomial time by repeatedly solving submodular function minimization.
They then designed a faster $(p+1)^{1/p}$-approximation algorithm based on the greedy peeling algorithm.
They specified a class of graphs for which this approximation ratio is tight, and showed that as $p\rightarrow \infty$, the approximation ratio converges to $1$.
They also proved that for any $p >1$, the greedy peeling algorithm for DSP outputs an arbitrarily bad solution to the problem, on some graph classes.

\rev{Very recently, Chekuri and Torres~\cite{chekuri2023generalized} proved that the generalized mean densest subgraph problem is NP-hard for any $p \in \left(-\frac{1}{8}, 0\right) \cup \left(0, \frac{1}{4}\right)$ and even for any $p \in \left(-3,0\right) \cup \left(0,1\right)$ if there are positive edge weights.
On the other hand, the authors demonstrated that any optimal solution to DSP or an output of some modified version of the greedy peeling algorithm gives a tight $2$-approximation for the problem with $p < 1$.
The authors noted that for $p \geq 1$, solving the generalized mean densest subgraph problem is equivalent to solving DSS (in Section~\ref{subsec:numerator}) with supermodular function 
$f_p(S)=\sum_{v\in S}\deg_S(v)^p$.
As a result, the iterative greedy peeling algorithm for DSS, specialized for the generalized mean densest subgraph problem,
converges to an optimal solution, 
requiring $O(mn)$ time per iteration.  
To reduce this running time, the authors
designed a faster implementation of the greedy peeling algorithm designed by Veldt et al.~\cite{veldt2021meandensest} approximating its solution by a factor of
$(1 + \epsilon)$
while reducing the running time to
$O\left(\frac{pm\log^2 n}{\epsilon}\right)$.
}

Balalau et al. \cite{balalau2015topkoverlapping} defined the $(k,\alpha)$-Dense Subgraph with Limited Overlap problem ($(k,\alpha)$-DSLO): given an integer $k >
0$ as well as a real number $\alpha \in [0, 1]$, find at most $k$
subgraphs that maximize the total aggregate density,
i.e., the sum of the average degree of each subgraph,
under the constraint that the maximum pairwise
Jaccard coefficient between the set of nodes in the subgraphs be at most $\alpha$. They proved that the problem is \NP-hard even when $\alpha = 0$ (disjoint subgraphs) and showed that 
\rev{the most intuitive, simple heuristic can produce an arbitrarily bad solution}. 
The authors thus presented an efficient algorithm for $(k,\alpha)$-DSLO which comes with provable guarantees in some cases of
interest.

\section{Richer Graphs}
\label{sec:graphs}
In this section, we survey the literature on \rev{DSP} for different types of graphs. We discuss the extension of the problem to labeled graphs, negatively weighted graphs, directed graphs, multilayer graphs, temporal graphs, uncertain graphs, \rev{hypergraphs, and general metric spaces}.
\subsection{Edge- and vertex-labeled graphs}\label{subsec:labeled}
An edge-labeled (\rev{vertex-labeled, resp.}) graph is a type of graph where each edge (\rev{vertex, resp.}) is assigned one or more specific labels or attributes.  
The simplest example of an edge- and/or vertex-labeled graph is a graph where each edge and/or each vertex is associated with a scalar attribute, i.e., a weight. \rev{DSP} on weighted graphs was already studied by Goldberg~\cite{goldberg1984finding}, 
but recently dusted-off by Fazzone et al.~\cite{Fazzone2022}, that dubbed it as Heavy and Dense Subgraph Problem (HDSP).
Given a simple undirected graph $(G, V, E, w_V, w_E)$, where $w_V: V \rightarrow \mathbb{R}_+$ and $w_E: E \rightarrow \mathbb{R}_+$,
HDSP requires to find \rev{$S \subseteq V$} that maximizes
$(e[S] + \sum_{s \in S} w_V(s)) /|S|$, where $e[S] = \sum_{e \in E[S]} w_E(e)$.
HDSP is \rev{polynomial-time solvable}: 
Goldberg~\cite{goldberg1984finding} gave an exact polynomial-time algorithm based on a reduction to the $s$-$t$ maximum flow / $s$-$t$ minimum cut problem.
Recently, Fazzone et al.~\cite{Fazzone2022} studied the approximation guarantees of the greedy peeling algorithm (\rev{Section~\ref{subsec:approx}}) when applied to HDSP, and then adapted the iterative greedy peeling of Chekuri et al.~\cite{Chekuri2022supermod} to HDSP.

Anagnostopoulos et al.~\cite{anagnostopoulos2020fair} tackled the algorithmic fairness issue \rev{of DSP} by defining the Fair Densest Subgraph Problem (FDSP) on a graph with binary labeling of vertices. 
FDSP \rev{computes} 
$S \subseteq V$
of maximum density while ensuring that
$S$
contains an equal number of representatives of either label, guaranteeing that the binary-protected attribute is not disparately impacted. 
The authors proved that FDSP is \NP-hard and approximating \rev{FDSP} with a polynomial-time algorithm is at least as hard as Dam$k$S, for which no constant-factor approximation algorithms are known (see Section~\ref{sec:size}).
On the other hand, any $\alpha$-approximation to Dam$k$S is a $2\alpha$-approximation to FDSP. 
 In the case where the input graph \rev{is} fair, the authors \rev{introduced} a polynomial-time $2$-approximation algorithm for the problem and \rev{proved} the tightness of this approximation factor under the small set expansion hypothesis~\cite{raghavendra2010expansion}.
The authors also \rev{devised} an algorithm, based on a spectral embedding, \rev{enabling} to retrieve an approximate solution with theoretical guarantees in polynomial time for the case where the input graph is an expander that contains an almost-regular dense subgraph.

\rev{
Very recently, Miyauchi et al.~\cite{Miyauchi+23} studied the problem of finding a densest diverse subgraph in a graph 
whose vertices have different attribute values/types that they refer to as colors. 
Let $C$ be a set of colors. 
The input graph $G=(V,E)$ is associated with a color function $\ell:V\rightarrow C$ that assigns a color to each vertex. 
They proposed two novel optimization models motivated by different realistic scenarios. 
The first model, called the \emph{Densest Diverse Subgraph Problem} (DDSP), maximizes the density 
while guaranteeing that no color represents more than a given fraction $\alpha\in [1/|C|,1]$ of vertices in the output, which is a generalization of the above FDSP. 
Varying the fraction $\alpha$ enables to range the diversity constraint and to interpolate from a diverse dense subgraph where all colors have to be equally represented 
to an unconstrained densest subgraph. 
They designed a scalable $O(\sqrt{n})$-approximation algorithm for the case where $V$ is a feasible solution. 
The second model, called the \emph{Densest at-least-$\vv{k}$-Subgraph problem} (Dal$\vv{k}$S), is motivated by the setting where any specified color should not be overlooked. 
As its name suggests, Dal$\vv{k}$S is a generalization of Dal$k$S, where $\vv{k}$ represents cardinality demands with one coordinate per color class. 
They designed a $3$-approximation algorithm using LP together with an acceleration technique. 
Computational experiments demonstrate that densest subgraphs in real-world graphs have strong homophily in terms of colors, 
emphasizing the importance of considering the diversity, and the proposed algorithms contribute to extracting dense but diverse subgraphs. 
}

Tsourakakis et al.~\cite{tsourakakis2019novel} considered \rev{solving} DSP in edge-(multi)labeled networks with exclusion queries: given a \rev{graph} $G = (V,E)$ in which any edge \rev{is} associated \rev{with} a subset of labels in the label universe $L$, and an input set $l \subseteq L$, find the densest subgraph that contains only edges whose label is contained in $l$. To solve this problem, they resort to an heuristic based on the Densest Subgraph with Negative Weights problem (see \refsec{negative_weights}).

Rozenshtein et al.~\cite{rozenshtein2020mining} studied the problem of finding a dense edge-induced subgraph in an edge-labeled graph whose edges are similar to each other based on a given similarity function of the labels. The authors modeled the problem \rev{using the} objective function that is the sum of \rev{density and similarity terms}, and \rev{introduced} a Lagrangian relaxation problem, for which they \rev{proposed} an algorithm based on parametric min-cut, requiring $\bigO(m^3 \log m)$ time and $\bigO(m^2)$ space.

\subsection{Negatively weighted graphs}
\label{sec:negative_weights}
\rev{HDSP, reviewed in the previous section}, was defined on positive weights.
Recently some specific applications (e.g., networks derived from correlation matrices) are requiring to extract dense subgraphs from networks whose edges can be negatively weighted. When we allow negative weights, most of the algorithmic results for DSP no longer \rev{hold}. Cadena et al. \cite{cadena2016dense} were the first to analyze \rev{dense structures} in graphs with negative weights on edges. Building on top of \cite{tsourakakis2013denser} they defined the Generalized Optimal Quasi Clique (GOQC) problem as follows: given a \rev{graph} $G=(V,E)$, a weight function $w: E \rightarrow \mathbb{R}$ and a penalty function $\bar{\alpha}: E \rightarrow \mathbb{R}$, the goal is to find a subset of nodes $S$ that maximizes $f_{\bar{\alpha}}(S) = \sum_{\{u,v\} \in E[S]}(w(u,v) - \bar{\alpha}(u,v))$.
The authors showed that the problem is \NP-hard to approximate within a factor of $\bigO(n^{1/2-\epsilon})$, and proposed an algorithm composed by solving SDP \rev{by refining} the local search algorithm by \cite{tsourakakis2013denser}.

Tsourakakis et al. \cite{tsourakakis2019novel} tackled the Densest Subgraph with Negative Weights \rev{problem (DSNW)} defined as follows: given a graph $G=(V,E)$, a weight function $w: E \rightarrow \mathbb{R}$, the goal is to find a subset of nodes $S$ that maximizes $\sum_{\{u,v\} \in E[S]}w(u,v)/|S|$.
The authors proved that the problem is \NP-hard via a reduction based on the fact that the max-cut problem on graphs with possibly negative edges is \NP-hard.
In order to efficiently solve this problem, they analyzed the performance of the greedy peeling \rev{algorithm} for DSP, proving that for this problem it achieves an approximation of $\frac{\rho^*}{2}-\frac{\Delta}{2}$, where $\rho^*$ is the optimal value and $\Delta$ is the \rev{largest absolute value of the negative degrees}.

\subsection{Directed graphs}\label{subsec:directed}

Directed graphs are graphs in which for any edge it is also taken into account the directionality, meaning that any edge is defined by a source vertex $s$ and a target vertex $t$. Therefore, for two vertices $u$ and $v$, there can exist edges $(u,v)$ and $(v,u)$. In this context, the classical notion of density \rev{becomes} meaningless, since it does not take into account the directionality. The first notion of density for directed graphs was proposed by Kannan and Vinay \cite{Kannan}:
given a directed \rev{graph} $G=(V,E)$, the Directed Densest Subgraph problem (DDS) requires to find two set of nodes $S$ and $T$ that maximize $f(S,T) = |E(S,T)| / \sqrt[]{|S|\cdot|T|}$, where $E(S,T)$ is the set of edges having its source in $S$ and target in $T$. \rev{This density notion prioritizes pairs of sets of nodes} with a massive number of connections from one to the other; the square root at the denominator ensures that the output is not a single edge, which would take maximum without it.
\rev{The authors} proposed an $\bigO(\log n)$-approximation algorithm for DDS, by relating the objective function to the singular value of the adjacency matrix and applying Monte Carlo algorithm for its computation. Charikar~\cite{Charikar2000} obtained an exact \rev{algorithm} and a 2-approximation algorithm based on $\bigO(n^2)$ LPs. The latter resembles the greedy peeling for DSP running in $\bigO(n+m)$ \rev{time}, yielding \rev{the total time complexity} of $\bigO(n^2(n+m))$. For both algorithms it was \rev{also} observed that with $\bigO(\log n /\epsilon)$ execution of LPs, it is possible to output $(1+\epsilon)$- and  $(2+\epsilon)$-approximations, \rev{respectively}.
Khuller and Saha \cite{Khuller2009Dense} gave the first max-flow-based polynomial-time algorithm for DDS; likewise for DSP, this algorithm requires $\bigO(\log n)$ executions of max flow instances, or $\bigO(1)$ executions of parametrized maximum flow problem. \rev{The authors also} claimed to have obtained a 2-approximation algorithm with time complexity $\bigO(n+m)$, that was recently disproved by Ma et al. \cite{ma2021directed} by a counter-example, \rev{where} the authors reported an alternative 2-approximation algorithm provided by Ma~\cite{saha2029fix}, whose time complexity is $\bigO(n(n+m))$.
Bahmani et al. \cite{bahmani} developed a $2(1 + \epsilon)$-approximation algorithm, that ends in $\bigO(\log_{1+\epsilon}n)$ passes over the input graph.

Sawlani and Wang \cite{sawlani2020dynamic} reduced DDS to HDSP, relying on the knowledge of the ratio between sizes of $|S|$ and $|T|$ for the optimal solution. Since this quantity is unknown, they \rev{proved} that $O(\log n/\epsilon)$ \rev{solutions} of a $(1 + \epsilon/2)$-approximation algorithm for HDSP \rev{can guarantee} a $(1+\epsilon)$-approximate solution for DDS. Furthermore, the authors introduced different proposals for densest subgraph algorithms in \rev{a} fully dynamic setting for undirected graphs and vertex-weighted undirected graphs; therefore, their reduction of DDS to HDSP led to the first algorithm for DDS in \rev{a} fully dynamic setting, that maintains a $(1+\epsilon)$-approximate solution with worst-case time \rev{$O(\poly(\log n, \epsilon^{-1}))$} per update. Chekuri et al. \cite{Chekuri2022supermod} followed this reduction to adapt their DSP flow-based algorithm for DDS, that outputs a $(1+\epsilon)$-approximation in $\tilde \bigO(m /\epsilon^2)$ \rev{time}.

\rev{Ma et al. \cite{ma2021directed,ma2020efficient} obtained different results in this context.} Introducing the notion of $[x,y]$-core, the directed counterpart of the $k$-core notion for undirected graphs, they \rev{proved} that \rev{an optimal solution to DDS can be identified} through it with theoretical guarantees.
As a first direct consequence, these results enable the maximum-flow based exact algorithm to be executed only on a reduced version of the graph, composed by some $[x,y]$-cores, making it faster. Furthermore, they provided a divide-and-conquer strategy to carefully select the optimal value of $|S|/|T|$, that reduces the possible values from $n^2$ to $k$, with $k \ll n^2$.
Furthermore, they proved that a particular instance of $[x,y]$-core is a 2-\rev{approximate solution}, \rev{and thus proposed} an algorithm to find it that runs in $\bigO(\sqrt{m}(n+m))$.
Ma et al. \cite{ma2022convex} designed a new LP formulation for DDS, with a Frank-Wolfe based algorithm to optimize the associate dual. With this new LP, they were able to design a new algorithmic framework to reduce the number of LP instances to solve. More precisely, they provided a $(1+\epsilon)$-\rev{approximate solution} with \rev{$\bigO(T_\mathrm{FW}\log_{1+\epsilon}n)$}, where $T_\mathrm{FW}$ is the time complexity for solving a single instance of their LP.

\subsection{Multilayer networks}\label{subsec:multilayer}

Multilayer networks are a generalization of the ordinary (i.e., single-layer) graphs.
For positive integer $\ell$, let $[\ell]=\{1,2,\dots, \ell\}$.
Mathematically, a multilayer network is defined as a tuple $(V,(E_i)_{i\in [\ell]})$,
where $V$ is the set of vertices and each $E_i$ ($i=1,2,\dots, \ell$) is a set of edges on $V$.
That is, a multilayer network has a number of edge sets (called layers),
which may encode different types of connections and/or time-dependent connections over the same set of vertices.
As the density value of $S\subseteq V$ varies layer by layer,
there would be several ways to define the objective function of multilayer-network counterparts of DSP.
For $S\subseteq V$ and $i\in [\ell]$, we denote by $d_i(S)$ the density of $S$ in terms of the layer $i$. Jethava and Beerenwinkel~\cite{jethava2015relational} introduced the first optimization problem for dense subgraph discovery in multilayer networks,
which they referred to as the Densest Common Subgraph Problem (DCSP).
In the problem, given a multilayer network $G=(V,(E_i)_{i\in [\ell]})$, we seek a vertex subset $S\subseteq V$ that maximizes
the minimum density over layers, i.e., $\min_{i\in [\ell]} d_i(S)$.
They devised an LP-based polynomial-time heuristic and a $2\ell$-approximation algorithm based on the greedy peeling.
Reinthal et al.~\cite{reinthal2016finding} studied \rev{a} simplex method and \rev{an} interior-point method for the above LP 
and observed that employing the interior-point method can shorten the computation time in practice.
Galimberti et al.~\cite{galimberti2017core,galimberti2020core} introduced a generalization of DCSP,
which they refer to as the multilayer densest subgraph problem.
This problem aims at optimizing a trade-off between the minimum density value over layers and the number of layers having such a density value.

Later, Charikar et al.~\cite{charikar2018common} designed two combinatorial polynomial-time algorithms with approximation ratios
$O(\sqrt{n\log \ell})$ and $O(n^{2/3})$ (irrespective of $\ell$), respectively.
Moreover, they showed some strong inapproximability results for the problem, based on some assumptions.
Specifically, they showed that \rev{DCSP} is at least as hard to approximate as \textsc{MinRep}, a well-studied minimization version of \textsc{Label Cover}, which implies that the problem cannot be approximated to within a factor of $2^{\log^{1-\epsilon}n}$, unless $\NP \subseteq \text{DTIME}(n^{\textsf{polylog}(n)})$.
They also showed that if the planted dense subgraph conjecture is true, the problem cannot be approximated to within a factor of $n^{1/4-\epsilon}$ and even for $\ell=2$, the problem cannot be approximated to within $n^{1/8-\epsilon}$.

Recently, Hashemi et al.~\cite{hashemi2022firmcore} designed a sophisticated core decomposition algorithm for multilayer networks,
which they call the FirmCore decomposition algorithm.
For $k\in \mathbb{Z}_+$ and $\lambda\in [\ell]$, a subgraph $H=(S,(E_i[S])_{i\in [\ell]})$ is called a $(k,\lambda)$-FirmCore
if it is a maximal subgraph in which every vertex has degree no less than $k$ in the subgraph for at least $\lambda$ layers.
They devised a polynomial-time algorithm for finding the set of $(k,\lambda)$-FirmCores for all possible $k$ and $\lambda$.
They proved that the FirmCore decomposition unfolds an approximate solution to the multilayer densest subgraph problem,
with a better approximation ratio than that obtained by~\cite{galimberti2017core} for many instances.

Very recently, Kawase et al.~\cite{kawase2023stochastic} studied stochastic solutions to dense subgraph discovery in multilayer networks.
Their novel optimization problem asks to find a stochastic solution, i.e., a probability distribution over the family of vertex subsets, rather than a single vertex subset,
whereas it can also be used for obtaining a single vertex subset.
The quality of stochastic solutions is measured using the expectation of the following three metrics, the density, the robust ratio, and the regret,
on the layer selected by the adversary.
Therefore, their optimization problem can be seen as (a generalization of) the stochastic version of DCSP.
Unlike DCSP, their optimization problem can be solved exactly in polynomial time;
indeed, they designed an LP-based polynomial-time exact algorithm.
They proved that the output of the proposed algorithm has a useful structure;
the family of vertex subsets with positive probabilities has a hierarchical structure.
This leads to several practical benefits, e.g., the largest size subset contains all the other subsets and the optimal solution obtained by the algorithm has support size at most $n$.

Finally, we take a look at a very special case of multilayer networks called dual networks, i.e., the case of $\ell=2$ in multilayer networks.
Wu et al.~\cite{wu2015dual,Wu+16} introduced an optimization problem of detecting a dense and connected subgraph in dual networks:
given a dual network $G=(V,(E_1,E_2))$, we are asked to find $S\subseteq V$ that maximizes $d_1(S)$, i.e., the density on the first layer,
under the constraint that $(S,E_2[S])$ is connected.
They proved that the problem is \NP-hard and designed a scalable heuristic.
Later, Chen et al.~\cite{Chen+22} considered a variant of the problem,
where $k\in \mathbb{Z}_+$ is given as an additional input, and we seek $S\subseteq V$ that maximizes the minimum degree of vertices on the first layer,
under the constraint that $(S,E_2[S])$ is $k$-edge-connected, 
enabling to control the strength of connectivity on the second layer.
Owing to the use of the minimum degree, this problem can be solved exactly in polynomial time, unlike the above problem by Wu et al.~\cite{wu2015dual,Wu+16}.
\rev{Feng et al.~\cite{feng2021specgreedy} considred another variant, where we seek a vertex subset that 
maximizes a function prioritizing the density on the first layer while discarding some subsets that are dense even on the second layer.
}

Yang et al. \cite{yang2018mining} proposed the density contrast subgraph problem: Given \rev{a dual network, find a vertex subset that maximizes the difference between densities on the two layers}. They \rev{reduced} the maximum clique problem to their problem, proving that the problem is \NP-hard and cannot be approximated within $O(n^{1-\epsilon})$ for any $\epsilon > 0$. They \rev{solved} the problem via a variant of the greedy peeling, providing an $O(n)$-\rev{approximation}.
Lanciano et al. \cite{lanciano2020contrast} proposed a variant of this problem, by considering maximizing the difference in terms of the number of edges, subject to an input penalty term that controls the output of the solution. They showed that this problem can be mapped to an instance of GOQC (\rev{see} Section \ref{sec:negative_weights}).

\subsection{Temporal networks}
Temporal networks are a special case of multilayer networks, where the different layers \rev{correspond} to different time slices, so that the order of the layers is important. \rev{Another definition of} a temporal network is a graph, where each edge is \rev{associated with} a timestamp from a discrete temporal domain. 
More formally we can define a temporal network as a graph $G = (V, E, T)$ where $V$ denotes the set of vertices, $T = \{0,1,..., t_\text{max}\} \subset \mathbb{N}$ is the time domain, and $E \subseteq V \times V \times T$.

Bodganov et al. \cite{bogdanov2011heavy} were the first to address \rev{a variant of DSP} in temporal networks. Allowing the presence of negatively weighted edges, they defined the problem of finding the heaviest temporal subgraph, i.e., a set of nodes and a time-interval maximizing the edge's weights summation, and \rev{showed} that it is \NP-hard even with edge weights restricted in $\{-1, +1\}$.
The same problem has been recently tackled again by Ma et al. \cite{ma2020temporal}, that proposed a new heuristic \rev{algorithm}. 

Rozenshtein et al. \cite{rozenshtein2017dynamic} proposed to find the densest temporal subgraph in the latter representation of the temporal network, 
by imposing a constraint on the maximum number of timestamp to be included in the time-interval, and another on the maximum length of the span of the time-interval.
Lately, Rozenshtein et al. \cite{Rozenshtein2019segmentation} introduced the $k$-densest-episodes problem, where an episode is defined as a pair $S = \{I, H\}$ with $I = \{i_1, i_2\}$ representing the time-interval and $H \subseteq V$ as a set of nodes. The objective is to find the $k$ episodes that are densest in $k$ disjoint time-intervals.
The same problem has been recently tackled by Dondi and Hosseinzadeh \cite{Dondi2021}, that proposed an heuristic \rev{running} in $O(|T| + (k + \tau) t_\text{DSP})$, where $\tau$ is the maximum number of iterations and $t_\text{DSP}$ is \rev{the time complexity required for solving DSP}.

Angel et al.~\cite{Angel2013} tackled real-time story identification on Twitter via maintenance of densest subgraph in the fully dynamic setting.
Similarly, Bonchi et al. \cite{BonchiBGS16,BonchiBGS19} adopted anomalously dense subgraphs in temporal networks as a way to identify buzzing stories in social media.
To recognize a story as buzzing, it needs to have high density in the interactions (or co-occurrences) among 
all objects (terms or entities) therein and for all time instants in a temporal window.
Thus they defined the density of a subgraph in a given time interval as the minimum degree, among all vertices of the subgraph and all timestamps of the temporal window.
They showed that the problem of finding the densest subgraph, for a given time interval, can be solved exactly by the peeling algorithm to construct the core decomposition, and by returning the innermost core. 
Chu et al. \cite{chu2019online} defined the density bursting subgraph as a subgraph that accumulates its density at the fastest speed in a temporal network, according to their definition of the burstiness of a subgraph.

Semertzidis et al.~\cite{semertzidis2019finding} introduced another generalization of the densest common subgraph problem, called the Best Friends Forever (BFF) problem,
in the context of evolving graphs with a number of snapshots.
The BFF problem is a series of optimization problems that maximize an \emph{aggregate density} over snapshots,
where the aggregate density is set to be the average/minimum value of the average/minimum degree of vertices over layers.
Similarly to the multilayer densest subgraph problem,
they also considered the variant called the On--Off BFF ($\text{O}^2$BFF) problem, which only asks the output to be dense for a part of snapshots.
They investigated the computational complexity of the problems and designed some approximation \rev{and} heuristic algorithms.

Zhu et al. \cite{Zhu2022}
weighted each edge according to the number of time-stamps in which the edge exists and tackled the problem \rev{of maximizing} the density divided by the number of time-stamps in the relative time-interval, by imposing a constraint on the minimum number of time-stamps to take into account. For this problem, the authors designed \rev{exact and heuristic algorithms}.

The most recent contribution in this context is by Qin et al. \cite{qin2023periodic}, which defined a $\sigma$-periodic subgraph to be a subgraph whose occurrences are exactly $\sigma$ in the temporal graph, and the time difference between any pair of consequent occurrences is the same.
They proposed efficient strategies to prune the search space in temporal graphs, in order to run \rev{a max-flow algorithm for} an instance of reduced size.

\subsection{Uncertain graphs}
Zou~\cite{zou2013polynomial} studied the problem of extracting a \rev{vertex subset that} maximizes the \emph{expected density} from  an uncertain graph, i.e., $G=(V,E,p)$ and $p: E\rightarrow [0,1]$,
where $p(e)$ represents the probability of existence for each $e\in E$ \cite{PotamiasBGK10,BonchiGKV14}. 
\rev{The author} showed that this problem can be reduced to DSP on edge-weighted graphs,
and designed a polynomial-time exact algorithm based on the reduction.
\rev{Recently, Sun et al. \cite{sun2021efficient} applied a probabilistic truss indexing framework to the triangle DSP (see Section~\ref{subsec:numerator}) in uncertain graphs.}

Miyauchi and Takeda~\cite{miyauchi2018robust} considered the uncertainty of edge weights rather than the existence of edges.
To model that, they assumed that there is an edge-weight space $I=\times_{e\in E}[l_e,r_e]\subseteq \times_{e\in E}[0,\infty)$
that contains the unknown true edge weight $w$.
To evaluate the performance of $S\subseteq V$ without any concrete edge weight,
they employed a well-known measure in the field of robust optimization, called the robust ratio.
Intuitively, $S\subseteq V$ with a large robust ratio has a density close to the optimal value even on $G$ with the edge weight selected adversarially from $I$.
Using this, they introduced the robust densest subgraph problem:
given $G=(V,E)$ and $I=\times_{e\in E}[l_e,r_e]\subseteq \times_{e\in E}[0,\infty)$,
we are asked to find $S\subseteq V$ that maximizes the robust ratio under $I$.
They designed an algorithm that returns $S\subseteq V$ with the best possible robust ratio (except for the constant factor), under some mild condition. 
In addition, they also introduced the robust densest subgraph problem with sampling oracle,
where we have access to an oracle that accepts $e\in E$ and outputs a value drawn from a distribution on $[l_e, r_e]$
in which the expected value is equal to the unknown true edge weight, and designed a pseudo-polynomial-time algorithm with a strong quality guarantee.

Tsourakakis et al.~\cite{tsourakakis2019novel} introduced an optimization problem called the risk-averse dense subgraph discovery problem.
Here an uncertain graph is defined as a pair of $G=(V,E)$ and $(g_e(\theta_e))_{e\in E}$,
where the weight $w(e)$ of each edge $e\in E$ is drawn independently from the rest
according to some probability distribution $g_e$ with parameter $\theta_e$.
Each probability distribution $g_e$ is assumed to have finite mean $\mu_e$ and variance $\theta^2_e$.
Intuitively speaking, their risk-averse variant aims to find $S\subseteq V$
that maximizes $\frac{\sum_{e\in E[S]}\mu_e}{|S|}$ but minimizes $\frac{\sum_{e\in E[S]}\sigma^2_e}{|S|}$.
Tsourakakis et al.~\cite{tsourakakis2019novel} showed that this problem reduces to DSP on negatively-weighted graphs,
and designed an efficient approximation algorithm based on the reduction.

Recently, Kuroki et al.~\cite{Kuroki+20} pointed out that
the sampling procedure used in Miyauchi and Takeda~\cite{miyauchi2018robust},
where all edges are repeatedly queried by a sampling oracle that returns an individual edge weight,
is often quite costly or sometimes impossible.
To overcome this issue, \rev{the authors} introduced a novel framework called the densest subgraph bandits (DS bandits),
by incorporating the concept of stochastic combinatorial bandits~\cite{chen2013combinatorial,chen2014combinatorial} into DSP.
In DS bandits, a learner is given $G=(V,E)$ 
whose edge-weights are associated with unknown probability distributions.
During the exploration period, the learner chooses a subset of edges (rather than only single edge, unlike Miyauchi and Takeda~\cite{miyauchi2018robust}) to sample,
and observes the sum of noisy edge weights in a queried subset.
They investigated DS bandits with the objective of best arm identification;
that is, the learner must report one subgraph that (s)he believes to be optimal after the exploration period.
Their first algorithm has an upper bound on the number of samples required to identify a $(1+\epsilon)$-approximate solution
with probability at least $1-\delta$ for $\epsilon>0$ and $\delta \in (0,1)$.
Their second algorithm, based on the greedy peeling algorithm, is scalable and parameter free, 
which has a non-trivial upper bound on the probability that the density of the output is less than half of the optimal value.

The most recent contribution is due to Saha et al. \cite{Saha2022most}, 
that defined DSP in uncertain graphs with an alternative notion of density, called Densest Subgraph Probability (DSPr).
DSPr of $U \subseteq V$ is the summation of the probabilities of all \rev{realizations} in which $U$ represents the densest subgraph (with any \rev{notion of density}).
They designed the Most Probable Densest Subgraph (MPDS) problem considering edge density, clique density, pattern density, and their top-$k$ variants.
Their principal contribution is the edge density-based MPDS algorithm, that is built on independent sampling of possible worlds (e.g., via \rev{a} Monte-Carlo sampling) and \rev{an} efficient enumeration of all edge-densest subgraphs \rev{in each of them}.

\subsection{Hypergraphs}\label{subsec:hyper}
Hypergraphs are a generalization of graphs, consisting of a set $V$ of vertices and a set $E$ of \rev{hyperedges}, that are composed by an arbitrary number of vertices. Huang and Kahng \cite{Huang-Kahng95} formally introduced the Densest Subhypergraph problem (DSH), that given \rev{a} hypergraph $G=(V,E)$, requires to find $S\subseteq V$ that maximizes \rev{$d(S) = e[S]/|S|$, where $e[S]=|\{e\in E\mid e\subseteq S\}|$}, and proposed a \rev{polynomial-time} flow-based \rev{exact} algorithm.

\rev{
Very recently, Huang et al.~\cite{Huang+23} generalized DSH to the model with a seed set, 
inspired by the aforementioned ADS (see Section~\ref{subsec:seed}). 
Their model, called the Anchored Densest Subhypergraph (ADSH), is a common generalization of DSH and ADS, 
while the anchored node set is limited to the emptyset. 
Therefore, the model seeks a densest subhypergraph that is close to the given reference node set. 
Its noteworthy feature is the consideration of the so-called locality parameter, 
specifying how the output should be close to the given reference node set. 
The authors presented a flow-based exact algorithm for ADSH, 
based on that for DSP and the recent development of computing a minimum $s$--$t$ cut on hypergraphs~\cite{Veldt+22}. 
This result remains valid even for some generalizations of ADSH, including HDSP (see Section~\ref{subsec:labeled}), 
on hypergraphs with positive and even negative node weights. 
The main algorithmic contribution is to develop a strongly-local exact algorithm for ADSH, 
i.e., an algorithm outputting an optimal solution in time depending only on the parameters of the reference node set. 
The algorithm runs in an iterative fashion: in the first iteration, the algorithm deals with the subhypergraph consisting of the reference nodes and their neighbors, and the later iterations, it computes the minimum $s$--$t$ cut in a hypergraph constructed from the current subhypergraph and expands it using the information of the cut computed. 
}

\rev{Miyauchi et al. \cite{Miyauchi+15} \rev{studied} two general optimization models, in the context of advertising budget allocation, i.e., 
the maximum general-thresholds coverage problem, in which the densest $k$-subhypergraph 
(i.e., the variant of DSH in which the output size is constrained to be equal to $k$) falls, 
and its cost-effective counterpart, in which DSH itself falls. 
For the first model they proposed two different greedy algorithms, while for the second problem they designed a scalable approximation algorithm.
}

Hu et al. \cite{hu2017dynamicsub} \rev{generalized the results by Tsourakakis~\cite{Tsourakakis15}} to hypergraphs, 
by designing \rev{LP-based and flow-based algorithms}. 
Furthermore, they addressed the densest subhypergraph maintenance in \rev{a} dynamic setting, 
by providing two algorithms that maintain a $r(1+\epsilon)$-approximation in the case where there are only edge insertions, 
and a $r^2(1+\epsilon)$-approximation in \rev{a} fully dynamic setting, where $r$ is the rank of the hypergraph, i.e., $r=\max_{e\in E} |e|$. 
Chlamtac et al. \cite{Chlamtac+18} performed a theoretical analysis on \rev{the} densest $k$-subhypergraph problem and provided bounds over 3-uniform hypergraphs; 
Corinzia et al.~\cite{corinzia2022statistical} considered recovering the planted densest $k$-subhypergraph in a $d$-uniform hypergraph (i.e., \rev{a} hypergraph in which for any $e \in E$, $|e| = d$ holds) \rev{and provided} tight statistical bounds on recovering quality and algorithmic bounds based on approximate message passing algorithms.

\rev{The \rev{recent results} by Chekuri et al.~\cite{Chekuri2022supermod} also have generalization to hypergraphs, e.g., 
a fast $(1+\epsilon)$-approximation algorithm based on max flow.
Furthermore, a natural generalization of {\sc Greedy++} (in Section~\ref{subsubsec:iterative_peeling}) to hypergraphs provides a
$(1+\epsilon)$-approximation for DSH.}
The iterative algorithm proposed by Harb et al.~\cite{harb2022faster} \rev{also has a generalization to hypergraphs}; it provides an $\epsilon$-additive approximate solution for DSH in
$\bigO(\sqrt{pr\Delta} / \epsilon)$ iterations,
where each iteration requires $O(p\log r)$ time and admits some level of parallelization. Here $\Delta=\max_{v \in V} |\{e\in E : v \in e\}|$ and $p = \sum_{e \in E} |e|$.

Zhou et al. \cite{zhou2022extracting} \rev{introduced a generalization of DSH by considering the fact that} there might be \rev{some} hyperedges only partially included in the solution, \rev{which} were not counted in the objective function of DSH but might be relevant in \rev{some} specific applications. Therefore, they \rev{defined} a weighting scheme according to the number of vertices of any hyperedge included in the solution, and \rev{a} maximization problem based on it. For this problem they \rev{proposed} exact and $r$-approximation algorithms, where $r$ is the rank of the hypergraph.

Finally, Bera et al. \cite{bera2022dynamicsub} very recently designed a new algorithm for the dynamic setting. They improved the approximation ratio to $(1+\epsilon)$, making it independent \rev{of} the hypergraph rank, with a similar update time to that required in \cite{hu2017dynamicsub}.

\subsection{Metric spaces}\label{subsec:metricscpaces}
A metric space is a pair $(X, d)$, where $X$ is a set and $d: X \times X \rightarrow \mathbb{R}_+$ is a function called a metric, that assigns a non-negative real number, denoted by $d(x,y)$, to any two elements $x, y \in X$, satisfying the following properties:
$d(x, y) = 0 \text{ if and only if } x = y$ (identity), $d(x, y) = d(y, x)$ (symmetry), and $d(x, y) + d(y, z) \geq d(x, z)$ (triangle inequality). We can define a weighted graph in a metric space, where the vertices are the points of the space, and each pair of point is connected by an edge, weighted according to the distance function $d$. As the resulting graph would be complete, sparsification might be needed.
Esfandiari and Mitzenmacher \cite{esfandiari2018metric} designed a novel sampling approach for the edges of a graph defined in a metric space, using a sublinear number of queries, succeeding with probability at least $1-\bigO(1/n)$.
\rev{This} result implies that for DSP in $\lambda$-metric graphs\footnote{In a $\lambda$-metric graph the triangle inequality that holds is $d(x, y) + d(y, z) \geq \lambda d(x, z)$.}, they can provide a $(2+\epsilon)$-approximation algorithm requiring $\tilde{\bigO}(n / (\lambda \epsilon^2))$ time.

\section{Streaming and Distributed}
\label{sec:computational}

Bahmani et al.~\cite{bahmani} were the first to study DSP in a streaming scenario.
The authors designed algorithms for DSP, DDS (Section \ref{subsec:directed}), and Dal$k$S (Section \ref{sec:size}) under the semi-streaming model of computation, where the set of vertices is known ahead of time and can fit into the main memory, while the edges arrive one by one.
The proposed algorithm is based on the \rev{greedy peeling algorithm},
making $\bigO(\log_{1+\epsilon} n)$ passes over the data \rev{and outputting a $2(1+\epsilon)$-approximate solution} for DSP; the adapted version of the algorithm for the DDS provides the same guarantees.
A slightly modified version of this method gives a $3(1+\epsilon)$-approximation for Dal$k$S, always performing $\bigO(\log_{1+\epsilon} n)$ passes over the input.
The authors also proved that any
$p$-pass streaming
$\alpha$-approximation ($\alpha \geq 2$) algorithm for DSP needs
$\Omega(n / p\alpha^2)$
space.
The authors also demonstrated how the algorithm can be easily parallelized by providing a MapReduce implementation.

Tsourakakis~\cite{Tsourakakis15} exploited these algorithmic techniques and efficient triangle counting algorithms in MapReduce~\cite{suri2011Counting} to address the $k$-clique DSP.
The author gave a
$3(1+\epsilon)$-approximation algorithm
for the \rev{triangle} DSP requiring
$\bigO(\log_{1+\epsilon} n)$ rounds.
Shi \rev{et al.}~\cite{shi2021parallel} addressed DSP under the \textit{workspan model} \cite{jeje1992introduction,cormen2022introduction},
where
the work $W$ of an algorithm is the total number of operations, the span $S$ is the longest dependency path, and $P$ \rev{is the number of processors} available, 
and the time for executing a parallel computation is $\frac{W}{P}+S$.
The authors provided a parallel algorithm that
computes a
$k(1+\epsilon)$-\rev{approximate solution}
for \rev{the} $k$-clique DSP running in
$\bigO(m\alpha^{k-2})$ work \rev{and}
$\bigO(k\log^2 n)$ span w.h.p., \rev{using}
$\bigO(m+P\alpha)$ space, \rev{where} $\alpha$ is the arboricity of the input graph, i.e.,  the minimum number of spanning forests needed to cover the graph.

Das Sarma et al.~\cite{dassarma2012dynamic} were the first to study DSP in a fully decentralized distributed computing peer-to-peer network model, \rev{i.e.,} the CONGEST~\cite{peleg2000distributed} model.
The authors gave a distributed algorithm
that w.h.p. in
$\bigO(D\log_{1+\epsilon} n)$ time provides a
$(2+\epsilon)$-approximation
for DSP and \rev{a} 
$(3+\epsilon)$-approximation
for Dal$k$S, where $D$ is the diameter of the graph.

Bahmani et al. \cite{bahmani2014efficient} presented primal-dual algorithms, working in the MapReduce framework,
that provide a
$(1+\epsilon)$-approximation
for both DSP and DDS, and run in
$\bigO(m(\log n / \epsilon^2))$
time for DSP and $\bigO(m(\log^2 n /\epsilon^3))$ for DDS, by taking
$\bigO(\log n/\epsilon^2)$
MapReduce phases.
The total running time and shuffle size in each phase is $\bigO(m)$ and the reduce-key-complexity is $\bigO(\Delta)$, where $\Delta$ is the maximum degree of a vertex in the graph.
They provided an algorithm for DDS by combining the approach used for DSP with the LP formulation for DDS~\cite{Charikar2000}.

Ghaffari et al.~\cite{ghaffari2019improved} studied DSP under the Massively Parallel Computation (MPC) model, a theoretical abstraction suitable for MapReduce.
The authors provided an algorithm that w.h.p. in
$\bigO(\sqrt{\log n}\,\log \log n)$
rounds computes a
$(1+\epsilon)$-\rev{approximate solution}
for DSP.
The algorithm requires
$\tilde\bigO(n^\delta)$ memory per machine and a total memory of
$\tilde\bigO(\max\{m, n^{1+\delta}\})$ 
for an arbitrary constant
$\delta \in (0,1)$.
Epasto et al.~\cite{epasto2015dynamic} studied DSP in two dynamic graph models.
When edges can be added adversarially, the authors designed an algorithm that maintains at any point in time a
$2(1 + \epsilon)^2$-approximation
of the densest subgraph, performing
$\bigO(m \log^2 n /\epsilon^2)$
operations
and requiring
$\bigO(m+n)$ space. When edges can be added adversarially and removed uniformly at random, the authors proposed an algorithm that maintains at any point in time a
$2(1 + \epsilon)^6$-approximation
of the densest subgraph, performing
$\bigO\left(\frac{A\log A \log^2 n}{\epsilon^2} + \frac{R\log A \log^3 n}{\epsilon^4}\right)$
operations with high probability ($A$ and $R$ are the numbers of edge insertions and removals, respectively) and requiring
$\bigO\left(m+n\right)$ space.
For both algorithms, all results remain valid even in the presence of vertex insertions.
Ahmadian and Haddadan~\cite{ahmadian2021wedge} experimentally showed an improvement in the execution time by embedding this method in their framework at the cost of having a
$4(1 + \epsilon)^2$-approximation for DSP.

In the presence of both edge insertions and deletions,  Bhattacharya et al.~\cite{bhattacharya2015dynamic} designed an algorithm \rev{that provides} a $(4+\epsilon)$-approximation for DSP.
The algorithm requires $\tilde\bigO(n)$
space, \rev{and} it has an amortized time of $\tilde\bigO(1)$ for both insertion and deletion of an edge and a query time of $\tilde\bigO(1)$ for obtaining the declared approximation of the densest subgraph at any point in time.
The authors also \rev{showed} how increasing the query time to
$\tilde\bigO(n)$ yields a $(2+\epsilon)$-approximation for DSP.
When the graph is represented by an incident list, the proposed algorithm
provides a $(2+\epsilon)$-approximation for DSP
by reading $\tilde\bigO(n)$
edges, requiring
$\tilde\bigO(n)$
query time and $\tilde\bigO(n)$ space.
The authors also showed the tightness up to a poly-logarithmic factor of this running time \rev{by} providing a lower bound for the problem.
The authors also demonstrated the algorithm's applicability to the distributed streaming model defined in Cormode et al.~\cite{Cormode2010optimal},
showing that the algorithm computes a $(2+\epsilon)$-approximate solution for DSP using
$\tilde\bigO(k+n)$ bits of communications, where the space required by the coordinator is $\tilde\bigO(n)$ and the space required by each site is $\tilde\bigO(1)$.
The authors also extended the $(4+\epsilon)$-approximation algorithm for DSP to DDS, obtaining a
$(8+\epsilon)$-approximation; the extended algorithm requires $\tilde\bigO(m+n)$ space, an amortized time of $\tilde\bigO(1)$ for both insertion and deletion of an edge, and a query time of $\bigO(1)$.

McGregor et al.~\cite{mcgregor2015dynamic} \rev{provided} a single-pass algorithm that returns a $(1+\epsilon)$-approximation for DSP with high probability; their algorithm uses
$\tilde\bigO(n)$
space,
$\tilde\bigO(1)$
time for each edge insertion and deletion, and
$\bigO(\poly n)$
query time.
The space used by the algorithm matches the lower bound defined in \cite{bahmani} up to a poly-logarithmic factor for constant $\epsilon$. At the same time, the polynomial query time follows by using any exact algorithm for DSP on the subgraph generated by the algorithm.

Esfandiari et al.~\cite{esfandiari2016dynamic} also provided a single-pass algorithm that returns a $(1+\epsilon)$-approximation for DSP with high probability, 
using
$\tilde\bigO(m)$ space,
$\tilde\bigO(1)$
time for each edge insertion and deletion, and
$\bigO(\poly n)$
query time.
Their algorithm is based on sampling (and storing) edges from the input graph uniformly at random without replacement.
As in McGregor et al.~\cite{mcgregor2015dynamic}, even in this case polynomial query time follows by using any exact algorithm for DSP on the subgraph generated by the algorithm.
The authors also proved that using the greedy peeling algorithm at query time,
and using an oracle that provides direct access to a uniformly sampled edge,
the proposed algorithm gives a
$(2+\epsilon)$-approximation for DSP with high probability
running in
$\tilde\bigO(n)$ time.
The authors proved that the generality of their algorithmic framework provides a
$(1+\epsilon)$-approximation for DDS with high probability, 
using
$\tilde\bigO(n^{\frac{3}{2}})$ space,
$\tilde\bigO(1)$
time for each edge insertion and deletion, and
$\bigO(\poly n)$
query time for solving DDS exactly.

Su and Vu~\cite{su2020distributed} gave a distributed algorithm under the CONGEST~\cite{peleg2000distributed} model that w.h.p. in
$\bigO(D+(\log^4 n / \epsilon^4))$ time provides a
$(1+\epsilon)$-approximation
for DSP, where $D$ is the diameter of the graph.
Sawlani and Wang~\cite{sawlani2020dynamic}
gave a deterministic algorithm providing a
$(1+\epsilon)$-approximation for DSP,
with
$\bigO(\log^4 n / \epsilon^6)$
worst-case time per edge insertion or deletion, and
$\bigO(1)$ worst-case query time.
The \rev{authors also gave} a deterministic algorithm providing a
$(1+\epsilon)$-approximation for DDS,
with
$\bigO(\log^5 n / \epsilon^7)$
worst-case time per edge insertion or deletion, and
$\bigO(\log n / \epsilon)$ worst-case query time.
Moreover, for both DSP and DDS, at any point in time, the corresponding algorithms can output the approximate \rev{solution $S$} in 
$\bigO(|S| + \log n)$ time. 
Christiansen et al.~\cite{christiansen2022adaptive}  improved the result of Sawlani and Wang~\cite{sawlani2020dynamic}
\rev{by} giving a deterministic algorithm providing a
$(1+\epsilon)$-approximation for DSP,
with
$\bigO((\log^2 n \log \rho^*) / \epsilon^4)$
worst-case time per edge insertion or deletion,
where $\rho^*$ is the density of the densest subgraph, and
$\bigO(1)$ worst-case query time.
At any point in time, the algorithms can output the approximate \rev{solution $S$} in 
$\bigO(|S|)$ time. 
Independently from Christiansen et al.~\cite{christiansen2022adaptive}, Chekuri and Quanrud~\cite{chekuri2022dynamic} also improved the result of Sawlani and Wang~\cite{sawlani2020dynamic}
\rev{by} giving a deterministic algorithm providing a
$(1+\epsilon)$-approximation for DSP,
with
$\bigO(\log^2 n / \epsilon^4)$ amortized time and
$\bigO((\log^3 n \log\log n) / \epsilon^6)$
worst-case time per edge insertion or deletion,
and
$\bigO(1)$ worst-case query time.
At any point in time, the \rev{algorithm} can output the approximate solution $S$ in 
$\bigO(|S|)$ time. 
\rev{Building upon this work, Li and Quanrud~\cite{li2023approximate} recently proposed a fully dynamic algorithm that maintains a 
$(1+\epsilon)$-approximate 
solution for DDS in 
$\bigO((\log^3 n \log\log n)/\epsilon^6)$ 
amortized time or 
$\bigO((\log^4 n \log\log n )/\epsilon^7)$ worst-case time per edge insertion or deletion.
This is a slight improvement over Sawlani and Wang~\cite{sawlani2020dynamic}, when ignoring the $\log\log$ factors.}

Henzinger et al.~\cite{henzinger2022fine}
showed that for DSP
on graphs with a maximum degree less than or equal to seven, constant-degree graphs, expanders, and power-law graphs
there is no dynamic algorithm \rev{that} achieves both
$\bigO(n^{1/4-\epsilon})$ amortized edge update time and $\bigO(n^{1/2-\epsilon})$ amortized query time.
This conditional lower bound is weaker than the one for general graphs:  
$\bigO(n^{1/2-\epsilon})$ amortized edge update time and $\bigO(n^{1-\epsilon})$ amortized query time.

Chu et al.~\cite{chu2022hierarchical} presented a parallel algorithm that in time
$\bigO\left( n\sqrt{p} + m\alpha(n) + F\right)$
computes a $2$-approximate solution for DSP requiring $\bigO\left(n\right)$ additional space to the input storage, where $p$ is the number of threads, $F$ is the number of failures, and $\alpha(n)$ is the inverse of the Ackermann function.
The authors stated that for practical input, the dominant term is
$m\alpha(n) \leq 4m$;
consequently, the time complexity reduces to
$\bigO(m)$.

The first contribution tackling DSP on hypergraphs in a streaming scenario was by Hu~et~al.~\cite{hu2017dynamicsub}, which developed two dynamic algorithms for DSH (see Section \ref{subsec:hyper})  by extending the pioneering algorithm of~\cite{bahmani} to hypergraphs.
With only arbitrary edge insertions, the authors gave a dynamic algorithm providing a
$r(1+\epsilon)$-approximation for DSH, where $r$ is the rank of a hypergraph,
with
$\bigO\left(\poly\left(\frac{r}{\epsilon} \log n \right)\right)$ amortized time per edge insertion and $\bigO(n)$ extra space in addition to the input hypergraph.
When both arbitrary edge insertions and deletions are allowed, the authors gave a dynamic algorithm providing a
$r^2(1+\epsilon)$-approximation for DSH,
with
$\bigO\left(\poly\left(\frac{r}{\epsilon} \log n \right)\right)$ amortized time for each edge insertion and deletion and $\bigO\left(r m \poly\left(\frac{r}{\epsilon} \log n \right)\right)$ extra space in addition to the input hypergraph.

Bera et al.~\cite{bera2022dynamicsub} addressed DSH on \rev{weighted hypergraphs} in a streaming scenario \rev{and gave}
a randomized fully dynamic algorithm providing a
$(1+\epsilon)$-approximation for the problem.
Chekuri and Quanrud~\cite{chekuri2022dynamic} presented a deterministic algorithm providing a
$(1+\epsilon)$-approximation for DSH,
with
$\bigO\left(\frac{r^2 \log^2 n}{\epsilon^4}\right)$ amortized time and
$\bigO\left( \frac{r\log(n) \log(\rho^*)}{\epsilon^4} + \frac{r^2\log^3(n) \left(\log\log(n) + \log (\epsilon^{-1}) \right)}{\epsilon^6} \right)$
worst-case time per hyperedge insertion or deletion (where $\rho^*$ is the density of the densest subhypergraph),
and
$\bigO(1)$ worst-case query time.
At any point in time, the algorithm can output the approximate solution $S$ in $\bigO(|S|)$ time.

\section{Applications}
\label{sec:apps}
In this section we provide a brief and non-exhaustive, yet representative, coverage of
application domains in which DSP paved the way to interesting solutions for real-world problems.

\spara{Web and Social Networks.} Densest Subgraph has been usefully exploited in developing solutions for several problems related to the web and to social networks. 
\rev{Gibson et al.~\cite{gibson} and Dourisboure et al.~\cite{Dourisboure+07,DourisboureGP09} demonstrated that dense subgraph discovery algorithms are useful for extracting communities in the web graphs.}
\rev{Gajewar and Das Sarma~\cite{Gajewar+12} considered the problem of identifying a team of skilled individuals for collaboration, in which the goal is to maximize the collaborative compatibility of the team, and formulated it as a variant of DSP. 
}
Rangapuram et al. \cite{rangapuram2013team} \rev{also} addressed the team formation problem, in which there is need of selecting a set of employees to complete a certain task, under the constraint \rev{on} a cost function.
Angel et al.~\cite{Angel2013} studied real-time story identification on Twitter via maintenance of \rev{the} densest subgraph in the fully dynamic setting.
Similarly, Bonchi et al. \cite{BonchiBGS16,BonchiBGS19} adopted anomalously dense subgraphs in temporal networks as a way to identify buzzing stories in social media.
Hooi et al. \cite{hooi2016fraudar} approached fraudulent reviews detection via dense subgraph discovery in user-product bipartite graphs.
Kawase et al.~\cite{kawase2019crowd} improved the extraction of reliable experts for answer aggregation in the crowdsourcing framework.
Yikun et al.~\cite{yikun2019no} devised an algorithm for detecting fraudulent entities in tensors based on densest subgraphs.
Kim et al.~\cite{kim2020densely} modeled the community search problem taking into account spatial locations.
Tan et al.~\cite{tan2020scaling} proposed to employ the densest subgraph for candidate committers selection in open source communities.
Fazzone et al. \cite{Fazzone2022} adopted HDSP (see Section \ref{subsec:labeled}) to detect polarized niches in social networks, i.e., set of users that are far from authoritative sources of information and at the same time close to misinformation spreaders.

\spara{Biology.} In biology and in particular in ``omics'' disciplines, there are plenty of settings in which the data is represented with graphs \cite{koutrouli2020guide}. For instance, gene co-expression networks or Protein-Protein Interaction \rev{(PPI)} networks, are graphs built from correlation matrices: in such networks a dense subgraph can represent a set of genes/proteins that are regulating the same process.
Hu et al. \cite{hu2005coherent} defined algorithms for dense and coherent subgraphs across networks, in order to detect recurrent patterns across multiple networks to discover biological modules.
Fratkin et al. \cite{fratkin} exploited dense subgraphs to find regulatory motifs in genomic DNA, by creating a graph where vertices correspond to $k$-mers (sequence of $k$ DNA bases) and edges to $k$-mers that differ in few positions.
Everett et al. \cite{Everett} proposed to extract dense structures in networks composed by transcription factors, their putative target genes, and the tissues in which the target genes are differentially expressed; in this context, they defined a dense subgraph as a transcriptional module.
Saha et al. \cite{SahaHKRZ10} studied the problem of finding complex annotation patterns in gene annotation graphs. Given a distance metric between any pair of nodes, the densest subgraph respecting a specific minimum distance threshold is produced.
Feng et al. \cite{feng} combined the information brought by \rev{PPI} data and microarray gene expression profiles, and provided a densest-subgraph-based algorithm to identify protein complexes.
Li et al. \cite{Li2022densestbio} recently tested different algorithms for DSP (and related problems) for detecting hot spots in \rev{PPI} networks.
Lanciano et al. \cite{lanciano2022biocontrast} modeled the differential co-expression analysis problem with the detection of contrastive subgraphs in co-expression networks of different subtype of breast cancer.
Martini et al.~\cite{martini2022network} developed a dense subgraph searching method for jointly prioritizing putative causal genes for disease and selecting one biologically similar potential causal gene at each genetic risk locus.

\spara{Finance.} Another domain in which \rev{it} is natural to model the data with graphs is finance. Boginski et al.~\cite{BBP} exploited dense subgraphs to predict the behavior of financial instruments, through the lens of the maximum clique. In fact, correlating the stock trends and representing these with a graph, is possible to define a dense structure as a set of stocks whose trend is similar, and viceversa for an independent set.
Li et al. \cite{li2020money} proposed to detect money laundering, modeling the transactions with a multipartite directed graph.
Ren et al.~\cite{ren2021ensemfdet} employed densest subgraphs to design an ensemble method for fraud detection in e-commerce.
Jiang et al.~\cite{Jiang2022spade} developed a real-time fraud detection framework called Spade, which can detect fraudulent communities in hundreds of microseconds on million-scale evolving graphs by incrementally maintaining dense subgraphs.
Chen and Tsourakakis \cite{chen2022anti} approached fraud detection in financial networks by mining dense subgraphs deviating significantly from Benford's law, which describes the distribution of the first digit of numbers appearing in a wide variety of numerical data, and has been used to raise ``red flags'' about potential anomalies in the data such as tax evasion.
Ji et al. \cite{cash_out_2022} proposed to identify cash-out \rev{behaviors}, i.e., withdrawal of cash from a credit card by illegitimate payments with merchants, with densest subgraphs subject to the optimization of a class of suspiciousness metrics.
Xie et al.~\cite{xie2022orion} applied DSP algorithms inside their algorithmic proposal for zero-knowledge proof, a powerful cryptographic primitive that has found various applications.

\spara{Miscellanea.}
\rev{Chen and Saad~\cite{Chen+12} employed a dense subgraph discovery algorithm in the scenario of community detection, where the number of communities is unknown and some vertices may not belong to any of them.} 
Moro et al.~\cite{Moro2014entity} devised an algorithm for entity linking and word sense disambiguation, using densest subgraphs.
Rozenshtein et al.~\cite{rozenshtein2014event} detected interesting events in activity networks, by maximizing the density minus the distance between all the nodes included in the final solution. 
Different works also modeled the reachability and distance query indexing problem via the densest subgraph framework~\cite{CohenHKZ02,JinXRF09}. 
\rev{Shin et al.~\cite{Shin+16,Shin+17,Shin+17_2} developed efficient algorithms for detecting dense subtensors based on densest subgraphs.}
Kamara and Moataz~\cite{kamara2019computationally} implemented structured encryption schemes with computationally-secure leakage, based on the hardness results of the planted densest subgraph problem.
Lanciano et al.~\cite{lanciano2020contrast} proposed to detect the most contrastive subgraph in terms of density for different groups of brain networks.
Wu et al.~\cite{wu2021extracting} extracted densest subgraphs in brain networks, proposing a likelihood-based objective function, to identify brain regions associated to schizophrenia disorder.
Majbouri et al.~\cite{majbouri2020prediction} leveraged DSP to boost the prediction of information diffusion paths in social networks, while Bhadra and Bandyopadhyay~\cite{bhadra2021supervised} exploited it to perform a better feature selection.
Yan et al.~\cite{yan2021anomaly} performed anomaly detection of network streams via densest subgraphs.
DSP on vertex-weighted graphs is used for efficient distribution of quantum circuits by Sundaram et al.~\cite{sundaram2021efficient}.
Lusk et al.~\cite{lusk2021clipper} designed a generalized version of the maximum clique problem to perform robust data association.
Konar and Sidiropoulos~\cite{konar2022triangle} showed that the generalization TD$k$S of D$k$S is useful for unsupervised document summarization.
Sukeda \rev{et al.}~\cite{sukeda2022study} accelerated a column generation algorithm for a clustering problem called the modularity density maximization problem, using the greedy peeling algorithm for a variant of DSP.
Recently Chen et al.~\cite{chen2022algorithmic} introduced a novel framework for motif detection, whose algorithmic proposal for testing the statistical significance of a single motif is based on the greedy peeling for DSP. 
\rev{
Very recently, Ding and Du~\cite{Ding+23} addressed the problem of detecting the edge correlation between a pair of Erd\H{o}s--R\'enyi graphs, based on the observation that the detection problem is related to DSP. 
}

\section{Conclusions and Open Problems}
\label{sec:conclusions}

The Densest Subgraph Problem is a classic and fundamental problem in graph theory that has received a great deal of attention. In the last couple of years we have witnessed a renewed interest in the problem due to its relevance in emerging applications, which has led to several algorithmic breakthroughs. This revival of interest motivates the present survey, which offers a comprehensive overview of the Densest Subgraph Problem and its many variants, covering a long and rich literature spanning over five decades.

We have summarized the different techniques and algorithms used to solve the problem in its classical setup, including exact and approximation algorithms, and presented a detailed collection of natural variants that have emerged over the years, either with respect to the definition of the objective function, the introduction of different constraints or the typology of the input graph.
Derived from the classical definition of DSP, these variants possess various algorithmic properties that are advantageous, and they also share the elegant and intuitive building foundations of the original problem.
\rev{Our extensive investigations of both models and algorithms provide} a valuable resource for researchers seeking to solve related problems to the Densest Subgraph.
We hope that our survey will inspire new and innovative applications of the Densest Subgraph Problem, as well as serve as a useful reference for researchers in the field. While we acknowledge that gaps may exist in our coverage, we believe that the comprehensive nature of our review will be a valuable asset for future research in this area.

\spara{Open Problems.} Recent advancements have significantly enhanced the efficiency of Densest Subgraph extraction in its classical form, bringing the problem close to being fully resolved. However, several important open problems still require attention, and it is crucial that the scientific community continues to work towards addressing them.

An essential milestone for future research in Densest Subgraph mining is to establish the exact convergence of the solution provided by the iterative peeling method introduced in \cite{boob2020flowless} (Section \ref{subsubsec:iterative_peeling}). Defining this convergence would confirm the existence of a highly efficient method with strong guarantees for computing Densest Subgraphs in any existing graph.
The iterative peeling method has already shown tremendous promise in delivering excellent results and fast computation times. However, a clear understanding of its exact convergence properties is necessary to assess its performance and potential limitations fully.

There is still room for enhancing constrained versions of DSP, including D$k$S and Dal$k$S, which currently have approximation ratios of $O(n^{1/4+\epsilon})$ and $2$, respectively (Section \ref{sec:cons}). Improving these ratios would represent a significant advancement in the field, making it possible to solve these complex problems more accurately and efficiently.

Distance-based generalization of \rev{the} degree opens the door to many interesting problems. The approximation of the distance-$h$ densest subgraph (Section \ref{subsec:numerator}) by means of distance-generalized core decomposition, is only a first step which leaves wide open space for improvements.

While significant advancements have been made in negatively weighted graphs (Section \ref{sec:negative_weights}), the current approximation ratio is unsatisfactory as it depends \rev{heavily} on the input graphs. Improving this ratio would greatly benefit various applications that require this type of input, such as graphs derived from correlation matrices, and related problems like exclusion queries \cite{tsourakakis2019novel}.

Ensuring fairness \rev{and diversity} is a critical aspect that researchers are increasingly addressing, as it is fundamental to achieving equitable outcomes and combating algorithmic bias. However, the current results in this area are limited \rev{to \cite{anagnostopoulos2020fair,Miyauchi+23}}, underscoring the urgent need for the research community to focus more attention on this issue and contribute to improving upon these findings.

The literature on graphs defined on metric spaces has received relatively little attention (Section \ref{subsec:metricscpaces}), which leaves room for further improvement. These types of graphs are important in applications that rely on spatial data, such as image and video processing, transportation networks, and wireless sensor networks. Algorithms designed to work with these graphs can be crucial for advancing research in these fields.

\bibliographystyle{acm}
\bibliography{biblio}

\begin{thebibliography}{100}

\bibitem{ahmadian2021wedge}
{\sc Ahmadian, S., and Haddadan, S.}
\newblock The wedge picking model: A theoretical analysis of graph evolution
  caused by triadic closure and algorithmic implications.
\newblock {\em Journal of Strategic Innovation and Sustainability 16}, 3
  (2021), 74--93.

\bibitem{anagnostopoulos2020fair}
{\sc Anagnostopoulos, A., Becchetti, L., Fazzone, A., Menghini, C., and
  Schwiegelshohn, C.}
\newblock Spectral relaxations and fair densest subgraphs.
\newblock In {\em Proceedings of the 29th ACM International Conference on
  Information and Knowledge Management (CIKM)\/} (2020), pp.~35--44.

\bibitem{AndersenChellapilla}
{\sc Andersen, R., and Chellapilla, K.}
\newblock Finding dense subgraphs with size bounds.
\newblock In {\em Proceedings of the 6th International Workshop on Algorithms
  and Models for the Web-Graph (WAW)\/} (2009), pp.~25--37.

\bibitem{Angel2013}
{\sc Angel, A., Sarkas, N., Koudas, N., and Srivastava, D.}
\newblock Dense subgraph maintenance under streaming edge weight updates for
  real-time story identification.
\newblock {\em Proceedings of the VLDB Endowment 5}, 6 (2012), 574--585.

\bibitem{arora2012multiplicative}
{\sc Arora, S., Hazan, E., and Kale, S.}
\newblock The multiplicative weights update method: A meta-algorithm and
  applications.
\newblock {\em Theory of Computing 8}, 1 (2012), 121--164.

\bibitem{AKK95}
{\sc Arora, S., Karger, D., and Karpinski, M.}
\newblock Polynomial time approximation schemes for dense instances of
  {NP}-hard problems.
\newblock In {\em Proceedings of the 27th Annual ACM Symposium on Theory of
  Computing (STOC)\/} (1995), pp.~284--293.

\bibitem{Arrazola+18}
{\sc Arrazola, J., and Bromley, T.}
\newblock Using gaussian boson sampling to find dense subgraphs.
\newblock {\em Physical Review Letters 121\/} (2018), 030503.

\bibitem{Asahiro2000}
{\sc Asahiro, Y., Iwama, K., Tamaki, H., and Tokuyama, T.}
\newblock Greedily finding a dense subgraph.
\newblock {\em Journal of Algorithms 34}, 2 (2000), 203--221.

\bibitem{Backer2010constant}
{\sc Backer, J., and Keil, J.~M.}
\newblock Constant factor approximation algorithms for the densest $k$-subgraph
  problem on proper interval graphs and bipartite permutation graphs.
\newblock {\em Information Processing Letters 110}, 16 (2010), 635--638.

\bibitem{bahmani2014efficient}
{\sc Bahmani, B., Goel, A., and Munagala, K.}
\newblock Efficient primal-dual graph algorithms for {M}ap{R}educe.
\newblock In {\em Proceedings of the 11th International Workshop on Algorithms
  and Models for the Web-Graph (WAW)\/} (2014), Springer, pp.~59--78.

\bibitem{bahmani}
{\sc Bahmani, B., Kumar, R., and Vassilvitskii, S.}
\newblock Densest subgraph in streaming and {M}ap{R}educe.
\newblock {\em Proceedings of the VLDB Endowment 5}, 5 (2012), 454–465.

\bibitem{balalau2015topkoverlapping}
{\sc Balalau, O., Bonchi, F., Chan, T., Gullo, F., and Sozio, M.}
\newblock Finding subgraphs with maximum total density and limited overlap.
\newblock In {\em Proceedings of the 8th ACM International Conference on Web
  Search and Data Mining (WSDM)\/} (2015), pp.~379--388.

\bibitem{barman2018approximating}
{\sc Barman, S.}
\newblock Approximating {N}ash equilibria and dense subgraphs via an
  approximate version of {C}arath\'eodory's theorem.
\newblock {\em SIAM Journal on Computing 47}, 3 (2018), 960--981.

\bibitem{bera2022dynamicsub}
{\sc Bera, S., Bhattacharya, S., Choudhari, J., and Ghosh, P.}
\newblock A new dynamic algorithm for densest subhypergraphs.
\newblock In {\em Proceedings of the ACM Web Conference 2022 (TheWebConf)\/}
  (2022), pp.~1093--1103.

\bibitem{bhadra2021supervised}
{\sc Bhadra, T., and Bandyopadhyay, S.}
\newblock Supervised feature selection using integration of densest subgraph
  finding with floating forward–backward search.
\newblock {\em Information Sciences 566\/} (2021), 1--18.

\bibitem{Bhaskara+10}
{\sc Bhaskara, A., Charikar, M., Chlamtac, E., Feige, U., and Vijayaraghavan,
  A.}
\newblock Detecting high log-densities: {A}n ${O}(n^{1/4})$ approximation for
  densest $k$-subgraph.
\newblock In {\em Proceedings of the 42nd ACM Symposium on Theory of Computing
  (STOC)\/} (2010), pp.~201--210.

\bibitem{Bhaskara+12}
{\sc Bhaskara, A., Charikar, M., Guruswami, V., Vijayaraghavan, A., and Zhou,
  Y.}
\newblock Polynomial integrality gaps for strong sdp relaxations of densest
  $k$-subgraph.
\newblock In {\em Proceedings of the 23rd Annual ACM--SIAM Symposium on
  Discrete Algorithms (SODA)\/} (2012), pp.~388--405.

\bibitem{bhattacharya2015dynamic}
{\sc Bhattacharya, S., Henzinger, M., Nanongkai, D., and Tsourakakis, C.}
\newblock Space- and time-efficient algorithm for maintaining dense subgraphs
  on one-pass dynamic streams.
\newblock In {\em Proceedings of the 47th Annual ACM Symposium on Theory of
  Computing (STOC)\/} (2015), pp.~173--182.

\bibitem{Billionnet2009improving}
{\sc Billionnet, A., Elloumi, S., and Plateau, M.}
\newblock Improving the performance of standard solvers for quadratic 0-1
  programs by a tight convex reformulation: The {QCR} method.
\newblock {\em Discrete Applied Mathematics 157}, 6 (2009), 1185--1197.

\bibitem{Bocker+13}
{\sc B{\"o}cker, S., and Baumbach, J.}
\newblock Cluster editing.
\newblock In {\em The Nature of Computation. Logic, Algorithms, Applications\/}
  (2013), P.~Bonizzoni, V.~Brattka, and B.~L{\"o}we, Eds., pp.~33--44.

\bibitem{bogdanov2011heavy}
{\sc Bogdanov, P., Mongiovì, M., and Singh, A.}
\newblock Mining heavy subgraphs in time-evolving networks.
\newblock In {\em Proceedings of the 11th IEEE International Conference on Data
  Mining (ICDM)\/} (2011), pp.~81--90.

\bibitem{BBP}
{\sc Boginski, V., Butenko, S., and Pardalos, P.}
\newblock On structural properties of the market graph.
\newblock {\em Innovations in Financial and Economic Networks 48\/} (2003),
  29--35.

\bibitem{bombina2020convex}
{\sc Bombina, P., and Ames, B.}
\newblock Convex optimization for the densest subgraph and densest submatrix
  problems.
\newblock {\em SN Operations Research Forum 1\/} (2020), 1--24.

\bibitem{BonchiBGS16}
{\sc Bonchi, F., Bordino, I., Gullo, F., and Stilo, G.}
\newblock Identifying buzzing stories via anomalous temporal subgraph
  discovery.
\newblock In {\em 2016 {IEEE/WIC/ACM} International Conference on Web
  Intelligence, {WI} 2016, Omaha, NE, USA, October 13-16, 2016\/} (2016),
  pp.~161--168.

\bibitem{BonchiBGS19}
{\sc Bonchi, F., Bordino, I., Gullo, F., and Stilo, G.}
\newblock The importance of unexpectedness: Discovering buzzing stories in
  anomalous temporal graphs.
\newblock {\em Web Intell. 17}, 3 (2019), 177--198.

\bibitem{bonchi_cc}
{\sc Bonchi, F., Garc{\'{\i}}a-Soriano, D., and Gullo, F.}
\newblock Correlation clustering.
\newblock {\em Synthesis Lectures on Data Mining and Knowledge Discovery 12}, 1
  (mar 2022), 1--149.

\bibitem{bonchi2021finding}
{\sc Bonchi, F., Garc{\'\i}a-Soriano, D., Miyauchi, A., and Tsourakakis, C.}
\newblock Finding densest $k$-connected subgraphs.
\newblock {\em Discrete Applied Mathematics 305\/} (2021), 34--47.

\bibitem{BonchiGKV14}
{\sc Bonchi, F., Gullo, F., Kaltenbrunner, A., and Volkovich, Y.}
\newblock Core decomposition of uncertain graphs.
\newblock In {\em Proceedings of the 20th ACM SIGKDD International Conference
  on Knowledge Discovery and Data Mining (KDD)\/} (2014), pp.~1316--1325.

\bibitem{BonchiKS19}
{\sc Bonchi, F., Khan, A., and Severini, L.}
\newblock Distance-generalized core decomposition.
\newblock In {\em Proceedings of the 2019 International Conference on
  Management of Data (SIGMOD)\/} (2019), pp.~1006--1023.

\bibitem{boob2020flowless}
{\sc Boob, D., Gao, Y., Peng, R., Sawlani, S., Tsourakakis, C., Wang, D., and
  Wang, J.}
\newblock Flowless: Extracting densest subgraphs without flow computations.
\newblock In {\em Proceedings of the ACM Web Conference 2020 (TheWebConf)\/}
  (2020), pp.~573--583.

\bibitem{boob2019faster}
{\sc Boob, D., Sawlani, S., and Wang, D.}
\newblock Faster width-dependent algorithm for mixed packing and covering lps.
\newblock In {\em Proceedings of the 33rd Conference on Neural Information
  Processing Systems (NeurIPS)\/} (2019), pp.~15253--15262.

\bibitem{Bourgeois2013exact}
{\sc Bourgeois, N., Giannakos, A., Lucarelli, G., Milis, I., and Paschos, V.}
\newblock Exact and approximation algorithms for densest $k$-subgraph.
\newblock In {\em Proceedings of the 7th International Workshop on Algorithms
  and Computation (WALCOM)\/} (2013), pp.~114--125.

\bibitem{braverman2017eth}
{\sc Braverman, M., Ko, Y., Rubinstein, A., and Weinstein, O.}
\newblock {ETH} hardness for densest-$k$-subgraph with perfect completeness.
\newblock In {\em Proceedings of the 28th Annual ACM--SIAM Symposium on
  Discrete Algorithms (SODA)\/} (2017), pp.~1326--1341.

\bibitem{Broersma2013tight}
{\sc Broersma, H., Golovach, P., and Patel, V.}
\newblock Tight complexity bounds for {FPT} subgraph problems parameterized by
  the clique-width.
\newblock {\em Theoretical Computer Science 485\/} (2013), 69--84.

\bibitem{bulucc2016recent}
{\sc Bulu{\c{c}}, A., Meyerhenke, H., Safro, I., Sanders, P., and Schulz, C.}
\newblock Recent advances in graph partitioning.
\newblock In {\em Algorithm Engineering}, L.~Kliemann and P.~Sanders, Eds.
  Springer, 2016, pp.~117--158.

\bibitem{cadena2016dense}
{\sc Cadena, J., Vullikanti, A.~K., and A., C.~C.}
\newblock On dense subgraphs in signed network streams.
\newblock In {\em Proceedings of the 16th IEEE International Conference on Data
  Mining (ICDM)\/} (2016), pp.~51--60.

\bibitem{Cai2008parameterized}
{\sc Cai, L.}
\newblock Parameterized complexity of cardinality constrained optimization
  problems.
\newblock {\em The Computer Journal 51}, 1 (2008), 102--121.

\bibitem{chang2018cohesive}
{\sc Chang, L., and Qin, L.}
\newblock {\em Cohesive Subgraph Computation over Large Sparse Graphs:
  Algorithms, Data Structures, and Programming Techniques}.
\newblock Springer Series in the Data Sciences. Springer, 2018.

\bibitem{chang2020hardness}
{\sc Chang, S., Chen, L., Hung, L., Kao, S., and Klasing, R.}
\newblock The hardness and approximation of the densest k-subgraph problem in
  parameterized metric graphs.
\newblock In {\em Proceedings of the 2020 International Computer Symposium
  (ICS)\/} (2020), pp.~126--130.

\bibitem{Charikar2000}
{\sc Charikar, M.}
\newblock Greedy approximation algorithms for finding dense components in a
  graph.
\newblock In {\em Proceedings of the 3rd International Workshop on
  Approximation Algorithms for Combinatorial Optimization (APPROX)\/} (2000),
  pp.~84--95.

\bibitem{charikar2018common}
{\sc Charikar, M., Naamad, Y., and Wu, J.}
\newblock On finding dense common subgraphs.
\newblock {\em arXiv preprint arXiv:1802.06361\/} (2018).

\bibitem{chekuri2022dynamic}
{\sc Chekuri, C., and Quanrud, K.}
\newblock $(1-\epsilon)$-approximate fully dynamic densest subgraph: linear
  space and faster update time.
\newblock {\em arXiv preprint arXiv:2210.02611\/} (2022).

\bibitem{Chekuri2022supermod}
{\sc Chekuri, C., Quanrud, K., and Torres, M.}
\newblock Densest subgraph: Supermodularity, iterative peeling, and flow.
\newblock In {\em Proceedings of the 2022 {ACM-SIAM} Symposium on Discrete
  Algorithms, {SODA}\/} (2022), {SIAM}, pp.~1531--1555.

\bibitem{chekuri2023generalized}
{\sc Chekuri, C., and Torres, M.}
\newblock On the generalized mean densest subgraph problem: Complexity and
  algorithms.
\newblock {\em arXiv preprint arXiv:2306.02172\/} (2023).

\bibitem{Chen2010densest}
{\sc Chen, D., Fleischer, R., and Li, J.}
\newblock Densest $k$-subgraph approximation on intersection graphs.
\newblock In {\em Proceedings of the 8th International Workshop on
  Approximation and Online Algorithms (WAOA)\/} (2010), pp.~83--93.

\bibitem{Chen+12}
{\sc Chen, J., and Saad, Y.}
\newblock Dense subgraph extraction with application to community detection.
\newblock {\em IEEE Transactions on Knowledge and Data Engineering 24}, 7
  (2012), 1216--1230.

\bibitem{chen2014combinatorial}
{\sc Chen, S., Lin, T., King, I., Lyu, M., and Chen, W.}
\newblock Combinatorial pure exploration of multi-armed bandits.
\newblock In {\em Proceedings of the 28th Conference on Neural Information
  Processing Systems (NIPS)\/} (2014), pp.~379--387.

\bibitem{Chen+22}
{\sc Chen, T., Bonchi, F., Garc{\'{\i}}a{-}Soriano, D., Miyauchi, A., and
  Tsourakakis, C.}
\newblock Dense and well-connected subgraph detection in dual networks.
\newblock In {\em Proceedings of the 2022 SIAM International Conference on Data
  Mining (SDM)\/} (2022), pp.~361--369.

\bibitem{chen2022algorithmic}
{\sc Chen, T., Matejek, B., Mitzenmacher, M., and Tsourakakis, C.}
\newblock Algorithmic tools for understanding the motif structure of networks.
\newblock In {\em Proceedings of the European Conference on Machine Learning
  and Principles and Practice of Knowledge Discovery in Databases (ECML
  PKDD)\/} (2022).

\bibitem{chen2022anti}
{\sc Chen, T., and Tsourakakis, C.}
\newblock Anti{B}enford subgraphs: Unsupervised anomaly detection in financial
  networks.
\newblock In {\em Proceedings of the 28th ACM SIGKDD Conference on Knowledge
  Discovery and Data Mining (KDD)\/} (2022), pp.~2762--2770.

\bibitem{chen2013combinatorial}
{\sc Chen, W., Wang, Y., and Yuan, Y.}
\newblock Combinatorial multi-armed bandit: General framework and applications.
\newblock In {\em Proceedings of the 30th International Conference on Machine
  Learning (ICML)\/} (2013), pp.~151--159.

\bibitem{Cheriyan96}
{\sc Cheriyan, J., Hagerup, T., and Mehlhorn, K.}
\newblock An $o(n^3)$-time maximum-flow algorithm.
\newblock {\em SIAM Journal on Computing 25}, 6 (1996), 1144--1170.

\bibitem{Chesler+06}
{\sc Chesler, E., and Langston, M.}
\newblock Combinatorial genetic regulatory network analysis tools for high
  throughput transcriptomic data.
\newblock In {\em Proceedings of the Joint Annual RECOMB 2005 Satellite
  Workshops on Systems Biology and on Regulatory Genomics\/} (2005),
  pp.~150--165.

\bibitem{Chlamtac+18}
{\sc Chlamt{\'{a}}c, E., Dinitz, M., Konrad, C., Kortsarz, G., and Rabanca, G.}
\newblock The densest $k$-subhypergraph problem.
\newblock {\em {SIAM} Journal on Discrete Mathematics 32}, 2 (2018),
  1458--1477.

\bibitem{christiansen2022adaptive}
{\sc Christiansen, A., Holm, J., van~der Hoog, I., Rotenberg, E., and
  Schwiegelshohn, C.}
\newblock Adaptive out-orientations with applications.
\newblock {\em arXiv preprint arXiv:2209.14087\/} (2022).

\bibitem{chu2022hierarchical}
{\sc Chu, D., Zhang, F., Zhang, W., Lin, X., and Zhang, Y.}
\newblock Hierarchical core decomposition in parallel: From construction to
  subgraph search.
\newblock In {\em Proceedings of the 38th IEEE International Conference on Data
  Engineering (ICDE)\/} (2022), pp.~1138--1151.

\bibitem{chu2019online}
{\sc Chu, L., Zhang, Y., Yang, Y., Wang, L., and Pei, J.}
\newblock Online density bursting subgraph detection from temporal graphs.
\newblock {\em Proceedings of the VLDB Endowment 12}, 13 (2019), 2353--2365.

\bibitem{Chuzhoy+23}
{\sc Chuzhoy, J., Dalirrooyfard, M., Grinberg, V., and Tan, Z.}
\newblock A new conjecture on hardness of 2-{CSP}’s with implications to
  hardness of densest $k$-subgraph and other problems.
\newblock In {\em Proceedings of the 14th Conference on Innovations in
  Theoretical Computer Science (ITCS)\/} (2023), pp.~38:1--38:23.

\bibitem{CohenHKZ02}
{\sc Cohen, E., Halperin, E., Kaplan, H., and Zwick, U.}
\newblock Reachability and distance queries via 2-hop labels.
\newblock In {\em Proceedings of the 13th Annual ACM--SIAM Symposium on
  Discrete Algorithms (SODA)\/} (2002), pp.~937--946.

\bibitem{corinzia2022statistical}
{\sc Corinzia, L., Penna, P., Szpankowski, W., and Buhmann, J.}
\newblock Statistical and computational thresholds for the planted $k$-densest
  sub-hypergraph problem.
\newblock In {\em Proceedings of the 25th International Conference on
  Artificial Intelligence and Statistics (AISTATS)\/} (2022), pp.~11615--11640.

\bibitem{cormen2022introduction}
{\sc Cormen, T., Leiserson, C., Rivest, R., and Stein, C.}
\newblock {\em Introduction to Algorithms}.
\newblock MIT Press, 2022.
\newblock 4th edition.

\bibitem{Cormode2010optimal}
{\sc Cormode, G., Muthukrishnan, S., Yi, K., and Zhang, Q.}
\newblock Optimal sampling from distributed streams.
\newblock In {\em Proceedings of the 29th ACM SIGMOD--SIGACT--SIGART Symposium
  on Principles of Database Systems (PODS)\/} (2010), pp.~77--86.

\bibitem{Cygan2015parameterized}
{\sc Cygan, M., Fomin, F., Kowalik, L., Lokshtanov, D., Marx, D., Pilipczuk,
  M., Pilipczuk, M., and Saurabh, S.}
\newblock {\em Parameterized Algorithms}.
\newblock Springer, 2015.

\bibitem{dai2022anchored}
{\sc Dai, Y., Qiao, M., and Chang, L.}
\newblock Anchored densest subgraph.
\newblock In {\em Proceedings of the 2022 International Conference on
  Management of Data (SIGMOD)\/} (2022), pp.~1200--1213.

\bibitem{dassarma2012dynamic}
{\sc Das~Sarma, A., Lall, A., Nanongkai, D., and Trehan, A.}
\newblock Dense subgraphs on dynamic networks.
\newblock In {\em Proceedings of the 26th International Symposium on
  Distributed Computing (DISC)\/} (2012), pp.~151--165.

\bibitem{Ding+23}
{\sc Ding, J., and Du, H.}
\newblock Detection threshold for correlated {E}rd{\H o}s-{R}\'enyi graphs via
  densest subgraph.
\newblock {\em IEEE Transactions on Information Theory 69}, 8 (2023),
  5289--5298.

\bibitem{Dinkelbach67}
{\sc Dinkelbach, W.}
\newblock On nonlinear fractional programming.
\newblock {\em Management science 13}, 7 (1967), 492--498.

\bibitem{Dondi2021}
{\sc Dondi, R., and Hosseinzadeh, M.}
\newblock Dense sub-networks discovery in temporal networks.
\newblock {\em {SN} Computer Science 2}, 3 (2021).

\bibitem{Dourisboure+07}
{\sc Dourisboure, Y., Geraci, F., and Pellegrini, M.}
\newblock Extraction and classification of dense communities in the web.
\newblock In {\em Proceedings of the 16th International Conference on World
  Wide Web (WWW)\/} (2007), WWW '07, p.~461–470.

\bibitem{DourisboureGP09}
{\sc Dourisboure, Y., Geraci, F., and Pellegrini, M.}
\newblock Extraction and classification of dense implicit communities in the
  web graph.
\newblock {\em ACM Transactions on the Web 3}, 2 (2009), 1--36.

\bibitem{epasto2015dynamic}
{\sc Epasto, A., Lattanzi, S., and Sozio, M.}
\newblock Efficient densest subgraph computation in evolving graphs.
\newblock In {\em Proceedings of the 24th International Conference on World
  Wide Web (WWW)\/} (2015), pp.~300--310.

\bibitem{esfandiari2016dynamic}
{\sc Esfandiari, H., Hajiaghayi, M., and Woodruff, D.}
\newblock Brief announcement: Applications of uniform sampling: Densest
  subgraph and beyond.
\newblock In {\em Proceedings of the 28th ACM Symposium on Parallelism in
  Algorithms and Architectures (SPAA)\/} (2016), pp.~397--399.

\bibitem{esfandiari2018metric}
{\sc Esfandiari, H., and Mitzenmacher, M.}
\newblock Metric sublinear algorithms via linear sampling.
\newblock In {\em Proceedings of the 2018 IEEE Annual Symposium on Foundations
  of Computer Science (FOCS)\/} (2018), pp.~11--22.

\bibitem{Everett}
{\sc Everett, L., Wang, L., and Hannenhalli, S.}
\newblock Dense subgraph computation via stochastic search: Application to
  detect transcriptional modules.
\newblock {\em Bioinformatics 22}, 14 (2006), e117--e123.

\bibitem{fang2022densest}
{\sc Fang, Y., Luo, W., and Ma, C.}
\newblock Densest subgraph discovery on large graphs: Applications, challenges,
  and techniques.
\newblock {\em Proceedings of the VLDB Endowment 15}, 12 (2022), 3766--3769.

\bibitem{fang2022cohesive}
{\sc Fang, Y., Wang, K., Lin, X., and Zhang, W.}
\newblock {\em Cohesive Subgraph Search over Large Heterogeneous Information
  Networks}.
\newblock SpringerBriefs in Computer Science. Springer, 2022.

\bibitem{Fang2019Efficient}
{\sc Fang, Y., Yu, K., Cheng, R., Lakshmanan, L., and Lin, X.}
\newblock Efficient algorithms for densest subgraph discovery.
\newblock {\em Proceedings of the VLDB Endowment 12}, 11 (2019), 1719--1732.

\bibitem{farago2019survey}
{\sc Faragó, A., and Mojaveri, Z.}
\newblock In search of the densest subgraph.
\newblock {\em Algorithms 12}, 8 (2019).

\bibitem{Fazzone2022}
{\sc Fazzone, A., Lanciano, T., Denni, R., Tsourakakis, C., and Bonchi, F.}
\newblock Discovering polarization niches via dense subgraphs with attractors
  and repulsers.
\newblock {\em Proc. VLDB Endow. 15}, 13 (sep 2022), 3883–3896.

\bibitem{Feige02}
{\sc Feige, U.}
\newblock Relations between average case complexity and approximation
  complexity.
\newblock In {\em Proceedings of the 34th Annual ACM Symposium on Theory of
  Computing (STOC)\/} (2002), pp.~534--543.

\bibitem{FPK01}
{\sc Feige, U., Kortsarz, G., and Peleg, D.}
\newblock The dense $k$-subgraph problem.
\newblock {\em Algorithmica 29}, 3 (2001), 410--421.

\bibitem{FL01}
{\sc Feige, U., and Langberg, M.}
\newblock Approximation algorithms for maximization problems arising in graph
  partitioning.
\newblock {\em Journal of Algorithms 41}, 2 (2001), 174--211.

\bibitem{Feige1997densest}
{\sc Feige, U., and Seltser, M.}
\newblock On the densest $k$-subgraph problem.
\newblock Tech. rep., Department of Applied Math and Computer Science, The
  Weizmann Institute, 1997.

\bibitem{feng}
{\sc Feng, J., Jiang, R., and Jiang, T.}
\newblock A max-flow-based approach to the identification of protein complexes
  using protein interaction and microarray data.
\newblock {\em IEEE/ACM Transactions on Computational Biology and
  Bioinformatics 8}, 3 (2011), 621--634.

\bibitem{feng2021specgreedy}
{\sc Feng, W., Liu, S., Koutra, D., Shen, H., and Cheng, X.}
\newblock Spec{G}reedy: Unified dense subgraph detection.
\newblock In {\em Proceedings of the European Conference on Machine Learning
  and Principles and Practice of Knowledge Discovery in Databases (ECML
  PKDD)\/} (2020), pp.~181--197.

\bibitem{fortunato2010community}
{\sc Fortunato, S.}
\newblock Community detection in graphs.
\newblock {\em Physics reports 486}, 3-5 (2010), 75--174.

\bibitem{fratkin}
{\sc Fratkin, E., Naughton, B., Brutlag, D., and Batzoglou, S.}
\newblock {MotifCut}: Regulatory motifs finding with maximum density subgraphs.
\newblock {\em Bioinformatics 22}, 14 (2006), e150--e157.

\bibitem{fujishige2005submodular}
{\sc Fujishige, S.}
\newblock {\em Submodular Functions and Optimization}, vol.~58 of {\em Annals
  of Discrete Mathematics}.
\newblock Elsevier, 2005.

\bibitem{Gajewar+12}
{\sc Gajewar, A., and Das~Sarma, A.}
\newblock Multi-skill collaborative teams based on densest subgraphs.
\newblock In {\em Proceedings of the 2012 SIAM International Conference on Data
  Mining (SDM)\/} (2012), pp.~165--176.

\bibitem{galimberti2017core}
{\sc Galimberti, E., Bonchi, F., and Gullo, F.}
\newblock Core decomposition and densest subgraph in multilayer networks.
\newblock In {\em Proceedings of the 26th ACM International Conference on
  Information and Knowledge Management (CIKM)\/} (2017), pp.~1807--1816.

\bibitem{galimberti2020core}
{\sc Galimberti, E., Bonchi, F., Gullo, F., and Lanciano, T.}
\newblock Core decomposition in multilayer networks: Theory, algorithms, and
  applications.
\newblock {\em ACM Transactions on Knowledge Discovery from Data 14}, 1 (2020),
  1--40.

\bibitem{gallo1989fast}
{\sc Gallo, G., Grigoriadis, M., and Tarjan, R.}
\newblock A fast parametric maximum flow algorithm and applications.
\newblock {\em SIAM Journal on Computing 18}, 1 (1989), 30--55.

\bibitem{gao2022colorful}
{\sc Gao, S., Li, R., Qin, H., Chen, H., Yuan, Y., and Wang, G.}
\newblock Colorful $h$-star core decomposition.
\newblock In {\em Proceedings of the 38th IEEE International Conference on Data
  Engineering (ICDE)\/} (2022), pp.~2588--2601.

\bibitem{ghaffari2019improved}
{\sc Ghaffari, M., Lattanzi, S., and Mitrovi{\'c}, S.}
\newblock Improved parallel algorithms for density-based network clustering.
\newblock In {\em Proceedings of the 36th International Conference on Machine
  Learning (ICML)\/} (2019), pp.~2201--2210.

\bibitem{gibson}
{\sc Gibson, D., Kumar, R., and Tomkins, A.}
\newblock Discovering large dense subgraphs in massive graphs.
\newblock In {\em Proceedings of the 31st International Conference on Very
  Large Data Bases (VLDB)\/} (2005), pp.~721--732.

\bibitem{gionis2015dense}
{\sc Gionis, A., and Tsourakakis, C.}
\newblock Dense subgraph discovery: {KDD} 2015 tutorial.
\newblock In {\em Proceedings of the 21th ACM SIGKDD International Conference
  on Knowledge Discovery and Data Mining (KDD)\/} (2015), pp.~2313--2314.

\bibitem{Goemans1996mathematical}
{\sc Goemans, M.}
\newblock Mathematical programming and approximation algorithms.
\newblock {\em Lecture at the Summer School on Approximate Solution of Hard
  Combinatorial Problems, Udine\/} (1996).

\bibitem{goldberg1984finding}
{\sc Goldberg, A.~V.}
\newblock Finding a maximum density subgraph.
\newblock Tech. rep., University of California at Berkeley, 1984.

\bibitem{Goldstein2009dense}
{\sc Goldstein, D., and Langberg, M.}
\newblock The dense k subgraph problem.
\newblock {\em arXiv preprint arXiv:0912.5327\/} (2009).

\bibitem{gonzales2019densest}
{\sc Gonzales, S., and Migler, T.}
\newblock The densest $k$ subgraph problem in $b$-outerplanar graphs.
\newblock In {\em Proceedings of the 8th International Conference on Complex
  Networks and Their Applications (COMPLEX NETWORKS)\/} (2019), pp.~116--127.

\bibitem{groetschel1981ellipsoid}
{\sc Gr{\"{o}}tschel, M., Lov{\'{a}}sz, L., and Schrijver, A.}
\newblock The ellipsoid method and its consequences in combinatorial
  optimization.
\newblock {\em Combinatorica 1}, 2 (1981), 169--197.

\bibitem{gudapati2021greedy}
{\sc Gudapati, N., Malaguti, E., and Monaci, M.}
\newblock In search of dense subgraphs: How good is greedy peeling?
\newblock {\em Networks 77}, 4 (2021), 572--586.

\bibitem{Hanaka2023computing}
{\sc Hanaka, T.}
\newblock Computing densest $k$-subgraph with structural parameters.
\newblock {\em Journal of Combinatorial Optimization 45}, 1 (2023), 1--17.

\bibitem{harb2022faster}
{\sc Harb, E., Quanrud, K., and Chekuri, C.}
\newblock Faster and scalable algorithms for densest subgraph and
  decomposition.
\newblock In {\em Proceedings of the 36th Conference on Neural Information
  Processing Systems (NeurIPS)\/} (2022).
\newblock In press.

\bibitem{harb2023convergence}
{\sc Harb, E., Quanrud, K., and Chekuri, C.}
\newblock Convergence to lexicographically optimal base in a
  (contra)polymatroid and applications to densest subgraph and tree packing.
\newblock In {\em Proceedings of the 31st Annual European Symposium on
  Algorithms (ESA)\/} (2023), pp.~56:1--56:17.

\bibitem{hashemi2022firmcore}
{\sc Hashemi, F., Behrouz, A., and Lakshmanan, L.}
\newblock Firm{C}ore decomposition of multilayer networks.
\newblock In {\em Proceedings of the ACM Web Conference 2022 (TheWebConf)\/}
  (2022), pp.~1589--1600.

\bibitem{Hassin1997approximation}
{\sc Hassin, R., Rubinstein, S., and Tamir, A.}
\newblock Approximation algorithms for maximum dispersion.
\newblock {\em Operations Research Letters 21}, 3 (1997), 133--137.

\bibitem{Hazan2011how}
{\sc Hazan, E., and Krauthgamer, R.}
\newblock How hard is it to approximate the best {N}ash equilibrium?
\newblock {\em SIAM Journal on Computing 40}, 1 (2011), 79--91.

\bibitem{henzinger2022fine}
{\sc Henzinger, M., Paz, A., and Sricharan, A.}
\newblock Fine-grained complexity lower bounds for families of dynamic graphs.
\newblock In {\em Proceedings of the 30th Annual European Symposium on
  Algorithms (ESA)\/} (2022), pp.~65:1--65:14.

\bibitem{hochbaum2008pseudoflow}
{\sc Hochbaum, D.}
\newblock The pseudoflow algorithm: A new algorithm for the maximum-flow
  problem.
\newblock {\em Operations Research 56}, 4 (2008), 992--1009.

\bibitem{hooi2016fraudar}
{\sc Hooi, B., Song, H., Beutel, A., Shah, N., Shin, K., and Faloutsos, C.}
\newblock {FRAUDAR}: Bounding graph fraud in the face of camouflage.
\newblock In {\em Proceedings of the 22nd ACM SIGKDD International Conference
  on Knowledge Discovery and Data Mining (KDD)\/} (2016), pp.~895--904.

\bibitem{hu2005coherent}
{\sc Hu, H., Yan, X., Huang, Y., Han, J., and Zhou, X.~J.}
\newblock Mining coherent dense subgraphs across massive biological networks
  for functional discovery.
\newblock {\em Bioinformatics 21}, 1 (2005), i213--i221.

\bibitem{hu2017dynamicsub}
{\sc Hu, S., Wu, X., and Chan, T.}
\newblock Maintaining densest subsets efficiently in evolving hypergraphs.
\newblock In {\em Proceedings of the 2017 ACM on Conference on Information and
  Knowledge Management\/} (New York, NY, USA, 2017), CIKM '17, Association for
  Computing Machinery, p.~929–938.

\bibitem{Huang-Kahng95}
{\sc Huang, D., and Kahng, A.}
\newblock When clusters meet partitions: New density-based methods for circuit
  decomposition.
\newblock In {\em Proceedings of the 1995 IEEE European Design and Test
  Conference (EDTC)\/} (1995), pp.~60--64.

\bibitem{Huang+23}
{\sc Huang, Y., Gleich, D., and Veldt, N.}
\newblock Densest subhypergraph: Negative supermodular functions and strongly
  localized methods.
\newblock {\em arXiv preprint arXiv:2310.13792\/} (2023).

\bibitem{Impagliazzo2001Complexity}
{\sc Impagliazzo, R., and Paturi, R.}
\newblock On the complexity of $k$-{SAT}.
\newblock {\em Journal of Computer and System Sciences 62}, 2 (2001), 367--375.

\bibitem{iwata2001combinatorial}
{\sc Iwata, S., Fleischer, L., and Fujishige, S.}
\newblock A combinatorial strongly polynomial algorithm for minimizing
  submodular functions.
\newblock {\em Journal of the ACM 48}, 4 (2001), 761--777.

\bibitem{jeje1992introduction}
{\sc J{\'e}J{\'e}, J.}
\newblock {\em An Introduction to Parallel Algorithms}.
\newblock Addison-Wesley Publishing Company, 1992.

\bibitem{jethava2015relational}
{\sc Jethava, V., and Beerenwinkel, N.}
\newblock Finding dense subgraphs in relational graphs.
\newblock In {\em Proceedings of the European Conference on Machine Learning
  and Principles and Practice of Knowledge Discovery in Databases (ECML
  PKDD)\/} (2015), pp.~641--654.

\bibitem{cash_out_2022}
{\sc Ji, Y., Zhang, Z., Tang, X., Shen, J., Zhang, X., and Yang, G.}
\newblock Detecting cash-out users via dense subgraphs.
\newblock In {\em Proceedings of the 28th ACM SIGKDD International Conference
  on Knowledge Discovery and Data Mining (KDD)\/} (2022), pp.~687--697.

\bibitem{Jiang2022spade}
{\sc Jiang, J., Li, Y., He, B., Hooi, B., Chen, J., and Kang, J.}
\newblock Spade: {A} real-time fraud detection framework on evolving graphs.
\newblock {\em Proceedings of the VLDB Endowment 16}, 3 (2022), 461--469.

\bibitem{JinXRF09}
{\sc Jin, R., Xiang, Y., Ruan, N., and Fuhry, D.}
\newblock 3-{HOP}: A high-compression indexing scheme for reachability query.
\newblock In {\em Proceedings of the 2009 International Conference on
  Management of Data (SIGMOD)\/} (2009), pp.~813--826.

\bibitem{kamara2019computationally}
{\sc Kamara, S., and Moataz, T.}
\newblock Computationally volume-hiding structured encryption.
\newblock In {\em Proceedings of the 38th Annual International Conference on
  the Theory and Applications of Cryptographic Techniques (EUROCRYPT)\/}
  (2019), pp.~183--213.

\bibitem{Kannan}
{\sc Kannan, R., and Vinay, V.}
\newblock Analyzing the structure of large graphs.
\newblock Tech. rep., Institut f{\"u}r {\"O}konometrie und Operations Research,
  Universit{\"a}t Bonn, 1999.

\bibitem{kawase2019crowd}
{\sc Kawase, Y., Kuroki, Y., and Miyauchi, A.}
\newblock Graph mining meets crowdsourcing: Extracting experts for answer
  aggregation.
\newblock In {\em Proceedings of the 28th International Joint Conference on
  Artificial Intelligence (IJCAI)\/} (2019), pp.~1272--1279.

\bibitem{Kawase-Miyauchi18}
{\sc Kawase, Y., and Miyauchi, A.}
\newblock The densest subgraph problem with a convex/concave size function.
\newblock {\em Algorithmica 80}, 12 (2018), 3461--3480.

\bibitem{kawase2023stochastic}
{\sc Kawase, Y., Miyauchi, A., and Sumita, H.}
\newblock Stochastic solutions for dense subgraph discovery in multilayer
  networks.
\newblock In {\em Proceedings of the 16th ACM International Conference on Web
  Search and Data Mining (WSDM)\/} (2023), pp.~886--894.

\bibitem{khanna2020planted}
{\sc Khanna, Y., and Louis, A.}
\newblock Planted models for the densest $k$-subgraph problem.
\newblock In {\em Proceedings of the 40th {IARCS} Annual Conference on
  Foundations of Software Technology and Theoretical Computer Science
  (FSTTCS)\/} (2020), pp.~27:1--27:18.

\bibitem{Khot06}
{\sc Khot, S.}
\newblock Ruling out {PTAS} for graph min-bisection, dense $k$-subgraph, and
  bipartite clique.
\newblock {\em SIAM Journal on Computing 36}, 4 (2006), 1025--1071.

\bibitem{Khuller2009Dense}
{\sc Khuller, S., and Saha, B.}
\newblock On finding dense subgraphs.
\newblock In {\em Proceedings of the 36th International Colloquium on Automata,
  Languages and Programming (ICALP)\/} (2009), pp.~597--608.

\bibitem{kim2020densely}
{\sc Kim, J., Guo, T., Feng, K., Cong, G., Khan, A., and Choudhury, F.}
\newblock Densely connected user community and location cluster search in
  location-based social networks.
\newblock In {\em Proceedings of the 2020 International Conference on
  Management of Data (SIGMOD)\/} (2020), pp.~2199--2209.

\bibitem{komusiewicz2020fixcon}
{\sc Komusiewicz, C., and Sommer, F.}
\newblock Fix{C}on: A generic solver for fixed-cardinality subgraph problems.
\newblock In {\em Proceedings of the 2020 SIAM Symposium on Algorithm
  Engineering and Experiments (ALENEX)\/} (2020), pp.~12--26.

\bibitem{komusiewicz2015algorithmic}
{\sc Komusiewicz, C., and Sorge, M.}
\newblock An algorithmic framework for fixed-cardinality optimization in sparse
  graphs applied to dense subgraph problems.
\newblock {\em Discrete Applied Mathematics 193\/} (2015), 145--161.

\bibitem{konar2021exploring}
{\sc Konar, A., and Sidiropoulos, N.}
\newblock Exploring the subgraph density-size trade-off via the {L}ova\'sz
  extension.
\newblock In {\em Proceedings of the 14th ACM International Conference on Web
  Search and Data Mining (WSDM)\/} (2021), pp.~743--751.

\bibitem{konar2022triangle}
{\sc Konar, A., and Sidiropoulos, N.}
\newblock The triangle-densest-$k$-subgraph problem: Hardness, {L}ov{\'{a}}sz
  extension, and application to document summarization.
\newblock In {\em Proceedings of the 36th {AAAI} Conference on Artificial
  Intelligence (AAAI)\/} (2022), pp.~4075--4082.

\bibitem{Kortsarz-Peleg94}
{\sc Kortsarz, G., and Peleg, D.}
\newblock Generating sparse 2-spanners.
\newblock {\em Journal of Algorithms 17}, 2 (1994), 222--236.

\bibitem{koutrouli2020guide}
{\sc Koutrouli, M., Karatzas, E., Paez-Espino, D., and Pavlopoulos, G.}
\newblock A guide to conquer the biological network era using graph theory.
\newblock {\em Frontiers in Bioengineering and Biotechnology 8}, 34 (2020),
  1--34.

\bibitem{Krislock2016computational}
{\sc Krislock, N., Malick, J., and Roupin, F.}
\newblock Computational results of a semidefinite branch-and-bound algorithm
  for $k$-cluster.
\newblock {\em Computers \& Operations Research 66\/} (2016), 153--159.

\bibitem{Kuroki+20}
{\sc Kuroki, Y., Miyauchi, A., Honda, J., and Sugiyama, M.}
\newblock Online dense subgraph discovery via blurred-graph feedback.
\newblock In {\em Proceedings of the 37th International Conference on Machine
  Learning (ICML)\/} (2020), pp.~5522--5532.

\bibitem{Lampis12}
{\sc Lampis, M.}
\newblock Algorithmic meta-theorems for restrictions of treewidth.
\newblock {\em Algorithmica 64\/} (2012), 19--37.

\bibitem{lanciano2020contrast}
{\sc Lanciano, T., Bonchi, F., and Gionis, A.}
\newblock Explainable classification of brain networks via contrast subgraphs.
\newblock In {\em Proceedings of the 26th ACM SIGKDD International Conference
  on Knowledge Discovery and Data Mining (KDD)\/} (2020), pp.~3308--3318.

\bibitem{lanciano2022biocontrast}
{\sc Lanciano, T., Savino, A., Porcu, F., Cittaro, D., Bonchi, F., and Provero,
  P.}
\newblock {Contrast subgraphs allow comparing homogeneous and heterogeneous
  networks derived from omics data}.
\newblock {\em GigaScience 12\/} (2023), 1--10.

\bibitem{aggarwal}
{\sc Lee, V., Ruan, N., Jin, R., and Aggarwal, C.}
\newblock {\em A Survey of Algorithms for Dense Subgraph Discovery}.
\newblock Springer US, Boston, MA, 2010, pp.~303--336.

\bibitem{lee2015faster}
{\sc Lee, Y., Sidford, A., and Wong, S.}
\newblock A faster cutting plane method and its implications for combinatorial
  and convex optimization.
\newblock In {\em Proceedings of the 2015 IEEE Annual Symposium on Foundations
  of Computer Science (FOCS)\/} (2015), pp.~1049--1065.

\bibitem{Li2022densestbio}
{\sc Li, R., Lee, J., Yang, J., and Akutsu, T.}
\newblock Densest subgraph-based methods for protein-protein interaction hot
  spot prediction.
\newblock {\em {BMC} Bioinformatics 23}, 451 (2022).

\bibitem{li2023approximate}
{\sc Li, R., and Quanrud, K.}
\newblock Approximate fully dynamic directed densest subgraph, 2023.

\bibitem{li2020money}
{\sc Li, X., Liu, S., Li, Z., Han, X., Shi, C., Hooi, B., Huang, H., and Cheng,
  X.}
\newblock Flow{S}cope: Spotting money laundering based on graphs.
\newblock In {\em Proceedings of the 34th AAAI Conference on Artificial
  Intelligence (AAAI)\/} (2020), pp.~4731--4738.

\bibitem{Liazi2008constant}
{\sc Liazi, M., Milis, I., and Zissimopoulos, V.}
\newblock A constant approximation algorithm for the densest $k$-subgraph
  problem on chordal graphs.
\newblock {\em Information Processing Letters 108}, 1 (2008), 29--32.

\bibitem{liu2022stochastic}
{\sc Liu, X., Ge, T., and Wu, Y.}
\newblock A stochastic approach to finding densest temporal subgraphs in
  dynamic graphs.
\newblock {\em IEEE Transactions on Knowledge and Data Engineering 34}, 7
  (2022), 3082--3094.

\bibitem{Luo2023}
{\sc Luo, W., Tang, Z., Fang, Y., Ma, C., and Zhou, X.}
\newblock Scalable algorithms for densest subgraph discovery.
\newblock In {\em Proceedings of the 39th IEEE International Conference on Data
  Engineering (ICDE)\/} (2023), pp.~285--298.

\bibitem{lusk2021clipper}
{\sc Lusk, P., Fathian, K., and How, J.}
\newblock {CLIPPER}: A graph-theoretic framework for robust data association.
\newblock In {\em Proceedings of the 2021 IEEE International Conference on
  Robotics and Automation (ICRA)\/} (2021), pp.~13828--13834.

\bibitem{saha2029fix}
{\sc Ma, C.}
\newblock Supplementary note for ``{E}fficient algorithms for densest subgraph
  discovery on large directed graphs''.

\bibitem{MaCLH22}
{\sc Ma, C., Cheng, R., Lakshmanan, L., and Han, X.}
\newblock Finding locally densest subgraphs: {A} convex programming approach.
\newblock {\em Proceedings of the {VLDB} Endowment 15}, 11 (2022), 2719--2732.

\bibitem{ma2022convex}
{\sc Ma, C., Fang, Y., Cheng, R., Lakshmanan, L., and Han, X.}
\newblock A convex-programming approach for efficient directed densest subgraph
  discovery.
\newblock In {\em Proceedings of the 2022 International Conference on
  Management of Data\/} (New York, NY, USA, 2022), SIGMOD '22, Association for
  Computing Machinery, p.~845–859.

\bibitem{ma2020efficient}
{\sc Ma, C., Fang, Y., Cheng, R., Lakshmanan, L., Zhang, W., and Lin, X.}
\newblock Efficient algorithms for densest subgraph discovery on large directed
  graphs.
\newblock In {\em Proceedings of the 2020 International Conference on
  Management of Data (SIGMOD)\/} (2020), pp.~1051--1066.

\bibitem{ma2021efficient}
{\sc Ma, C., Fang, Y., Cheng, R., Lakshmanan, L., Zhang, W., and Lin, X.}
\newblock Efficient directed densest subgraph discovery.
\newblock {\em SIGMOD Rec. 50}, 1 (jun 2021), 33–40.

\bibitem{ma2021directed}
{\sc Ma, C., Fang, Y., Cheng, R., Lakshmanan, L., Zhang, W., and Lin, X.}
\newblock On directed densest subgraph discovery.
\newblock {\em ACM Trans. Database Syst. 46}, 4 (nov 2021).

\bibitem{ma2020temporal}
{\sc Ma, S., Hu, R., Wang, L., Lin, X., and Huai, J.}
\newblock An efficient approach to finding dense temporal subgraphs.
\newblock {\em IEEE Transactions on Knowledge and Data Engineering 32}, 4
  (2020), 645--658.

\bibitem{Mader72}
{\sc Mader, W.}
\newblock Existenzn-fach zusammenh{\"a}ngender teilgraphen in graphen
  gen{\"u}gend gro{\ss}er kantendichte.
\newblock {\em Abhandlungen aus dem Mathematischen Seminar der Universit{\"a}t
  Hamburg 37}, 1 (1972), 86--97.

\bibitem{majbouri2020prediction}
{\sc Majbouri~Yazdi, K., Majbouri~Yazdi, A., Khodayi, S., Hou, J., Zhou, W.,
  Saedy, S., and Rostami, M.}
\newblock Prediction optimization of diffusion paths in social networks using
  integration of ant colony and densest subgraph algorithms.
\newblock {\em Journal of High Speed Networks 26}, 2 (2020), 141--153.

\bibitem{Malick2012solving}
{\sc Malick, J., and Roupin, F.}
\newblock Solving $k$-cluster problems to optimality with semidefinite
  programming.
\newblock {\em Mathematical programming 136}, 2 (2012), 279--300.

\bibitem{malliaros2020core}
{\sc Malliaros, F., Giatsidis, C., Papadopoulos, A., and Vazirgiannis, M.}
\newblock The core decomposition of networks: Theory, algorithms and
  applications.
\newblock {\em The VLDB Journal 29\/} (2020), 61--92.

\bibitem{malliaros2013clustering}
{\sc Malliaros, F., and Vazirgiannis, M.}
\newblock Clustering and community detection in directed networks: A survey.
\newblock {\em Physics Reports 533}, 4 (2013), 95--142.

\bibitem{Manurangsi2017Almost}
{\sc Manurangsi, P.}
\newblock Almost-polynomial ratio {ETH}-hardness of approximating densest
  $k$-subgraph.
\newblock In {\em Proceedings of the 49th Annual ACM SIGACT Symposium on Theory
  of Computing (STOC)\/} (2017), pp.~954--961.

\bibitem{martini2022network}
{\sc Martini, L., Fazzone, A., Gentili, M., Becchetti, L., and Hobbs, B.}
\newblock Network based approach to gene prioritization at genome-wide
  association study loci.
\newblock {\em arXiv preprint arXiv:2210.16292\/} (2022).

\bibitem{mcgregor2015dynamic}
{\sc McGregor, A., Tench, D., Vorotnikova, S., and Vu, H.}
\newblock Densest subgraph in dynamic graph streams.
\newblock In {\em Proceedings of the 40th International Symposium on
  Mathematical Foundations of Computer Science (MFCS)\/} (2015), pp.~472--482.

\bibitem{mitzenmacher2015scalable}
{\sc Mitzenmacher, M., Pachocki, J., Peng, R., Tsourakakis, C., and Xu, S.}
\newblock Scalable large near-clique detection in large-scale networks via
  sampling.
\newblock In {\em Proceedings of the 21th ACM SIGKDD International Conference
  on Knowledge Discovery and Data Mining (KDD)\/} (2015), pp.~815--824.

\bibitem{Miyauchi+23}
{\sc Miyauchi, A., Chen, T., Sotiropoulos, K., and Tsourakakis, C.}
\newblock Densest diverse subgraphs: How to plan a successful cocktail party
  with diversity.
\newblock In {\em Proceedings of the 29th ACM SIGKDD International Conference
  on Knowledge Discovery and Data Mining (KDD)\/} (2023), p.~1710–1721.

\bibitem{Miyauchi+15}
{\sc Miyauchi, A., Iwamasa, Y., Fukunaga, T., and Kakimura, N.}
\newblock Threshold influence model for allocating advertising budgets.
\newblock In {\em Proceedings of the 32nd International Conference on Machine
  Learning (ICML)\/} (2015), pp.~1395--1404.

\bibitem{Miyauchi-Kakimura18}
{\sc Miyauchi, A., and Kakimura, N.}
\newblock Finding a dense subgraph with sparse cut.
\newblock In {\em Proceedings of the 27th {ACM} International Conference on
  Information and Knowledge Management (CIKM)\/} (2018), pp.~547--556.

\bibitem{miyauchi2018robust}
{\sc Miyauchi, A., and Takeda, A.}
\newblock Robust densest subgraph discovery.
\newblock In {\em Proceedings of the 18th IEEE International Conference on Data
  Mining (ICDM)\/} (2018), pp.~1188--1193.

\bibitem{Mizutani2022parameterized}
{\sc Mizutani, Y., and Sullivan, B.}
\newblock Parameterized complexity of maximum happy set and densest
  $k$-subgraph.
\newblock In {\em Proceedings of the 17th International Symposium on
  Parameterized and Exact Computation (IPEC)\/} (2022), pp.~23:1--23:18.

\bibitem{Moro2014entity}
{\sc Moro, A., Raganato, A., and Navigli, R.}
\newblock Entity linking meets word sense disambiguation: {A} unified approach.
\newblock {\em Transactions of the Association for Computational Linguistics
  2\/} (2014), 231--244.

\bibitem{Nagano+11}
{\sc Nagano, K., Kawahara, Y., and Aihara, K.}
\newblock Size-constrained submodular minimization through minimum norm base.
\newblock In {\em Proceedings of the 28th International Conference on Machine
  Learning (ICML)\/} (2011), pp.~977--984.

\bibitem{nascimento2011spectral}
{\sc Nascimento, M., and {de Carvalho}, A.}
\newblock Spectral methods for graph clustering -- {A} survey.
\newblock {\em European Journal of Operational Research 211}, 2 (2011),
  221--231.

\bibitem{orlin2009faster}
{\sc Orlin, J.}
\newblock A faster strongly polynomial time algorithm for submodular function
  minimization.
\newblock {\em Mathematical Programming 118}, 2 (2009), 237--251.

\bibitem{Papailiopoulos2014finding}
{\sc Papailiopoulos, D., Mitliagkas, I., Dimakis, A., and Caramanis, C.}
\newblock Finding dense subgraphs via low-rank bilinear optimization.
\newblock In {\em Proceedings of the 31st International Conference on Machine
  Learning (ICML)\/} (2014), pp.~1890--1898.

\bibitem{peleg2000distributed}
{\sc Peleg, D.}
\newblock {\em Distributed Computing: A Locality-Sensitive Approach}.
\newblock Society for Industrial and Applied Mathematics, 2000.

\bibitem{Picard82}
{\sc Picard, J., and Queyranne, M.}
\newblock A network flow solution to some nonlinear 0-1 programming problems,
  with applications to graph theory.
\newblock {\em Networks 12}, 2 (1982), 141--159.

\bibitem{PotamiasBGK10}
{\sc Potamias, M., Bonchi, F., Gionis, A., and Kollios, G.}
\newblock k-nearest neighbors in uncertain graphs.
\newblock {\em Proceedings of the {VLDB} Endowment 3}, 1 (2010), 997--1008.

\bibitem{qin2023periodic}
{\sc Qin, H., Li, R., Yuan, Y., Dai, Y., and Wang, G.}
\newblock Densest periodic subgraph mining on large temporal graphs.
\newblock {\em IEEE Transactions on Knowledge and Data Engineering 35}, 11
  (2023), 11259--11273.

\bibitem{raghavendra2010expansion}
{\sc Raghavendra, P., and Steurer, D.}
\newblock Graph expansion and the unique games conjecture.
\newblock In {\em Proceedings of the 42nd Annual ACM Symposium on Theory of
  Computing (STOC)\/} (2010), pp.~755--764.

\bibitem{rangapuram2013team}
{\sc Rangapuram, S., B\"{u}hler, T., and Hein, M.}
\newblock Towards realistic team formation in social networks based on densest
  subgraphs.
\newblock In {\em Proceedings of the 22nd International Conference on World
  Wide Web (WWW)\/} (2013), pp.~1077--1088.

\bibitem{Ravi1994heuristic}
{\sc Ravi, S., Rosenkrantz, D., and Tayi, G.}
\newblock Heuristic and special case algorithms for dispersion problems.
\newblock {\em Operations Research 42}, 2 (1994), 299--310.

\bibitem{reinthal2016finding}
{\sc Reinthal, A., T\"ornqvist, A., Andersson, A., Norlander, E.,
  St{\aa}lhammar, P., and Norlin, S.}
\newblock Finding the densest common subgraph with linear programming.
\newblock {B.S.} thesis, Chalmers University of Technology \& University of
  Gothenburg, 2016.

\bibitem{ren2021ensemfdet}
{\sc Ren, Y., Zhu, H., Zhang, J., Dai, P., and Bo, L.}
\newblock Ensem{FD}et: {A}n ensemble approach to fraud detection based on
  bipartite graph.
\newblock In {\em Proceedings of the 37th IEEE International Conference on Data
  Engineering (ICDE)\/} (2021), pp.~2039--2044.

\bibitem{rozenshtein2014event}
{\sc Rozenshtein, P., Anagnostopoulos, A., Gionis, A., and Tatti, N.}
\newblock Event detection in activity networks.
\newblock In {\em Proceedings of the 20th ACM SIGKDD International Conference
  on Knowledge Discovery and Data Mining (KDD)\/} (2014), pp.~1176--1185.

\bibitem{Rozenshtein2019segmentation}
{\sc Rozenshtein, P., Bonchi, F., Gionis, A., Sozio, M., and Tatti, N.}
\newblock Finding events in temporal networks: {S}egmentation meets densest
  subgraph discovery.
\newblock {\em Knowledge and Information Systems 62}, 4 (2019), 1611--1639.

\bibitem{rozenshtein2020mining}
{\sc Rozenshtein, P., Preti, G., Gionis, A., and Velegrakis, Y.}
\newblock Mining dense subgraphs with similar edges.
\newblock In {\em Proceedings of the European Conference on Machine Learning
  and Principles and Practice of Knowledge Discovery in Databases (ECML
  PKDD)\/} (2020), pp.~20--36.

\bibitem{rozenshtein2017dynamic}
{\sc Rozenshtein, P., Tatti, N., and Gionis, A.}
\newblock Finding dynamic dense subgraphs.
\newblock {\em ACM Transactions on Knowledge Discovery from Data 11}, 3 (2017),
  1--30.

\bibitem{Saha2022most}
{\sc Saha, A., Ke, X., Khan, A., and Long, C.}
\newblock Most probable densest subgraphs.
\newblock In {\em Proceedings of the 39th IEEE International Conference on Data
  Engineering (ICDE)\/} (2023), pp.~1447--1460.

\bibitem{SahaHKRZ10}
{\sc Saha, B., Hoch, A., Khuller, S., Raschid, L., and Zhang, X.}
\newblock Dense subgraphs with restrictions and applications to gene annotation
  graphs.
\newblock In {\em Proceedings of the 14th Annual International Conference on
  Research in Computational Molecular Biology (RECOMB)\/} (2010), pp.~456--472.

\bibitem{Sariyuce2015nucleus}
{\sc Sar\i{}y\"{u}ce, A., Seshadhri, C., Pinar, A., and \c{C}ataly\"{u}rek, U.}
\newblock Finding the hierarchy of dense subgraphs using nucleus
  decompositions.
\newblock In {\em Proceedings of the 24th International Conference on World
  Wide Web (WWW)\/} (2015), pp.~927--937.

\bibitem{Sariyuce2017nucleus}
{\sc Sar\i{}y\"{u}ce, A., Seshadhri, C., Pinar, A., and \c{C}ataly\"{u}rek, U.}
\newblock Nucleus decompositions for identifying hierarchy of dense subgraphs.
\newblock {\em ACM Transactions on the Web 11}, 3 (2017), 16:1--16:27.

\bibitem{sariyuce2018peeling}
{\sc Sar\i{}y\"{u}ce, A.~E., and Pinar, A.}
\newblock Peeling bipartite networks for dense subgraph discovery.
\newblock In {\em Proceedings of the 11th ACM International Conference on Web
  Search and Data Mining (WSDM)\/} (2018), p.~504–512.

\bibitem{sawlani2020dynamic}
{\sc Sawlani, S., and Wang, J.}
\newblock Near-optimal fully dynamic densest subgraph.
\newblock In {\em Proceedings of the 52nd Annual ACM SIGACT Symposium on Theory
  of Computing (STOC)\/} (2020), pp.~181--193.

\bibitem{schaeffer2007graph}
{\sc Schaeffer, S.}
\newblock Graph clustering.
\newblock {\em Computer Science Review 1}, 1 (2007), 27--64.

\bibitem{Schrijver00}
{\sc Schrijver, A.}
\newblock A combinatorial algorithm minimizing submodular functions in strongly
  polynomial time.
\newblock {\em Journal of Combinatorial Theory, Series B 80}, 2 (2000),
  346--355.

\bibitem{semertzidis2019finding}
{\sc Semertzidis, K., Pitoura, E., Terzi, E., and Tsaparas, P.}
\newblock Finding lasting dense subgraphs.
\newblock {\em Data Mining and Knowledge Discovery 33}, 5 (2019), 1417--1445.

\bibitem{Shahinpour+13}
{\sc Shahinpour, S., and Butenko, S.}
\newblock Distance-based clique relaxations in networks: $s$-clique and
  $s$-club.
\newblock In {\em Models, Algorithms, and Technologies for Network Analysis\/}
  (2013), B.~Goldengorin, V.~Kalyagin, and P.~Pardalos, Eds., pp.~149--174.

\bibitem{shi2021parallel}
{\sc Shi, J., Dhulipala, L., and Shun, J.}
\newblock Parallel clique counting and peeling algorithms.
\newblock In {\em Proceedings of the 2021 SIAM Conference on Applied and
  Computational Discrete Algorithms (ACDA)\/} (2021), pp.~135--146.

\bibitem{Shin+16}
{\sc Shin, K., Hooi, B., and Faloutsos, C.}
\newblock M-zoom: Fast dense-block detection in tensors with quality
  guarantees.
\newblock In {\em Proceedings of the European Conference on Machine Learning
  and Principles and Practice of Knowledge Discovery in Databases (ECML
  PKDD)\/} (2016), pp.~264--280.

\bibitem{Shin+17}
{\sc Shin, K., Hooi, B., Kim, J., and Faloutsos, C.}
\newblock D-{C}ube: Dense-block detection in terabyte-scale tensors.
\newblock In {\em Proceedings of the 10th ACM International Conference on Web
  Search and Data Mining (WSDM)\/} (2017), p.~681–689.

\bibitem{Shin+17_2}
{\sc Shin, K., Hooi, B., Kim, J., and Faloutsos, C.}
\newblock Densealert: Incremental dense-subtensor detection in tensor streams.
\newblock In {\em Proceedings of the 23rd ACM SIGKDD International Conference
  on Knowledge Discovery and Data Mining (KDD)\/} (2017), p.~1057–1066.

\bibitem{sotirov2020solving}
{\sc Sotirov, R.}
\newblock On solving the densest $k$-subgraph problem on large graphs.
\newblock {\em Optimization Methods and Software 35}, 6 (2020), 1160--1178.

\bibitem{Sozio}
{\sc Sozio, M., and Gionis, A.}
\newblock The community-search problem and how to plan a successful cocktail
  party.
\newblock In {\em Proceedings of the 16th ACM SIGKDD International Conference
  on Knowledge Discovery and Data Mining (KDD)\/} (2010), pp.~939--948.

\bibitem{su2020distributed}
{\sc Su, H., and Vu, H.}
\newblock Distributed dense subgraph detection and low outdegree orientation.
\newblock In {\em Proceedings of the 34th International Symposium on
  Distributed Computing (DISC)\/} (2020), pp.~15:1--15:18.

\bibitem{sukeda2022study}
{\sc Sukeda, I., Miyauchi, A., and Takeda, A.}
\newblock A study on modularity density maximization: Column generation
  acceleration and computational complexity analysis.
\newblock {\em European Journal of Operational Research 309}, 2 (2023),
  516--528.

\bibitem{sun2020kclist++}
{\sc Sun, B., Danisch, M., Chan, T.-H.~H., and Sozio, M.}
\newblock Kclist++: A simple algorithm for finding k-clique densest subgraphs
  in large graphs.
\newblock {\em Proceedings of the VLDB Endowment 13}, 10 (2020).

\bibitem{sun2021efficient}
{\sc Sun, Z., Huang, X., Xu, J., and Bonchi, F.}
\newblock Efficient probabilistic truss indexing on uncertain graphs.
\newblock In {\em Proceedings of the ACM Web Conference 2021 (TheWebConf)\/}
  (2021), pp.~354--366.

\bibitem{sundaram2021efficient}
{\sc Sundaram, R., Gupta, H., and Ramakrishnan, C.}
\newblock {Efficient Distribution of Quantum Circuits}.
\newblock In {\em Proceedings of the 35th International Symposium on
  Distributed Computing (DISC)\/} (2021), pp.~41:1--41:20.

\bibitem{suri2011Counting}
{\sc Suri, S., and Vassilvitskii, S.}
\newblock Counting triangles and the curse of the last reducer.
\newblock In {\em Proceedings of the 20th International Conference on World
  Wide Web (WWW)\/} (2011), pp.~607--614.

\bibitem{tan2020scaling}
{\sc Tan, X., Zhou, M., and Fitzgerald, B.}
\newblock Scaling open source communities: An empirical study of the {L}inux
  kernel.
\newblock In {\em Proceedings of the 42nd International Conference on Software
  Engineering (ICSE)\/} (2020), pp.~1222--1234.

\bibitem{Tsourakakis15}
{\sc Tsourakakis, C.}
\newblock The k-clique densest subgraph problem.
\newblock In {\em Proceedings of the 24th International Conference on World
  Wide Web (WWW)\/} (2015), pp.~1122--1132.

\bibitem{tsourakakis2013denser}
{\sc Tsourakakis, C., Bonchi, F., Gionis, A., Gullo, F., and Tsiarli, M.}
\newblock Denser than the densest subgraph: Extracting optimal quasi-cliques
  with quality guarantees.
\newblock In {\em Proceedings of the 19th ACM SIGKDD International Conference
  on Knowledge Discovery and Data Mining (KDD)\/} (2013), pp.~104--112.

\bibitem{tsourakakis2019novel}
{\sc Tsourakakis, C., Chen, T., Kakimura, N., and Pachocki, J.}
\newblock Novel dense subgraph discovery primitives: Risk aversion and
  exclusion queries.
\newblock In {\em Proceedings of the European Conference on Machine Learning
  and Principles and Practice of Knowledge Discovery in Databases (ECML
  PKDD)\/} (2019), pp.~378--394.

\bibitem{veldt2021meandensest}
{\sc Veldt, N., Benson, A., and Kleinberg, J.}
\newblock The generalized mean densest subgraph problem.
\newblock In {\em Proceedings of the 27th ACM SIGKDD International Conference
  on Knowledge Discovery and Data Mining (KDD)\/} (2021), pp.~1604--1614.

\bibitem{Veldt+22}
{\sc Veldt, N., Benson, A., and Kleinberg, J.}
\newblock Hypergraph cuts with general splitting functions.
\newblock {\em SIAM Review 64}, 3 (2022), 650--685.

\bibitem{WU2015693}
{\sc Wu, Q., and Hao, J.}
\newblock A review on algorithms for maximum clique problems.
\newblock {\em European Journal of Operational Research 242}, 3 (2015),
  693--709.

\bibitem{wu2021extracting}
{\sc Wu, Q., Huang, X., Culbreth, A., Waltz, J., Hong, L., and Chen, S.}
\newblock Extracting brain disease-related connectome subgraphs by adaptive
  dense subgraph discovery.
\newblock {\em Biometrics 78}, 4 (2021), 1566--1578.

\bibitem{wu2015dual}
{\sc Wu, Y., Jin, R., Zhu, X., and Zhang, X.}
\newblock Finding dense and connected subgraphs in dual networks.
\newblock In {\em Proceedings of the 31st {IEEE} International Conference on
  Data Engineering (ICDE)\/} (2015), pp.~915--926.

\bibitem{Wu+16}
{\sc Wu, Y., Zhu, X., Li, L., Fan, W., Jin, R., and Zhang, X.}
\newblock Mining dual networks: Models, algorithms, and applications.
\newblock {\em {ACM} Transactions on Knowledge Discovery from Data 10}, 4
  (2016), 40:1--40:37.

\bibitem{xie2022orion}
{\sc Xie, T., Zhang, Y., and Song, D.}
\newblock Orion: Zero knowledge proof with linear prover time.
\newblock In {\em Proceedings of the 42nd Annual International Cryptology
  Conference (CRYPT)\/} (2022), pp.~299--328.

\bibitem{yan2021anomaly}
{\sc Yan, H., Zhang, Q., Mao, D., Lu, Z., Guo, D., and Chen, S.}
\newblock Anomaly detection of network streams via dense subgraph discovery.
\newblock In {\em Proceedings of the 30th International Conference on Computer
  Communications and Networks (ICCCN)\/} (2021), pp.~1--9.

\bibitem{yanagisawa2018discounted}
{\sc Yanagisawa, H., and Hara, S.}
\newblock Discounted average degree density metric and new algorithms for the
  densest subgraph problem.
\newblock {\em Networks 71}, 1 (2018), 3--15.

\bibitem{yang2018mining}
{\sc Yang, Y., Chu, L., Zhang, Y., Wang, Z., Pei, J., and Chen, E.}
\newblock Mining density contrast subgraphs.
\newblock In {\em Proceedings of the 34th IEEE International Conference on Data
  Engineering (ICDE)\/} (2018), pp.~221--232.

\bibitem{Ye2003approximation}
{\sc Ye, Y., and Zhang, J.}
\newblock Approximation of dense-$n/2$-subgraph and the complement of
  min-bisection.
\newblock {\em Journal of Global Optimization 25}, 1 (2003), 55--73.

\bibitem{yikun2019no}
{\sc Yikun, B., Xin, L., Ling, H., Yitao, D., Xue, L., and Wei, X.}
\newblock No place to hide: Catching fraudulent entities in tensors.
\newblock In {\em Proceedings of the ACM Web Conference 2019 (TheWebConf)\/}
  (2019), pp.~83--93.

\bibitem{zhang2021approximating}
{\sc Zhang, P., and Liu, Z.}
\newblock Approximating max $k$-uncut via {LP}-rounding plus greed, with
  applications to densest $k$-subgraph.
\newblock {\em Theoretical Computer Science 849}, 6 (2021), 173--183.

\bibitem{zhou2022extracting}
{\sc Zhou, Y., Hu, S., and Sheng, Z.}
\newblock Extracting densest sub-hypergraph with convex edge-weight functions.
\newblock In {\em Proceedings of the 17th Annual Conference on Theory and
  Applications of Models of Computation (TAMC)\/} (2022), pp.~305--321.

\bibitem{Zhu2022}
{\sc Zhu, C., Lin, L., Yuan, P., and Jin, H.}
\newblock Discovering cohesive temporal subgraphs with temporal density aware
  exploration.
\newblock {\em Journal of Computer Science and Technology 37}, 5 (2022),
  1068--1085.

\bibitem{zou2013polynomial}
{\sc Zou, Z.}
\newblock Polynomial-time algorithm for finding densest subgraphs in uncertain
  graphs.
\newblock In {\em Proceedings of the 11th Workshop on Mining and Learning with
  Graphs (MLG)\/} (2013).
\newblock No page numbers.

\end{thebibliography}
\end{document}